\newcommand{\Z}{\mathbb{Z}}
\newcommand{\R}{\mathbb{R}}
\newcommand{\C}{\mathbb{C}}
\newcommand{\tr}{\mathrm{tr}}
\newtheorem{thm}{Theorem}
\newtheorem{lem}[thm]{Lemma}
\newtheorem{prop}[thm]{Proposition}
\newtheorem{cor}[thm]{Corollary}
\newcommand\restr[2]{{
  \left.\kern-\nulldelimiterspace 
  #1 
  \vphantom{\big|} 
  \right|_{#2} 
  }}
\def\SO{\mathrm{SO}}
\def\Spin{\mathrm{Spin}}
\def\GL{\mathrm{GL}}
\def\SL{\mathrm{SL}}
\begin{document}

\title{Rank-$N$ Dimer Models on Surfaces}
\author{Sri Tata$^1$}
\affiliation{$^1$Yale University, Department of Mathematics}

\begin{abstract}
The web trace theorem of Douglas, Kenyon, Shi~\cite{dksWebs2022} expands the twisted Kasteleyn determinant in terms of traces of webs.
We generalize this theorem to higher genus surfaces and expand the twisted Kasteleyn matrices corresponding to spin structures on the surface, analogously to the rank-1 case of~\cite{cimasoniReshetikhin2007spin}. In the process of the proof, we give an alternate geometric derivation of the planar web trace theorem, relying on the spin geometry of embedded loops and a `racetrack construction' used to immerse loops in the blowup graph on the surface.
\end{abstract}

\maketitle

\twocolumngrid
\tableofcontents
\onecolumngrid

\section{Introduction}

In~\cite{dksWebs2022}, Douglas, Kenyon, Shi recently studied the planar $N$-fold dimer model coupled to $\SL_N$ graph connections. In particular, they considered twisting the Kasteleyn matrix $K$ of the graph by a graph connection $\Phi$ to produce a matrix $K(\Phi)$. Their main result is an expansion of the Kasteleyn determinant in terms of so-called `traces of multiwebs', generalizing an earlier $\SL_2$ result of Kenyon~\cite{kenyon2014doubledimer}. Using a combinatorial analysis of the Kasteleyn signs, their result reads (given a so-called `positive ciliation')
\begin{equation}
    \det(K(\Phi)) = \pm \sum_{\mathrm{multiweb}} \tr(\mathrm{multiweb},\Phi)
\end{equation}
for a global overall $\pm$ sign. 

In this paper, we give an alternate geometric derivation of this result using a mapping of Kastleyn theory to discrete spin structures on surfaces, following~\cite{cimasoniReshetikhin2007spin}. This allows us to immediately generalize the result of~\cite{dksWebs2022} to surfaces. In this setting, a geometric spin structure $\eta$ depends on the combinatorial data of a choice of Kasteleyn signs and a reference matching on the surface and gives a twisted Kasteleyn matrix $K_\eta(\Phi)$. Given $\eta$, we will make a natural definition of a web trace $\tr_\eta$ with respect to $\eta$ and show that
\begin{equation}
    \det(K_\eta(\Phi)) = \pm \sum_{\mathrm{multiweb}} \tr_\eta(\mathrm{multiweb},\Phi).
\end{equation}
Moreover, we show that on a genus $g$ surface
\begin{equation}
    \sum_{\mathrm{multiweb}} \tr(\mathrm{multiweb},\Phi)
    =
    \pm \frac{1}{2^g} \sum_{\substack{\eta \, \text{spin} \\ \text{structure}}}
    \mathrm{Arf}(\eta) \det(K_\eta(\Phi)) 
\end{equation}
where $\mathrm{Arf}(\eta) \in \{\pm 1\}$ is the so-called `Arf Invariant' of the spin structure. The above formula is in complete analogy to the formula~\cite{cimasoniReshetikhin2007spin} that given a bipartite graph on a surface with Kasteleyn matrices $K_\eta$ corresponding to spin structures $\eta$ that
\begin{equation}
    \#\{\text{dimer covers of graph}\}
    = 
    \pm \frac{1}{2^g} \sum_{\substack{\eta \, \text{spin} \\ \text{structure}}}
    \mathrm{Arf}(\eta) \det(K_\eta). 
\end{equation}
In turn, this formula is in complete analogy to the `higher-genus bosonization' formulas~\cite{alvarezGaume1987bosonHigherGenus} of conformal field theory which gives the same mapping between partition functions on higher-genus surfaces of of the dual theories, the Dirac fermion and the free boson at radius 1.

In this paper, we deal with much of the permutation sign combinatorics using the language of Grassmann variables. While the combinatorics could be handled without them, we prefer to use them for organizational purposes and to highlight the analogy to constructions in spin-TQFT which relate Grassmann calculus to combinatorial spin structures (c.f.~\cite{guWen2014Supercohomology,gaiottoKapustin2016,tata2020}). 

The organization of this paper is as follows. In Sec.~\ref{sec:preliminaries}, we review preliminary notions needed in the problem formulation: first the notions of webs, connections, and traces, and next how the notions of spin geometry arise from Kasteleyn theory. In Sec.~\ref{sec:rank_1_dimer}, we review the results of of~\cite{cimasoniReshetikhin2007spin} of rank-1 dimer models (i.e. single dimer covers) in our notations. Then in Sec.~\ref{sec:rank_N_dimers}, we formulate and prove our main results for rank-$N$ dimers. We conclude in Sec.~\ref{sec:discussion} with discussion and questions regarding this work.
In the Appendix, we review Grassmann variables in Appendix~\ref{app:grassmann}.
Relevant details regarding definitions and notions related to spin geometry are in Appendix~\ref{app:spin_structures} and can be read as a self-contained summary of 2D spin geometry.

\section{Preliminaries} \label{sec:preliminaries}
\subsection{(Multi-)webs, Connections, Traces, Blowup Graph} \label{sec:webs_and_traces}
Here, we review the main definitions of the blowup graph, multi-webs, and web traces that appear in~\cite{dksWebs2022} and throughout this work.

An $N$-web on a surface is a bipartite graph embedded in the surface for which all vertices have degree $N$. In general, a web may have multiple edges connecting two vertices $b,w$. As such, we can also refer to a `multiweb' as the data of all pairs of vertices $e = (b,w)$ connected to each other with a multiplicity $m_e$, such that the sums of multiplicities at each vertex $v$ is $N$, i.e. $\sum_{e \ni v} m_e = N$. Given a bipartite graph without multiple edges embedded in a surface, one can view a multiweb as a subgraph of the bipartite graph together with the multiplicity data on each edge.

A $\GL_N$ connection $\Phi$ on a bipartite graph is a collection of matrices $\{\phi_{b,w} \in \GL_N \,|\, (b \to w) \text{ a black-to-white edge}\}$. We also define $\phi_{w,b} = \phi_{b,w}^{-1}$ so that we can assign monodromy $\mathrm{mon}(L) = \phi_{v_1,v_2} \phi_{v_2,v_3} \cdots \phi_{v_n,v_1}$ to a loop $L = v_1 \to \cdots \to v_n \to v_1$.

We will soon review the definition of the trace of an abstract web, then we will show how the trace of a multiweb on the graph can be defined. 
Before defining the web trace, we need the notion of a `ciliation', which is an assignment of an ordering of edges around each vertex $v$ compatible with the cyclic ordering implied by the web embedding.
One can construct this as follows by drawing a tiny arrow, or cilium, at a vertex going into a face. For a black vertex $b$, we will label the incident edges $e_1(b),\cdots,e_{N}(b)$ in the \textit{clockwise order} with respect to the cilium. For a white $w$, the edges $e_1(w),\cdots,e_{N}(w)$ should be labeled in \textit{counterclockwise} order with respect to the cilium. Although for abstract webs, one can work with a general ciliation, we will be choosing to work with ciliations on multiwebs coming from a reference dimer cover $\mathcal{R}$ of the graph. 
\footnote{By a note in Section 3.6.1 of~\cite{dksWebs2022}, any $N$-regular bipartite graph has at least one dimer covering, so we can choose such a ciliation in general.}
In particular, if $e=(b,w)$ is in the matching $\mathcal{R}$, one can point the cilia of both $b,w$ into the face directly on the right of the edge $b \to w$ of the black-to-white direction. See Fig.~\ref{fig:blowup_graph_and_match} for a depiction of this case. 

Now we define the web trace with respect to a connection on the web $\{\phi_e \in \GL_N | e \, \text{edges}\}$. \footnote{Note that for this general definition, we can choose $\phi_e$, $\phi_{e'}$ to be different when $e,e'$ are distinct edges connecting the same vertices $b,w$. However in the setup of a multiweb embedded in a graph, like in this paper, we will have $\phi_{b,w}$ correspond to the single connection matrix between $b,w$.} We assign a copy of the vector space $V_e \cong \C^N$ and its dual $V_e^{*}$ for each edge $e$. The space $V_e$ will be associated to the black vertex of $e$ and the dual $V_e^{*}$ will be associated to the white one $w$. Denote by $\ket{1,e},\cdots,\ket{N,e}$ a chosen basis of each $V_e$ and $\bra{1,e},\cdots,\bra{N,e}$ a chosen basis of $V_e^*$.
For each black vertex $b$ and white vertex $w$, associate the tensors
\begin{equation}
\begin{split}
    \ket{b} &= \sum_{\sigma \in \mathrm{Sym}(N)} (-1)^{\sigma} \ket{\sigma(1),e_{1}(b)} \otimes \cdots \otimes \ket{\sigma(N),e_{N}(b)}
    \\
    \bra{w} &= \sum_{\sigma \in \mathrm{Sym}(N)} (-1)^{\sigma} \bra{\sigma(1),e_{1}(w)} \otimes \cdots \otimes \bra{\sigma(N),e_{N}(w)}.
\end{split}
\end{equation}
For each edge $e$, the matrices $\phi_{e}$ act om $V_e$ as 
\begin{equation}
    \phi_{e}(\sum_i v_i \ket{i,e}) = \sum_{i,j} (\phi_{e})_{i,j} v_j \ket{i,e}.
\end{equation}
From the tensors $\{\ket{b}, \bra{w}\}$ and the action of the $\{\phi_e\}$ we can form tensors
\begin{equation} \label{eq:web_tensor_defs}
    \bigotimes_{b} \ket{b} \in \bigotimes_e V_e
    \quad\quad\text{and}\quad\quad
    \left(\bigotimes_e \phi_e \right)
    \left(\bigotimes_b \ket{b} \right) \in \bigotimes_e V_e
    \quad\quad\text{and}\quad\quad
    \bigotimes_{w} \bra{w} \in \bigotimes_e V_e^{*}.
\end{equation}
Now, we define the web trace as:
\begin{equation}
    \tr(\mathrm{web},\Phi) := 
    \left( \bigotimes_{w} \bra{w} \right)
    \left( \bigotimes_e \phi_e    \right)
    \left( \bigotimes_{b} \ket{b} \right).
\end{equation}
At this point, we note that all of the web trace definitions implicitly depended on the ciliation. However, the ciliation enters only in the definitions of each $\ket{b}$ and $\bra{w}$. Changing the ciliation by a `click' (i.e. moving it from one face to an adjacent face) will reorder the permutations $\sigma$ in Eq.~\eqref{eq:web_tensor_defs} by an $N$-cycle, so changes the sign of $\ket{b}$ or $\ket{w}$ by $(-1)^{N+1}$. In particular, the ciliation only affects the web trace definitions up to an overall sign, and are completely invariant for $N$ odd.

First, we note that the web trace is multiplicative under connected components. In particular, if as a graph the web is a disjoint union $\mathrm{web} = \mathrm{web}_1 \sqcup \cdots \sqcup \mathrm{web}_k$ of webs, then it follows immediately that
\begin{equation}
    \tr(\mathrm{web},\Phi)
    =
    \prod_{\substack{\text{disconnected} \\ \text{components} \\ \mathrm{web}_i}} 
    \tr(\mathrm{web}_i,\Phi).
\end{equation}

Also note that all of these formulas implicitly depend on the choice of bases used on the edges. The above web trace formula has an associated gauge transformation formula associated to vertices $b$ or $w$. In particular, for a matrix $g \in \GL_N$ one can redefine the bases as $\widehat{\ket{j , e_k(b)}} = \sum_{i} g_{i,j} \ket{i , e_k(b)}$ simultaneously each edge $e_k(b)$ adjacent to $b$, or analogously for $w$. These bases are associated to tensors $\widehat{\ket{b}}$ and $\widehat{\bra{w}}$. One can see immediately from the definitions Eq.~\eqref{eq:web_tensor_defs} that this would give 
\begin{equation*}
    \widehat{\ket{b}} = \det(g) \ket{b}
    \quad\quad\text{or}\quad\quad
    \widehat{\bra{w}} = \det(g) \bra{w}.
\end{equation*}
In addition, we can consider simultaneously conjugating the connection matrices adjacent to $b$ or $w$ as 
\begin{equation*}
    \widehat{\phi_{b w}} = g^{-1} \phi_{b,w}
    \quad\quad\text{or}\quad\quad
    \widehat{\phi_{b w}} \to \phi_{b,w} g^{-1}.
\end{equation*}
These new connection matrices in these bases give a new web trace $\widehat{\tr}(\mathrm{web},\widehat{\Phi})$. The two web traces in the different bases are related as 
\begin{equation*}
    \widehat{\tr}(\mathrm{web},\widehat{\Phi}) = \det(g) \tr(\mathrm{web},\Phi).
\end{equation*}
As such, the web trace is not gauge-invariant under $\GL_N$. However, it is gauge-invariant under $\SL_N$ transformations. As such, restricting to $\SL_N$ connections and gauge transformations, $\phi_{bw} \in \SL_N$, will make the trace gauge-invariant. Although, many formulas and considerations in this paper don't rely on any notions of gauge-invariance and can be expressed for arbitrary $\phi_{bw} \in \GL_N$. 

Now, given a multiweb that lives on a bipartite graph in the surface with multiplicities $\{m_e\}$ on edges $\{e\}$, there is an associated web that comes from splitting edge $e=(b,w)$ into $m_e$ parallel edges, all with the matrix $\phi_{bw}$ on each edge. We will define the trace of the associated multiweb as:
\begin{equation}
    \tr(\mathrm{multiweb},\Phi) = \frac{\tr(\mathrm{web},\Phi)}{\prod_e m_e!}.
\end{equation}

A key intermediate result of~\cite{dksWebs2022} allows us to express traces of multi-webs in terms of dimer covers on the so-called `blowup' of the graph. The blowup graph is defined as the graph whose vertices and edges are given by
\begin{equation*}
\begin{split}
    \text{vertices} &= \{b^{(i)}, w^{(i)} | b,w \, \text{black, white vertices}, i \in \{1,\cdots,N\}\} \\
    \text{edges} &= \{(b^{(i)}, w^{(j)}) | (b,w) = \text{edge in graph}, i,j \in \{1,\cdots,N\}\}.
\end{split}
\end{equation*}
See Fig.~\ref{fig:blowup_graph_and_match} for a depiction.

\begin{figure}[h!]
  \centering
  \includegraphics[width=\linewidth]{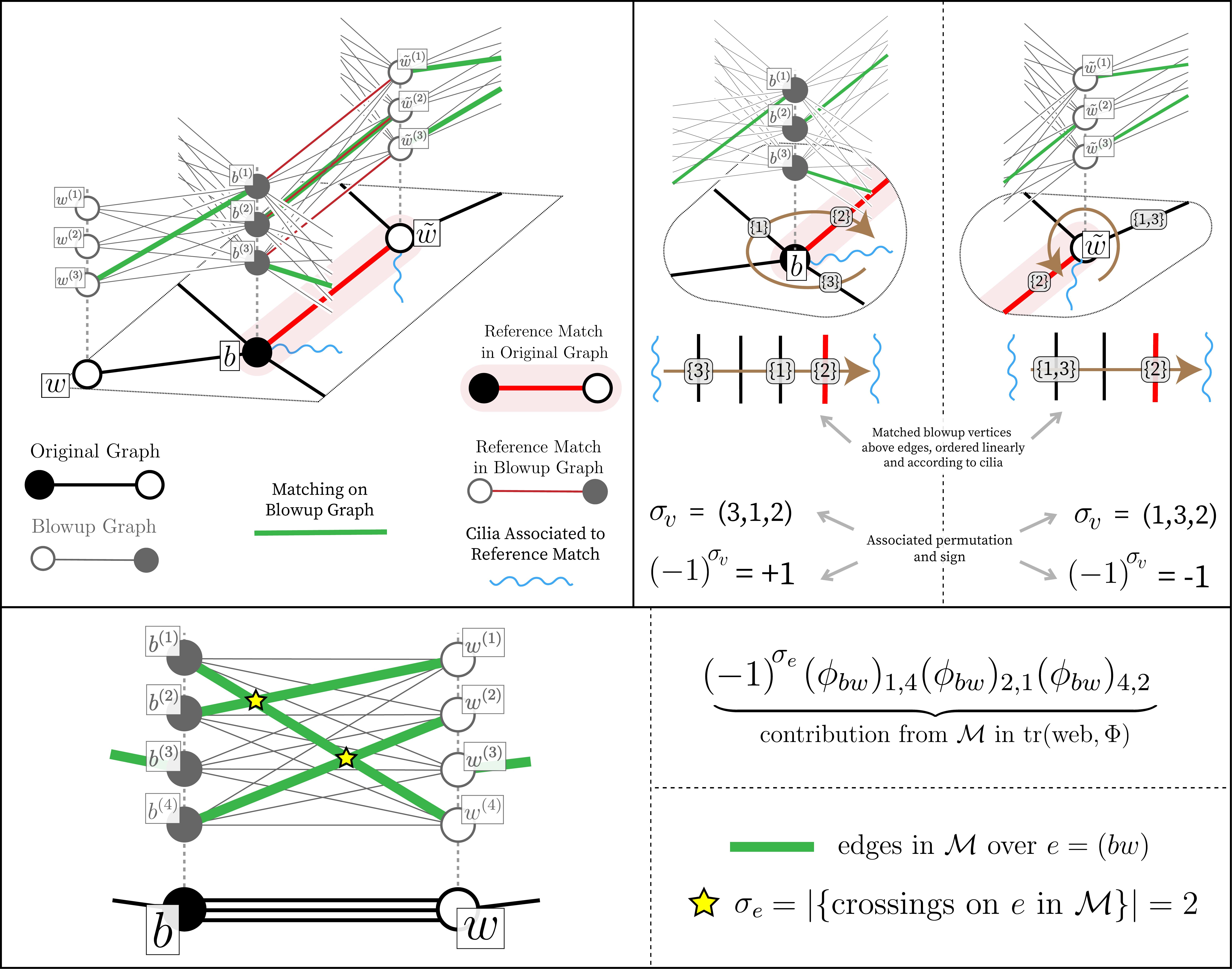}
  \caption{
    (Top-Left) Depiction of blowup graph for $N=3$, as well as the lift of a reference matching on the original graph and the associated ciliation. 
    (Top-Right) Depiction of the vertex permutations $\{\sigma_v\}$ and their associated signs contributed to the web trace. 
    (Right) Contribution to web trace from a matching $\mathcal{M}$ of the blowup graph, where the $N=4$ and the edge $e=(b,w)$ is has multiplicity 3 in the web.
  }
  \label{fig:blowup_graph_and_match}
\end{figure}

The important formula we will be comparing to in this paper is, for a multiweb with edge multiplicities $\{m_e\}$,
\begin{equation} \label{eq:multiweb_trace}
\begin{split}
    &\tr(\mathrm{multiweb},\Phi) 
    = 
    \sum_{
        \substack{
            \text{matchings } \mathcal{M} \\
            \text{ of blow-up graph with} \\
            \text{$m_e$ edges above $e$ in graph}
        }
    }
    (-1)^r
    \prod_{ (b^{(i)}, w^{(j)} ) \in \mathcal{M}}
    (\phi_{b, w})_{i,j},
    \\
    &\quad \text{ where} \quad
    (-1)^r
    = 
    \prod_{\substack{\text{edges } e = (b,w) \\ \text{in graph} }} (-1)^{\sigma_e}
    \prod_{\substack{\text{vertices } v\\ \text{in graph} }} (-1)^{\sigma_v}
\end{split}
\end{equation}
and the permutations $\{\sigma_e\}$ and $\{\sigma_v\}$ are defined as follows. For an edge $e=(b,w)$ with multiplicity $m_e$ in the multiweb, a perfect matching of the blowup graph with $m_e$ edges lying above $e$ will connect edges $b^{(i_1)} \leftrightarrow w^{(j_1)}, \cdots, b^{(i_{m_e})} \leftrightarrow w^{(j_{m_e})}$. This defines a natural permutation on $\{1,\cdots,m_e\}$ by considering the linear orders of the $\{i_1,\cdots,i_{m_e}\}$, $\{j_1,\cdots,j_{m_e}\}$. Equivalently, (overloading $\sigma_e$ to refer to both the permutation and its inversion number), we'd have:
\begin{equation}
    \sigma_e = \#\Big\{i_{a} < i_{a'} | b^{(i_a)} \leftrightarrow w^{(j_a)}, b^{(i_{a'})} \leftrightarrow w^{(j_{a'})} \in \mathcal{M}  \,\,\text{ and }\,\,  j_{a} > j_{a'}\Big\}.
\end{equation}
For any vertex $v$ of degree $d(v)$ associated edges $e_1(v),\cdots,e_{d(v)}(v)$ in the order determined by the ciliation, say that the linear order of the $m_{e_k(v)}$ vertices blowup vertices $\{v^{(1)},\cdots,v^{(N)}\}$ that are involved in matchings above $e_k(v)$ are $v^{(i_{k,1})}, \cdots, v^{(i_{k,m_{e_k(v)}})}$ with $i_{k,1} < \cdots < i_{k,m_{e_k(v)}}$. Then $\sigma_v$ is the permutation:
\begin{equation}
    \sigma_v = 
    (1, \cdots, N)
    \mapsto 
    \left(
        i_{1,1}, \cdots, i_{1,m_{e_1(v)}}, 
        \cdots, 
        i_{k,1}, \cdots, i_{1,m_{e_k(v)}}, 
        \cdots,
        i_{d(v),1}, \cdots, i_{d(v),m_{e_{d(v)}(v)}}
    \right).
\end{equation}
See again Fig.~\ref{fig:blowup_graph_and_match} for depictions of $\sigma_e,\sigma_v$ and both the sign and connection contributions that come from a matching on the blowup.

We note that in~\cite{dksWebs2022}, the analogous formula for the sum is written in terms of so-called `half-edge colorings'. A half-edge coloring is precisely equivalent to a dimer matching on the blowup graph.

The proof of Eq.~\eqref{eq:multiweb_trace} follows almost immediately from the definition and requisite bookkeeping. For the case of all $m_e = 1$ (i.e. the multiweb is just a web with no multiple edges), terms in the tensor contraction are exactly the same as terms in the expansion over blowup matchings. In the case of edge multiplicities, there are $\prod_{e} m_e!$ equal terms in the tensor contraction that correspond to a single blowup matching, which explains the need to normalize by $\frac{1}{\prod_{e} m_e!}$.

To conclude this section, we will explain how to think of the web traces in terms of the more familiar objects of traces of monodromies of loops. First, consider a multiweb whose vertices consist of a $\mathrm{loop} = b_1 \to w_1 \to b_2 \to w_2 \to \cdots \to b_n \to w_n \to b_1$ and for which each $b_i \to w_i$ has multiplicity $k$ and each $w_i \to b_{i+1}$ has multiplicity $N-k$. This loop has an associated monodromy 
\begin{equation}
    \mathrm{mon(loop)} = \phi_{b_1,w_1} \phi_{w_1,b_2} \cdots \phi_{b_n, w_n} \phi_{w_n,b_1}.
\end{equation}
Call this multiweb $\mathrm{loop}_k$. We'll have (up to an overall $\pm$ sign depending on the ciliation)
\begin{equation}
    \tr(\mathrm{loop}_k, \Phi) = \pm \tr\left(\Lambda^k \mathrm{mon(loop)}\right)
\end{equation}
where $\Lambda^k$ is the $k$th exterior power. 
See Fig.~\ref{fig:tr_loop_kth_ext} for depictions.

\begin{figure}[h!]
  \centering
  \includegraphics[width=0.7\linewidth]{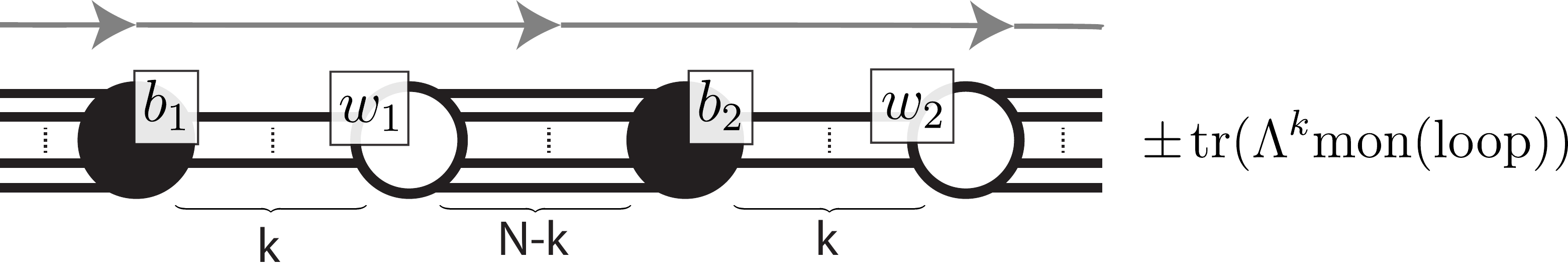}
  \caption{
    A multiweb consisting of edge multiplicites of $k$, $N-k$ in a $\mathrm{loop} = b_1 \to w_1 \to b_2 \to w_2 \to \cdots$ evaluates to $\pm \tr(\Lambda^k \mathrm{mon(loop)})$.
  }
  \label{fig:tr_loop_kth_ext}
\end{figure}

In general, webs on a punctured surface with an $\SL_N$ span `regular functions on the $\SL_N$ character variety of the punctured surface'. In other words, webs are `algebraic' functions of the moduli space of $\SL_N$ connections on the punctured surface. It is known that another generating set of such functions are traces of monodromies of loops~\cite{PROCESI1976306}. Indeed, one can use skein relations (see e.g.~\cite{sikoraWebs01}) to express any web trace as sums over products of traces of monodromies of loops. In the bipartite graph / multiweb setting, we think of the punctured surface as having one puncture per face, and a general multiweb trace can be written as a linear combination of products of traces of monodromies of loops in the surface.

We now note that for the rest of this article, we will exclusively be dealing with multiwebs. As such, we will often refer to multiwebs as webs.

\subsection{Geometric Setup on Graphs} \label{sec:geometric_setup}
Now, we review how a choice of Kasteleyn signs plus a reference dimer matching corresponds to a spin structure. We will use them to get higher genus versions of Kasteleyn's formula and web trace formulas with a spin structure are obtained for free with some geometric setup and background.

Consider a bipartite graph embedded in a surface $\Sigma$ so that every face is contractible. If we endow the graph with a reference matching $\mathcal{R}$, one can construct a nonvanishing vector field along the graph as in Fig.~\ref{fig:vectorFieldKasteleyn}. This construction of vector field will have an even-index singularity on each face of degree $2 \,  (\mathrm{mod} \, 4)$ and an odd index singularity for degree $0 \, (\mathrm{mod} \, 4)$.

\begin{figure}[h!]
  \centering
  \includegraphics[width=0.95\linewidth]{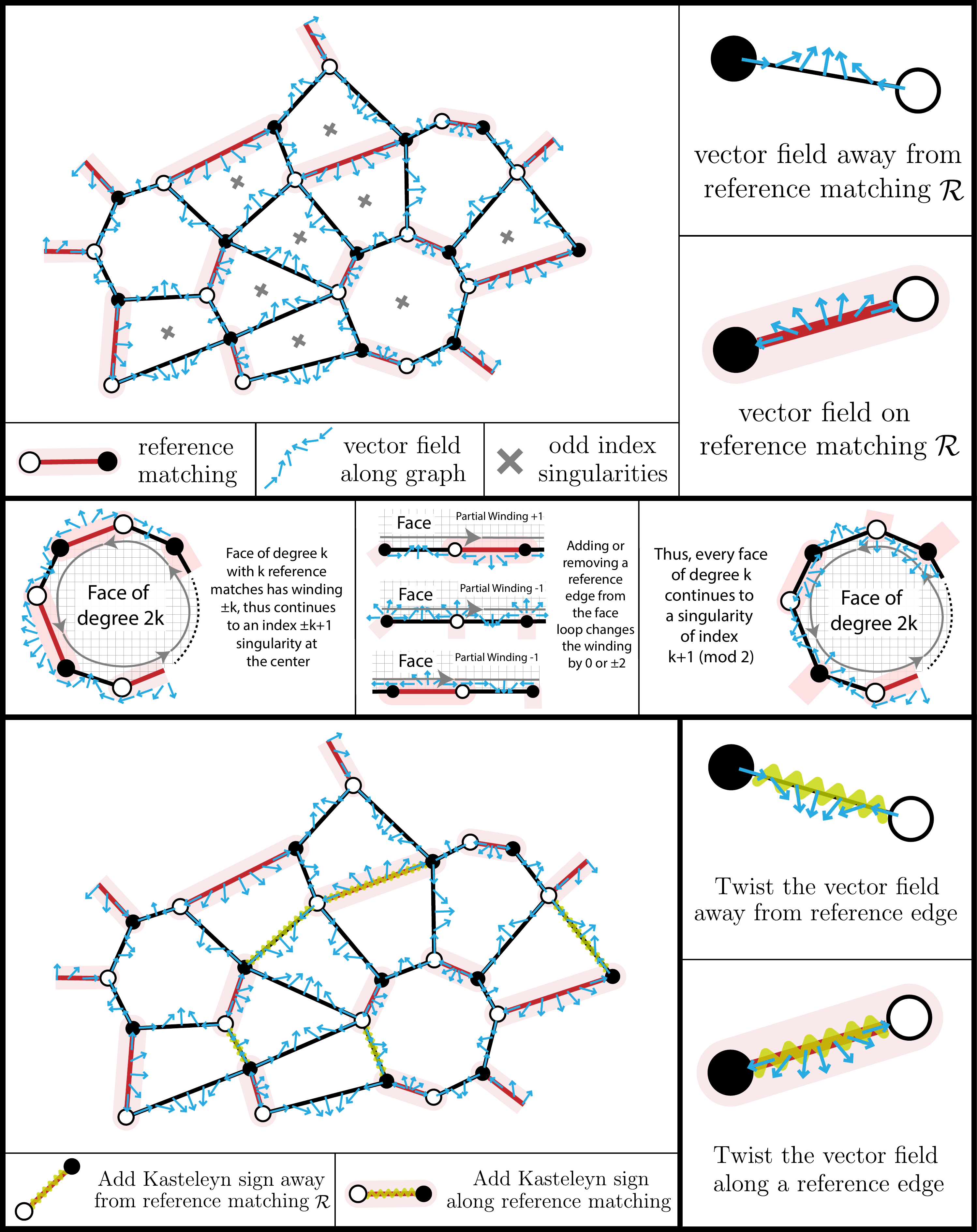}
  \caption{
    (Top) Given a reference dimer matching, one can construct a continuous vector field in a neighborhood of the surface graph. In general this vector field can be extended to the faces of the graph with an even index singularity on a face with $2 \,  (\mathrm{mod} \, 4)$ sides and an odd index singularity on a face with $0 \, (\mathrm{mod} \, 4)$ sides.
    (Middle) Explanation of why for every face of degree $2k$, the vector field to a singularity of index $k+1 \, (\mathrm{mod} \, 2)$ on the interior.
    (Bottom) In general, one can choose a collection of graph edges between pairs of odd-index singularities to twist the vector field, returning a new vector field with even-index singularities everywhere. These twists correspond to Kasteleyn signs.
  }
  \label{fig:vectorFieldKasteleyn}
\end{figure}

Also as in Fig.~\ref{fig:vectorFieldKasteleyn}, one can \textit{twist} the vector field along a collection of edges so that the twisted vector field extends to even-index singularities on each face. Generally, this will always be possible on a surface graph since there are always an even number of odd-index singularities on an oriented surface. For an edge
${\bf e} = (b w)$ define the \textit{Kasteleyn sign}
\begin{equation}
    \epsilon_{{\bf e}}
    =
    \epsilon_{b w}
    =
    \begin{cases}
        -1, &\quad\text{ if vector field twisted along edge}
        \\
        +1, &\quad\text{ otherwise} 
    \end{cases}.
\end{equation}
We want to choose the Kasteleyn signs so that the twisted vector field extends to even-index singularities going around each face.
As reviewed in Appendix~\ref{app:spin_structures}, this condition will mean that the twisted vector field corresponds to a \textit{spin-structure}. Now, consider a face in the graph. Note that every edge $(bw)$ in the face with $\epsilon_{b,w} = -1$ changes the index of the singularity on the face by $\pm 1$. This can be seen by computing the winding number of the vector field with respect to the tangent of the curve going around the face. Twisting the edge by a Kasteleyn sign will change the local winding angle across an edge from $\pm \pi \to \mp \pi$, which means that the winding changes by a single unit $\pm 2\pi$. In particular, to arrange even index singularities on a face, we will need an even number of $\epsilon_{b,w} = -1$ on any face with degree $2 \,  (\mathrm{mod} \, 4)$ and an odd number on faces with degree $0 \,  (\mathrm{mod} \, 4)$. 
This condition corresponds exactly to the Kasteleyn sign condition
\begin{equation}
    \prod_{(bw) \in \mathrm{face}} \epsilon_{b,w}
    =
    \begin{cases}
        +1 \quad \text{ if face has degree } \, 2 \,  (\mathrm{mod} \, 4) \\
        -1 \quad \text{ if face has degree } \, 0 \,  (\mathrm{mod} \, 4)
    \end{cases}.
\end{equation}

In conclusion, a choice of reference dimer matching together with a choice of Kasteleyn signs $\epsilon_{b,w}$ give a construction of a spin structure on the surface.

\section{Rank-1 Dimers} \label{sec:rank_1_dimer}

Here, we review the results of~\cite{cimasoniReshetikhin2007spin} which proves Kasteleyn's theorem on Riemann surfaces. We will assume a bipartite graph embedded in a surface together with a reference matching. The point is that a Kasteleyn determinant for a single spin structure is a \textit{signed} sum over dimer covers where the sign is related to the topological class of the dimer cover. We will explain the specifics of this statement here, then in the Subsections~\ref{sec:rank_1_dimer_planar},\ref{sec:rank_1_dimer_surfaces} explain applications to getting unsigned sums over dimer covers in the planar and surface case respectively.

The statement we will prove is as follows.
\begin{thm}
    Let the reference matching $\mathcal{R}$ and Kasteleyn signs $\epsilon$ correspond to a spin structure $\eta$. Denote the corresponding Kasteleyn matrix $K_\eta$ (which is a slight abuse of notation since there can be many choices of $\epsilon$ corresponding to isomorphic spin structures). Then
    \begin{equation}
        \det(K_\eta) 
        = \pm
        \sum_{\substack{\text{dimer covers} \\ \mathcal{M}} }
        (-1)^{q_\eta(\mathcal{M} \sqcup \mathcal{R})}
    \end{equation}
    In the above, we note that the reference match $\mathcal{R}$ together with another matching $\mathcal{M}$ can be be combined to form a collection of loops $\mathcal{M} \sqcup \mathcal{R}$, which can be mapped to a homology class in $H_1(\Sigma,\Z_2)$. The sign $(-1)^{q_\eta(\cdots)}$ is defined in Sec.~\ref{app:quad_form_arf} of the Appendix.
\end{thm}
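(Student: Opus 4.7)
The plan is to expand $\det(K_\eta)$ via the Leibniz formula as a signed sum over perfect matchings $\mathcal{M}$ and then show that, up to a global sign, each term equals $(-1)^{q_\eta(\mathcal{M} \sqcup \mathcal{R})}$. Writing rows of $K_\eta$ by black vertices and columns by white vertices, Leibniz gives
\[
\det(K_\eta) = \sum_{\mathcal{M}} \mathrm{sgn}(\pi_\mathcal{M}) \prod_{(b,w) \in \mathcal{M}} \epsilon_{b,w},
\]
where $\pi_\mathcal{M}$ is the bijection $b \mapsto w$ encoded by $\mathcal{M}$ with respect to fixed orderings of the black and white vertices. All the sign information lies in the product $\mathrm{sgn}(\pi_\mathcal{M}) \prod \epsilon_{b,w}$, and I want to compare it to the corresponding value at $\mathcal{M} = \mathcal{R}$.

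Next I would form the ratio of the $\mathcal{M}$-term to the $\mathcal{R}$-term. The symmetric difference $\mathcal{M} \sqcup \mathcal{R}$ decomposes into an edge-disjoint union of even-length alternating loops $L_1, \ldots, L_k$ in the graph; because every vertex is saturated by exactly one edge of each matching, these loops are in fact vertex-disjoint, so their $\Z_2$-homology classes have pairwise zero intersection number. Using the defining quadratic refinement identity $q_\eta(\alpha+\beta) = q_\eta(\alpha) + q_\eta(\beta) + \alpha \cdot \beta \pmod 2$, this gives
\[
q_\eta([L_1] + \cdots + [L_k]) = \sum_{i} q_\eta([L_i]) \pmod{2},
\]
so the theorem reduces to a per-loop identity.

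The heart of the argument is thus: for every alternating loop $L$ of length $2n$ in $\mathcal{M} \sqcup \mathcal{R}$, the combined Kasteleyn-sign and permutation-sign contribution picked up by toggling $\mathcal{R}|_L$ to $\mathcal{M}|_L$ equals $-(-1)^{q_\eta([L])}$. I would establish this using two geometric inputs. First, Kasteleyn's classical single-loop lemma: the parity of the number of twists $\epsilon_{b,w} = -1$ encountered around the loop, combined with the parity of the $n$-cycle by which $\pi_\mathcal{M}$ differs from $\pi_\mathcal{R}$ along $L$ (which contributes $(-1)^{n+1}$), differs from the winding of the tangent of $L$ by a universal constant. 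Second, by the construction of Sec.~\ref{sec:geometric_setup}, the Kasteleyn-twisted vector field built from $\mathcal{R}$ is a representative of the spin structure $\eta$, so by the definition of the quadratic form reviewed in Appendix~\ref{app:quad_form_arf}, $q_\eta([L])$ is precisely the parity of the rotation number of the tangent of $L$ relative to this vector field. Tallying, at each vertex of $L$, the local rotation angle prescribed by the reference-matching vector field, and, at each edge, the half-turn that reverses across a Kasteleyn twist, and comparing to the algebraic sign count, completes the per-loop identity.

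The main obstacle is precisely this per-loop geometric-combinatorial comparison: one must align the algebraic sign along $L$ with the winding interpretation of $q_\eta([L])$. The permutation-sign bookkeeping is standard, but matching it with the index-of-vector-field formula requires care about orientation conventions along the loop and about how the ciliation prescribed by $\mathcal{R}$ behaves along the $\mathcal{M}$-edges of $L$, which are not in the reference matching. Once the per-loop identity is in hand, multiplying over $i=1,\ldots,k$ and applying the additivity of $q_\eta$ over disjoint loops yields the stated formula, with the overall factor $(-1)^k$ and the residual base-point sign at $\mathcal{M}=\mathcal{R}$ absorbed into the global $\pm$.
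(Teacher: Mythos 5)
Your overall strategy is the same as the paper's: expand the determinant over matchings, normalize against the $\mathcal{M}=\mathcal{R}$ term, decompose $\mathcal{M}\sqcup\mathcal{R}$ into disjoint alternating loops, and compare the per-loop algebraic sign (an $n$-cycle contributing $(-1)^{n+1}$ times the product of Kasteleyn twists around the loop) to the winding of the loop's tangent against the reference vector field. The paper packages the permutation bookkeeping in Grassmann variables and isolates the per-loop statement as Lemma~\ref{lem:grassmann_equals_winding}, but the content is identical, including the reduction to the all-$\epsilon=+1$ case and the observation that each twist flips both the algebraic sign and the winding parity.

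There is, however, one concrete error in how you close the argument. You identify $q_\eta([L])$ with the parity of the winding of $L$, state the per-loop contribution as $-(-1)^{q_\eta([L])}$, and then claim the resulting factor $(-1)^k$ ($k$ the number of loops in $\mathcal{M}\sqcup\mathcal{R}$) can be absorbed into the global $\pm$. It cannot: $k$ depends on $\mathcal{M}$ and varies term by term in the sum, so an $\mathcal{M}$-dependent factor $(-1)^k$ would ruin the theorem. The resolution is that the quadratic form of Appendix~\ref{app:quad_form_arf} is defined with a shift per loop, $(-1)^{q_\eta(\xi)}=\prod_{\mathrm{loops}}(-1)^{1+\mathrm{wind(loop)}}$, not $\prod_{\mathrm{loops}}(-1)^{\mathrm{wind(loop)}}$; this shift is exactly what makes $q_\eta$ well-defined (trivial loops have $\mathrm{wind}=\pm1$ and contribute $+1$) and is exactly what eats your stray $(-1)^k$. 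Equivalently, the correct per-loop identity is that the sign ratio along $L$ equals $(-1)^{1+\mathrm{wind}(L)}=(-1)^{q_\eta([L])}$ with no extra minus sign, after which the product over loops is $(-1)^{q_\eta(\mathcal{M}\sqcup\mathcal{R})}$ on the nose and only the genuinely $\mathcal{M}$-independent base-point sign at $\mathcal{R}$ remains to be absorbed. With that convention fixed, your argument goes through; also note that $\mathcal{M}\sqcup\mathcal{R}$ in the paper is the disjoint union (so doubled edges appear as degenerate length-2 loops with $\mathrm{wind}=\pm 1$, contributing $+1$), which is consistent with discarding them as you do via the symmetric difference.
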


\begin{proof}
We will use the language of Grassmann variables to show the statement.
Now, we will take the matrix in question to be the Kasteleyn matrix $(K_\eta)_{b,w} = \epsilon_{b,w}$, where the $\epsilon_{b,w}$ are the Kasteleyn signs.

Let's expand the determinant.
\begin{equation}
    \det(K_\eta) 
    = 
    \sum_{\sigma \in \mathrm{Sym}(n)} (-1)^{\sigma} (K_\eta)_{b_1 w_{\sigma(1)}} \cdots (K_\eta)_{b_n w_{\sigma(n)}}
    =
    \sum_{\substack{\sigma \in \mathrm{Sym}(n) \text{ s.t.} \\ (b_1,w_{\sigma(1)}) \cdots (b_n,w_{\sigma(n)}) \\ \text{is a dimer cover} \, \mathcal{M}}}
    (-1)^{\sigma} \epsilon_{b_1, w_{\sigma(1)}} \cdots \epsilon_{b_n, w_{\sigma(n)}}.
\end{equation}
We will argue that each of the 
\begin{equation*}
    (-1)^{\sigma} \epsilon_{b_1, w_{\sigma(1)}} \cdots \epsilon_{b_n, w_{\sigma(n)}}
    =
    \pm (-1)^{q_\eta(\mathcal{M} \sqcup \mathcal{R})}
\end{equation*} 
for a uniform $\pm$ sign. Consider the special case $\mathcal{M} = \mathcal{R}$ (i.e. if $\mathcal{M}$ is the reference matching). Then, $\mathcal{M} \sqcup \mathcal{R}$ will be a homologically trivial collection of doubled edges so that $(-1)^{q_\eta(\mathcal{M} \sqcup \mathcal{R})} = +1$. 

In the language of Grassmann variables, given a set of variables $\psi_w, \overline{\psi}_b$ indexed by white, black vertices respectively, this condition can be rephrased as follows. Say that the permutation $\tau$ corresponds to $\mathcal{R}$ and $\sigma$ corresponds to $\mathcal{M}$. Then, this theorem is equivalent to showing
\begin{equation} \label{eq:kast_grassmann_sign_uniform}
\begin{split}
    &=(-\overline{\psi}_{b_1} \epsilon_{b_1,w_{\tau(1)}}   \psi_{w_{\tau(1)}}) 
    \cdots 
    (-\overline{\psi}_{b_n} \epsilon_{b_n,w_{\tau(n)}}   \psi_{w_{\tau(n)}}) 
    =
    (-1)^{\tau} \epsilon_{b_1, w_{\tau(1)}} \cdots \epsilon_{b_n, w_{\tau(n)}}
    \psi_{w_1} \overline{\psi}_{b_1} \cdots \psi_{w_n} \overline{\psi}_{b_n}
    \\
    &=\pm \psi_{w_1} \overline{\psi}_{b_1} \cdots \psi_{w_n} \overline{\psi}_{b_n}
    \\
    &\quad\quad\quad\quad\quad\quad\quad\quad
     \quad\quad\quad\quad\quad\quad\quad\quad \text{and}
    \\
    &(-\overline{\psi}_{b_1} \epsilon_{b_1,w_{\sigma(1)}} \psi_{w_{\sigma(1)}}) 
    \cdots 
    (-\overline{\psi}_{b_n} \epsilon_{b_n,w_{\sigma(n)}} \psi_{w_{\sigma(n)}}) 
    = 
    (-1)^{\sigma} \epsilon_{b_1, w_{\sigma(1)}} \cdots \epsilon_{b_n, w_{\sigma(n)}}
    \psi_{w_1} \overline{\psi}_{b_1} \cdots \psi_{w_n} \overline{\psi}_{b_n}
    \\
    &= \pm (-1)^{q_\eta(\mathcal{M} \sqcup \mathcal{R})} \psi_{w_1} \overline{\psi}_{b_1} \cdots \psi_{w_n} \overline{\psi}_{b_n}
\end{split}
\end{equation}
for a uniform $\pm$ sign. Another equivalent statement can be found by considering another set of Grassmann variables $\overline{\psi}_{w}, \psi_b$ indexed by white, black vertices. The equivalent condition is that
\begin{equation} \label{eq:kast_grassmann_sign_equiv}
\begin{split}
    &(-\overline{\psi}_{b_1} \epsilon_{b_1,w_{\sigma(1)}} \psi_{w_{\sigma(1)}}) 
    \cdots 
    (-\overline{\psi}_{b_n} \epsilon_{b_n,w_{\sigma(n)}} \psi_{w_{\sigma(n)}})
    (-\psi_{b_1} \epsilon_{b_1,w_{\tau(1)}} \overline{\psi}_{w_{\tau(1)}}) 
    \cdots 
    (-\psi_{b_n} \epsilon_{b_n,w_{\tau(n)}} \overline{\psi}_{w_{\tau(n)}})
    \\
    &=(-\overline{\psi}_{b_1} \epsilon_{b_1,w_{\sigma(1)}} \psi_{w_{\sigma(1)}}) 
    \cdots 
    (-\overline{\psi}_{b_n} \epsilon_{b_n,w_{\sigma(n)}} \psi_{w_{\sigma(n)}})
    (+\overline{\psi}_{w_{\tau(1)}} \epsilon_{b_1,w_{\tau(1)}} \psi_{b_1}) 
    \cdots 
    (+\overline{\psi}_{w_{\tau(n)}} \epsilon_{b_n,w_{\tau(n)}} \psi_{b_n} )
    \\
    &=
    + (-1)^{q_\eta(\mathcal{M} \sqcup \mathcal{R})}
    \psi_{w_1} \overline{\psi}_{b_1} \cdots \psi_{w_n} \overline{\psi}_{b_n}
    \overline{\psi}_{w_1} \psi_{b_1} \cdots \overline{\psi}_{w_n} \psi_{b_n}
    = (-1)^{q_\eta(\mathcal{M} \sqcup \mathcal{R})}    \psi_{b_1} \overline{\psi}_{b_1} \psi_{w_1} \overline{\psi}_{w_1} 
    \cdots
    \psi_{b_n} \overline{\psi}_{b_n} \psi_{w_n} \overline{\psi}_{w_n}.
\end{split}
\end{equation}
The second equality follows because the uniform $\pm$ sign for $\sigma$ and $\tau$ in Eq.~\eqref{eq:kast_grassmann_sign_uniform} are the same for both sets of variables. The third equality follows since the two expressions differ by an even number of transpositions of Grassmann variables.

The following Lemma will be instrumental in showing Eq.~\eqref{eq:kast_grassmann_sign_equiv} and thus the theorem.
\begin{lem} \label{lem:grassmann_equals_winding}
    Let ${\bf e}_1 \to {\bf e}_2 \to \cdots \to {\bf e}_{2m-1} \to {\bf e}_{2m} \to {\bf e}_1$ be a loop of edges as follows.
    First, any vertex appears at most once. 
    Next, the odd edges ${\bf e}_1,{\bf e}_3,\cdots$ should be taken to be in the reference matching. 
    And, the even edges ${\bf e}_2,{\bf e}_4,\cdots$ should be \textit{away from} the reference matching. 
    Denote ${\bf e}_{2k-1} = (w_{\ell_k},b_{\ell_k})$, ${\bf e}_{2k} = (b_{\ell_k},w_{\ell_{k+1}})$ as the white,black vertices in each edge, identifying $w_{\ell_{m+1}}=w_{\ell_1}$, $b_{\ell_{m+1}}=b_{\ell_1}$. Then,
    \begin{equation}
        \left(
        \prod_{k=1}^{m}
            \left(
                -\epsilon_{{\bf e}_{2k}} 
                \overline{\psi}_{b_k} \psi_{w_{k+1}} 
            \right)
            \left( 
                \epsilon_{{\bf e}_{2k-1}} 
                \overline{\psi}_{w_k} \psi_{b_k}     \right)
            \right)
        =
        (-1)^{1+\mathrm{wind(loop)}}
        \psi_{b_{\ell_1}} \overline{\psi}_{b_{\ell_1}}
        \psi_{w_{\ell_1}} \overline{\psi}_{w_{\ell_1}}
        \cdots
        \psi_{b_{\ell_1}} \overline{\psi}_{b_{\ell_1}}
        \psi_{w_{\ell_1}} \overline{\psi}_{w_{\ell_1}}
    \end{equation}
    where $\mathrm{wind(loop)}$ is the winding of the tangent with respect to the vector field constructed in Sec.~\ref{sec:geometric_setup}.
\end{lem}
Before showing the Lemma, we explain why it implies the theorem. The point is that we can rearrange (at no sign cost) the product in the beginning of Eq.~\eqref{eq:kast_grassmann_sign_equiv} into a product over the loops and doubled edges between $\mathcal{M}$ and $\mathcal{R}$. As such, each loop will contribute a factor of $(-1)^{1+\mathrm{wind(loop)}}$ after rearranging the loops. Then, the product of the $\psi_{b} \overline{\psi}_b$ and $\psi_{w} \overline{\psi}_w$ factors can again be rearranged at no sign cost to give $\psi_{b_1} \overline{\psi}_{b_1} \psi_{w_1} \overline{\psi}_{w_1} \cdots$, times an overall factor of 
\begin{equation*}
    \prod_{\mathrm{loops} \in \mathcal{M} \sqcup \mathcal{R}} (-1)^{1+\mathrm{wind(loop)}} = (-1)^{q_\eta(\mathcal{M} \sqcup \mathcal{R})}.
\end{equation*}
These loops will manifestly have no intersections when all embedded in the plane. And above, we note that the product of the $(-1)^{1+\mathrm{wind(loop)}}$ for a collection of loops with no crossings is by definition $ (-1)^{q_\eta(\mathcal{M} \sqcup \mathcal{R})}$, as explained in Sec.~\ref{app:quad_form_arf}. 

We now proceed to prove the Lemma.
\begin{proof}[Proof of Lemma~\ref{lem:grassmann_equals_winding}]
    We briefly handle the case of a doubled edge, i.e. $m=1$ separately. This degenerate case can be thought of as a small non-self-intersecting loop around the edge, so that $(-1)^{1+\mathrm{wind(loop)}}=1$. Later in Sec.~\ref{sec:rank_N_dimers} while discussing higher-rank dimers, we will see a more systematic way to think about embedding the loop in this way. For $m=1$ and loop involving $b,w$ we have
    \begin{equation}
        (-\epsilon_{b,w} \overline{\psi}_b \psi_w) (+\epsilon_{b,w} \overline{\psi}_w \psi_b) 
        = 
        \psi_b \overline{\psi}_b \psi_w \overline{\psi}_w
        = 
        (-1)^{1+\mathrm{wind(loop)}}
        \psi_b \overline{\psi}_b \psi_w \overline{\psi}_w
    \end{equation}
    
    Now, we consider $m>1$. Note that we only need to prove this statement in the case that all $\epsilon_{\cdots} = +1$. In particular, changing the sign of any $\epsilon_{w b}$ will change the left-side by a sign and also change the winding of the loop by $\pm 1$ giving an extra minus sign on the right-side. We proceed to verify the case of untwisted windings, with $\epsilon_{\cdots}=1$.
    
    In this case, one can check that $\mathrm{wind(loop)} = m$. On the other hand, we'd have 
    \begin{equation}
    \begin{split}
        \left(\overline{\psi}_{w_{\ell_1}} \psi_{b_{\ell_1}}\right)
        \left(-\overline{\psi}_{b_{\ell_1}} \psi_{w_{\ell_2}} \right)
        \cdots
        \left(\overline{\psi}_{w_{\ell_m}} \psi_{b_{\ell_m}}\right)
        \left(-\overline{\psi}_{b_{\ell_m}} \psi_{w_1} \right)
        = (-1)^{m+1}
        \left(
            \psi_{b_1} \overline{\psi}_{b_1} 
            \psi_{w_1} \overline{\psi}_{w_1}
            \cdots
            \psi_{b_m} \overline{\psi}_{b_m}
            \psi_{w_m} \overline{\psi}_{w_m} 
        \right)
    \end{split}
    \end{equation}
    These signs indeed match. See Fig.~\ref{fig:grassmann_equals_wind_ex} for an example. 
\end{proof}
\begin{figure}[h!]
  \centering
  \includegraphics[width=\linewidth]{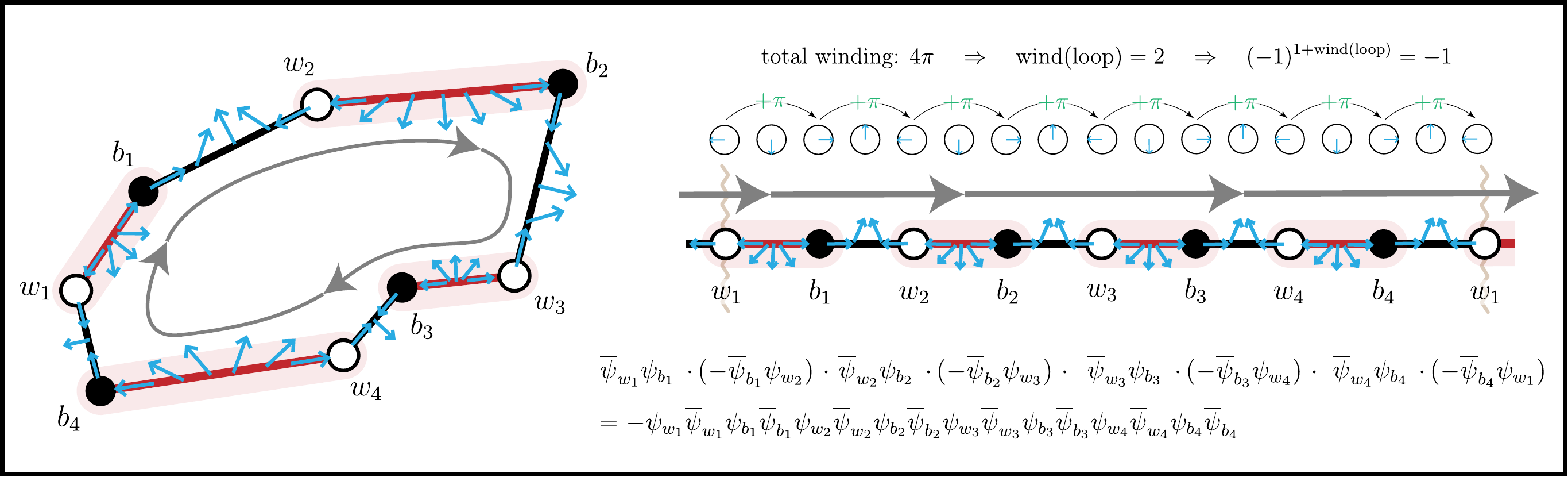}
  \caption{
    Example $m=4$ base case for showing that the Grassmann integral is related to the winding of the tangent vector along the curve as in Lemma~\ref{lem:grassmann_equals_winding}.
  }
  \label{fig:grassmann_equals_wind_ex}
\end{figure}
This completes the proof.
\end{proof}

\subsection{Planar Case} \label{sec:rank_1_dimer_planar}
Note that since the plane is topologically trivial, there is only one spin structure and $(-1)^{q_\eta(\mathcal{M} \sqcup \mathcal{R})} = +1$ for every matching $\mathcal{M}$. This means that 
\begin{equation}
    \det(K_\eta) 
    = \pm
    \sum_{\substack{\text{dimer covers} \\ \mathcal{M}} }
    1
    = \pm \# \{\text{dimer covers}\}
\end{equation}
which is exactly the original Kasteleyn theorem.

\subsection{On Surfaces} \label{sec:rank_1_dimer_surfaces}
On a Riemann surface $\Sigma$ of genus $g$, there are $2^{2g}$ spin structures. We would like to use the information of all of these spin structures to compute the number of dimer covers. Henceforth, we'll assume that all spin structures $\eta$ are constructed using the same reference matching and only differ in their choice of Kasteleyn signs. Moreover, after appropriate gauge transformations, we can always choose the Kasteleyn signs to have the same values on reference edges over all spin structures $\eta$. This will make the $\pm$ sign in front of 
\begin{equation*}
    \det(K_\eta) 
    = 
    \pm
    \sum_{\substack{\text{dimer covers} \\ \mathcal{M}} }
    (-1)^{q_\eta(\mathcal{M} \sqcup \mathcal{R})}
\end{equation*}
be the same for all $\eta$. This is because this sign equals $(-1)^{\tau} \epsilon_{b_1, w_{\tau(1)}} \cdots \epsilon_{b_n, w_{\tau(n)}}$ for $\tau$ the permutation corresponding to the reference edge. 

We will use the fact Lemma~\ref{lem:inv_quad_form} that the quadratic form matrix $A_{\eta,\xi} = (-1)^{q_\eta(\xi)}$ is invertible with inverse $(A^{-1})_{\xi,\eta} = \frac{1}{2^{2g}} (-1)^{q_\eta(\xi)}$. Let us decompose the sum over homology classes. For $\xi \in H_1(\Sigma,\Z_2)$, define
\begin{equation}
    Z_\xi
    :=
    \sum_{\substack{\text{dimer covers} \\ \mathcal{M} \\ \text{s.t. } [\mathcal{M} \sqcup \mathcal{R}] = [\xi] \in H_1(\Sigma,\Z_2) } }
    1
\end{equation}
so that $Z := \sum_{\xi \in H_1(\Sigma,\Z_2)} Z_\xi = \#\text{\{dimer covers\}}$.

We have
\begin{equation}
    \det(K_\eta) 
    = 
    \pm
    \sum_{\xi \in H_1(\Sigma,\Z_2)}
    (-1)^{q_\eta(\xi)}
    \sum_{\substack{\text{dimer covers} \\ \mathcal{M} \\ \text{s.t. } [\mathcal{M} \sqcup \mathcal{R}] = [\xi] \in H_1(\Sigma,\Z_2) } }
    1
    =
    \pm
    \sum_{\xi \in H_1(\Sigma,\Z_2)}
    A_{\eta,\xi} Z_\xi
\end{equation}
This means that we can write
\begin{equation}
    Z_\xi
    = 
    \pm
    \sum_{\substack{\eta \, \text{spin} \\ \text{structure}}}
    (A^{-1})_{\xi,\eta} \det(K_\eta)
    = 
    \pm \frac{1}{2^{2g}}
    \sum_{\substack{\eta \, \text{spin} \\ \text{structure}}}
    (-1)^{q_\eta(\xi)} \det(K_\eta).
\end{equation}
Finally, we get
\begin{equation}
    Z 
    = 
    \sum_{\xi \in H_1(\Sigma,\Z_2)} Z_\xi
    =
    \pm
    \frac{1}{2^{2g}}
    \sum_{\xi \in H_1(\Sigma,\Z_2)} 
    \sum_{\substack{\eta \, \text{spin} \\ \text{structure}}}
    (-1)^{q_\eta(\xi)} \det(K_\eta).
\end{equation}
Now, we can use the definition of the Arf invariant (see Sec.~\ref{app:quad_form_arf}) to write:
\begin{equation}
    Z 
    =
    \pm
    \frac{1}{2^g} 
    \sum_{\substack{\eta \, \text{spin} \\ \text{structure}}}
    \mathrm{Arf}(\eta) \det(K_\eta).
\end{equation}

\section{Rank-$N$ Dimers} \label{sec:rank_N_dimers}
The goal of this section will be to prove the following deformation of the main result of~\cite{dksWebs2022}. In the rest of this section, $\{\epsilon_{bw}\}$ be a choice of Kasteleyn signs corresponding to spin structure $\eta$, $\mathcal{R}$ be a reference matching in the graph and $\tilde{\mathcal{R}}$ be the lift of $\mathcal{R}$ in the blowup graph embedded in surface $\Sigma$.
\begin{thm} \label{thm:dks_on_surfaces}
\begin{equation}
    \det(K_\eta(\Phi)) = \pm \sum_{\substack{\text{web} \\ \text{configs}}} \, \prod_{\text{webs}} \tr_\eta(\text{web},\Phi),
\end{equation}
where we define each $\tr_\eta(\text{web},\Phi)$ in terms of matchings on the blow-up graph.
Moreover, we define 
\begin{equation}
\begin{split}
    &\tr_{\eta}(\mathrm{web},\Phi) 
    = 
    \sum_{
        \substack{
            \text{matchings } \mathcal{M} \\
            \text{of blow-up graph with} \\
            \text{compatible edge multiplicities}
        }
    }
    (-1)^s
    \prod_{ (b^{(i)}, w^{(j)} ) \in \mathcal{M}}
    (\phi_{b, w})_{i,j},
    \\
    &\quad \text{ where} \quad
    (-1)^s 
    = 
    (-1)^{q_\eta(\mathcal{M} \sqcup \tilde{\mathcal{R}})}
    \prod_{\substack{\text{edges } e = (b,w) \\ \text{in graph} }} (-1)^{\sigma_e}
    \prod_{\substack{\text{vertices } v\\ \text{in graph} }} (-1)^{\sigma_v}
\end{split}
\end{equation}
and $\mathcal{M} \sqcup \tilde{\mathcal{R}}$ refers to the $\Z_2$ homology class consisting of the union of the matching $\mathcal{M}$ with the liftted reference matching $\tilde{R}$ in the blowup graph, projected back down to the surface.
\end{thm}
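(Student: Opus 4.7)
The plan is to extend the Grassmann-variable strategy of Section~\ref{sec:rank_1_dimer} to the blowup graph $\widetilde{G}$, using a racetrack construction to provide the geometric bridge needed to identify abstract Grassmann signs with the topological quantity $(-1)^{q_\eta(\mathcal{M}\sqcup\tilde{\mathcal{R}})}$. First I would introduce Grassmann variables $\overline{\psi}_{b^{(i)}}, \psi_{w^{(j)}}$ on the blowup vertices and expand the determinant as
\begin{equation*}
\det(K_\eta(\Phi)) = \sum_{\mathcal{M}} (-1)^{\mathrm{perm}(\mathcal{M})} \prod_{(b^{(i)},w^{(j)}) \in \mathcal{M}} \epsilon_{b,w}(\phi_{b,w})_{i,j},
\end{equation*}
where $\mathcal{M}$ ranges over perfect matchings of $\widetilde{G}$. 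Grouping these matchings by the multi-subgraph they project down to in $G$ produces the outer sum over web configurations on the right-hand side automatically, so the theorem reduces to showing that the inner sum over $\mathcal{M}$ projecting to a fixed configuration evaluates to $\prod_{\text{webs}} \tr_\eta(\text{web},\Phi)$.

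For each $\mathcal{M}$ I would then mimic Eq.~\eqref{eq:kast_grassmann_sign_equiv} by introducing a dual set $\psi_{b^{(i)}}, \overline{\psi}_{w^{(j)}}$ and pairing the Grassmann expression for $\mathcal{M}$ against the Grassmann expression for the lifted reference matching $\tilde{\mathcal{R}}$. The doubled product factorizes over the loops and doubled edges of $\mathcal{M}\sqcup\tilde{\mathcal{R}}$ in $\widetilde{G}$. At this stage I would invoke the racetrack construction to realize $\widetilde{G}$ as an immersed graph on $\Sigma$: at each graph vertex $v$ the $N$ blowup copies $v^{(1)},\dots,v^{(N)}$ are arranged on parallel tracks in a tubular neighborhood of $v$ with transverse ordering dictated by the ciliation, and along each graph edge $e=(b,w)$ the $N^2$ blowup edges form a ribbon connecting the tracks at $b$ to those at $w$. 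With this immersion, each loop of $\mathcal{M}\sqcup\tilde{\mathcal{R}}$ becomes an immersed loop on $\Sigma$ with well-defined winding relative to the vector field of Section~\ref{sec:geometric_setup}. Applying Lemma~\ref{lem:grassmann_equals_winding} loop by loop and multiplying over all loops gives exactly $(-1)^{q_\eta(\mathcal{M}\sqcup\tilde{\mathcal{R}})}$ by the definition recalled in Appendix~\ref{app:quad_form_arf}.

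The main obstacle, and the new ingredient beyond the rank-1 argument, is reconciling the ``racetrack-natural'' ordering of Grassmann monomials (needed to apply the loop lemma in $\widetilde{G}$) with the ``vertex-grouped'' ordering that lets the matching sum above a graph edge collapse onto the multiweb-trace tensors $\ket{b}, \bra{w}$ of Eq.~\eqref{eq:web_tensor_defs}. I expect the sign produced by reindexing Grassmann variables at a vertex $v$ from the racetrack order to the order that clusters the $m_{e_k(v)}$ entries above each incident edge to be exactly $(-1)^{\sigma_v}$, and the sign produced by recording which of the $N^2$ potential blowup edges above $e$ actually appear in $\mathcal{M}$ to be exactly $(-1)^{\sigma_e}$. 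Establishing these two identifications is where the careful combinatorial bookkeeping lies; they match the conventions introduced in Eq.~\eqref{eq:multiweb_trace}.

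Once those signs are accounted for, the sum over blowup matchings projecting to a fixed web configuration rearranges into the defining sum for $\prod_{\text{webs}} \tr_\eta(\text{web},\Phi)$, since the quadratic form $q_\eta$ is additive on disjoint collections of loops, the $\sigma_v,\sigma_e$ signs are local to each web, and the Grassmann weight factorizes over connected components. Summing over web configurations completes the proof of Theorem~\ref{thm:dks_on_surfaces}, with the uniform overall $\pm$ sign arising exactly as in the rank-1 case from the permutation/sign of the reference matching $\tilde{\mathcal{R}}$.
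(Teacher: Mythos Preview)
Your outline is essentially the paper's strategy, but there is a genuine gap in how you account for the signs $(-1)^{\sigma_v}$ and $(-1)^{\sigma_e}$, and a related misstatement about what Lemma~\ref{lem:grassmann_equals_winding} delivers.

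You assert that applying Lemma~\ref{lem:grassmann_equals_winding} loop by loop and multiplying over all loops gives $(-1)^{q_\eta(\mathcal{M}\sqcup\tilde{\mathcal{R}})}$. This is not correct: the lemma yields $(-1)^{1+\mathrm{wind(loop)}}$ per loop, and the product $\prod_{\mathrm{loops}}(-1)^{1+\mathrm{wind(loop)}}$ equals $(-1)^{q_\eta(\cdot)}$ \emph{only when the loops are drawn without crossings}, since that is how $q_\eta$ is defined in Appendix~\ref{app:quad_form_arf}. Under the racetrack immersion the loops of $\mathcal{M}\sqcup\tilde{\mathcal{R}}$ do cross, and the whole point of the construction is that those crossings are in controlled bijection with the inversions of $\sigma_e$ and $\sigma_v$. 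Resolving each crossing flips $\prod(-1)^{1+\mathrm{wind}}$ by a factor of $-1$ (Lemma~\ref{lem:resolving_intersection}), so
\[
\prod_{\mathrm{loops}}(-1)^{1+\mathrm{wind(loop)}}
\;=\;\Bigl(\prod_v(-1)^{\sigma_v}\prod_e(-1)^{\sigma_e}\Bigr)\,(-1)^{q_\eta(\mathcal{M}\sqcup\tilde{\mathcal{R}})}
\;=\;(-1)^s.
\]
In other words the $\sigma_v,\sigma_e$ signs arise \emph{geometrically} as crossing counts in the racetrack picture, not from a separate Grassmann reordering between a ``racetrack-natural'' order and a ``vertex-grouped'' order as you propose. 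Your mechanism double-counts: the Grassmann/winding identity already absorbs the full sign, and there is no further reordering to perform.

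A second, smaller gap: Lemma~\ref{lem:grassmann_equals_winding} is stated for loops in the base graph and does not apply verbatim to blowup loops. In $\widetilde{G}$ a loop of $\mathcal{M}\sqcup\tilde{\mathcal{R}}$ can traverse several consecutive edges lying over the \emph{same} reference edge $(b,w)$, i.e.\ wind around the racetrack oval multiple times, a phenomenon with no rank-1 analogue. One must check that each such extra lap contributes matching signs to the Grassmann product (an extra $-1$) and to the winding (an extra full turn), which is the content of the paper's Lemma~\ref{lem:grassmann_equals_winding_blowup}. Without this the loop-by-loop identity is not established for the blowup.
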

In the planar case, we have only one spin structure with $(-1)^{q_\eta(\mathcal{M} \sqcup \tilde{\mathcal{R}})} = +1$, so $\tr(\mathrm{web},\Phi) = \tr_\eta(\mathrm{web},\Phi)$, and the above theorem reduces to the original result of~\cite{dksWebs2022}.

Note that by the same arguments of Sec.~\ref{sec:rank_1_dimer_surfaces}, we can sum over spin structures weighted with certain signs to compute the sum over web traces, and the restrictions of web traces to terms lying in specific homology classes. In particular, for $\xi \in H_1(\Sigma,\Z_2)$, define
\begin{equation}
\begin{split}
    &\tr_{\xi}(\mathrm{web},\Phi) 
    := 
    \sum_{
        \substack{
            \text{matchings } \mathcal{M} \\
            \text{of blow-up graph with} \\
            \text{compatible edge multiplicities} \\
            \text{ and } [\mathcal{M} \sqcup \tilde{\mathcal{R}}] = [\xi] \in H_1(\Sigma,\Z_2)
        }
    }
    (-1)^t
    \prod_{ (b^{(i)}, w^{(j)} ) \in \mathcal{M}}
    (\phi_{b, w})_{i,j},
    \\
    &\quad \text{ where} \quad
    (-1)^t
    = 
    \prod_{\substack{\text{edges } e = (b,w) \\ \text{in graph} }} (-1)^{\sigma_e}
    \prod_{\substack{\text{vertices } v\\ \text{in graph} }} (-1)^{\sigma_v}.
\end{split}
\end{equation}
Again as in Sec.~\ref{sec:rank_1_dimer_surfaces}, we choose all Kasteleyn signs corresponding to any spin structure to have the same $\{\epsilon_{bw}\}$ on the same set of reference edges. Defining
\begin{equation}
    Z_\xi(\Phi) := \sum_{\substack{\text{web} \\ \text{configs}}} \, \prod_{\text{webs}} \tr_\xi(\text{web},\Phi)
    \quad\quad\quad\text{and}\quad\quad\quad
    Z(\Phi) := \pm \sum_{\xi} Z_\xi(\Phi) = \pm \sum_{\substack{\text{web} \\ \text{configs}}} \, \prod_{\text{webs}} \tr(\text{web},\Phi)
\end{equation}
we have by the same arguments as Sec.~\ref{sec:rank_1_dimer_surfaces} the following.
\begin{cor}
    \begin{equation}
        Z_\xi(\Phi)
        = 
        \pm \frac{1}{2^{2g}}
        \sum_{\substack{\eta \, \text{spin} \\ \text{structure}}}
        (-1)^{q_\eta(\xi)} \det(K_\eta(\Phi))
        \quad\quad\text{and}\quad\quad
        Z(\Phi) = 
        \pm \frac{1}{2^g} 
        \sum_{\substack{\eta \, \text{spin} \\ \text{structure}}}
        \mathrm{Arf}(\eta) \det(K_\eta(\Phi))
    \end{equation}
\end{cor}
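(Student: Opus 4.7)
The proof plan mirrors Section~\ref{sec:rank_1_dimer_surfaces} essentially verbatim, with Theorem~\ref{thm:dks_on_surfaces} playing the role of the rank-1 surface Kasteleyn theorem. The essential observation is that in Theorem~\ref{thm:dks_on_surfaces}, the $\eta$-dependence of $\det(K_\eta(\Phi))$ enters only through the factor $(-1)^{q_\eta([\mathcal{M}\sqcup\tilde{\mathcal{R}}])}$; the remaining data (the signs $(-1)^{\sigma_e}$, $(-1)^{\sigma_v}$ and the product of connection entries) depends only on the blowup matching $\mathcal{M}$ and the web configuration, not on the choice of spin structure. Consequently, once the global $\pm$ prefactor is fixed uniformly in $\eta$, the summand depends on $\eta$ only through the homology class $\xi = [\mathcal{M}\sqcup\tilde{\mathcal{R}}] \in H_1(\Sigma,\Z_2)$.

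The first step is therefore a sign-uniformity argument. As in the rank-1 case, I would choose all representatives $\{\epsilon_{bw}\}$ of different spin structures to agree on the reference edges of $\mathcal{R}$; the ambient sign in Theorem~\ref{thm:dks_on_surfaces} is then pinned down by evaluating the summand on the lifted reference match $\tilde{\mathcal{R}}$ (for which $q_\eta$ vanishes), and this pin-down produces the same sign across all $\eta$. Grouping the double sum over web configurations and compatible blowup matchings by $\xi$ then rewrites Theorem~\ref{thm:dks_on_surfaces} as
\begin{equation*}
\det(K_\eta(\Phi)) \;=\; \pm\sum_{\xi \in H_1(\Sigma,\Z_2)} (-1)^{q_\eta(\xi)}\, Z_\xi(\Phi).
\end{equation*}

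The second step is purely linear algebra. By Lemma~\ref{lem:inv_quad_form}, the matrix $A_{\eta,\xi} = (-1)^{q_\eta(\xi)}$ has inverse $(A^{-1})_{\xi,\eta} = 2^{-2g}(-1)^{q_\eta(\xi)}$, so applying $A^{-1}$ to the above identity yields the first half of the corollary,
\begin{equation*}
Z_\xi(\Phi) = \pm\frac{1}{2^{2g}} \sum_{\eta} (-1)^{q_\eta(\xi)} \det(K_\eta(\Phi)).
\end{equation*}
The second half follows by summing over $\xi$ and invoking the Arf identity $\sum_{\xi} (-1)^{q_\eta(\xi)} = 2^g\,\mathrm{Arf}(\eta)$ from Sec.~\ref{app:quad_form_arf}.

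The only real subtlety is ensuring the global sign in Theorem~\ref{thm:dks_on_surfaces} does not fluctuate with $\eta$; this is handled by the gauge-fixing convention on reference-edge Kasteleyn values, exactly as in the rank-1 derivation. All remaining steps are formal consequences of Theorem~\ref{thm:dks_on_surfaces} together with the two standard identities about $q_\eta$, and no new rank-$N$ combinatorics are needed — the rank-$N$ complexity has already been fully absorbed into the definitions of $Z_\xi(\Phi)$ and $\det(K_\eta(\Phi))$ by Theorem~\ref{thm:dks_on_surfaces}.
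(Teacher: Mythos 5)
Your proposal is correct and follows essentially the same route as the paper, which likewise derives the corollary by repeating the rank-1 surface argument: decompose $\det(K_\eta(\Phi))$ over homology classes $\xi$ using the fact that only the $(-1)^{q_\eta(\xi)}$ factor depends on $\eta$ (with the global sign fixed by matching Kasteleyn signs on reference edges), invert with Lemma~\ref{lem:inv_quad_form}, and sum over $\xi$ using the definition of the Arf invariant. No gaps.
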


To prove Theorem~\ref{thm:dks_on_surfaces}, we'll first discuss in Sec.~\ref{sec:racetrack_construction} an important `racetrack construction' which we use to draw loops of $\mathcal{M} \sqcup \tilde{\mathcal{R}}$ on the surface. Then in Sec.~\ref{sec:proof_main_thm}, we prove our main theorem above.

\subsection{Racetrack Construction} \label{sec:racetrack_construction}
A key tool we will be to draw the blowup graph directly on the surface to be able to draw loops in $\mathcal{M} \sqcup \tilde{\mathcal{R}}$ directly on the surface. We do this by introducing what will be referred to as a `racetrack construction'. In particular, given the reference dimer cover $\mathcal{R}$ and its lift in the blowup $\tilde{\mathcal{R}}$, we will find a way to embed loops in the plane so that all intersections within a matching are in general position and we can use the geometric content of Appendix~\ref{app:spin_structures} to analyze the sign factors, which are related to windings of loops in $\mathcal{M} \sqcup \tilde{\mathcal{R}}$. This projection is explained visually in Figs.~\ref{fig:racetrack},\ref{fig:racetrack_and_examples}, where Fig.~\ref{fig:racetrack} draws the racetrack graph and Fig.~\ref{fig:racetrack_and_examples} explains how to draw loops in the graph. 
We summarize the construction in words below. 

\begin{figure}[h!]
  \centering
  \includegraphics[width=\linewidth]{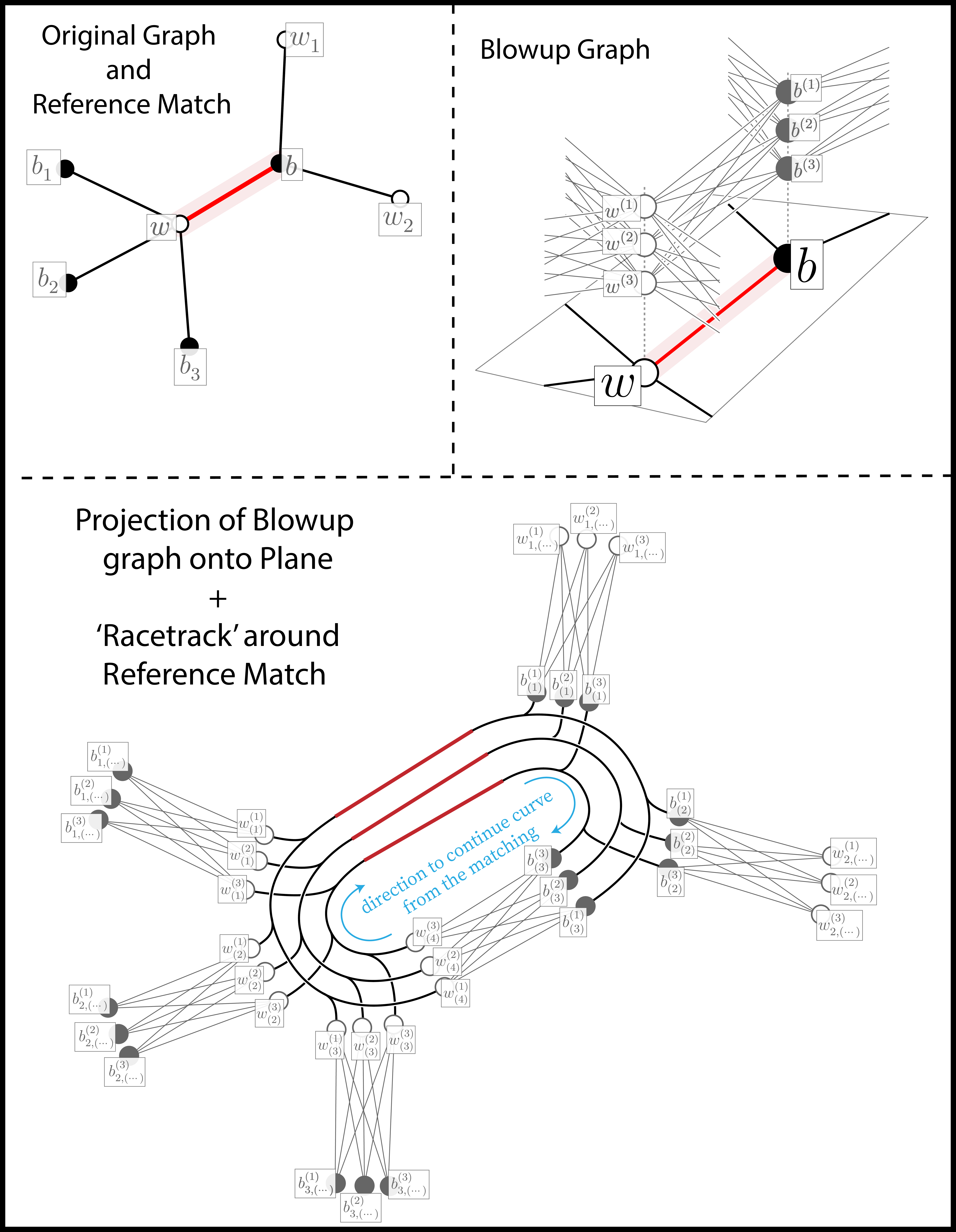}
  \caption{
    (Top-Right) Original graph (Top-Middle) Blowup graph (Bottom) Canonical projection of graph onto the plane, making a `racetrack' near each reference edge.
  }
  \label{fig:racetrack}
\end{figure}

\begin{figure}[h!]
  \centering
  \includegraphics[width=0.9\linewidth]{}
  \caption{
    Examples of loops between the reference matchings and a matching of the blowup graph. First, the loops in the blowup graph are highlighted different colors for different local attachments of loops. Next, the projections onto the racetrack on the plane are depicted with the same highlights. Some of the loops intersect each other and self-intersect on the racetrack.
  }
  \label{fig:racetrack_and_examples}
\end{figure}

\begin{enumerate}
    \item For each vertex of degree $m$, split the vertex into $m$ vertices so that the edges are drawn independently and disconnected from each other.
    \item Draw each (split up) edge as the complete bipartite graph $K_{N,N}$ which entails splitting each (new, split up) vertex into $N$ more vertices. At this point, a vertex of degree $m$ in the original graph gets split into $m \cdot N$ vertices. This complete, bipartite graph will be drawn parallel to the edge: vertices will be split up transversely to the edge and the edges of the bipartite graph will be drawn in the same direction as the edge.
    \begin{enumerate}
        \item If $b,w$ are the vertices of the original graph that the edge corresponds to, then the labels of the vertices of this new graph will be mapped onto vertices of the blowup graph. So, each $v^{(1)},\cdots,v^{(N)}$ for $v$ of degree $m$ will correspond to $m$ vertices of the racetrack graph. 
        \item Relative to the edge $b \to w$ (black-to-white), this is arranged so that the left-to-right order of the vertices corresponds to $b^{(1)},\cdots,b^{(N)}$ and $w^{(1)},\cdots,w^{(N)}$. As such, the blowup edge $(b^{(1)}, w^{(1)})$ is the leftmost edge and $(b^{(N)}, w^{(N)})$ is the rightmost, and the rest of the edges are drawn in between as in Fig.~\ref{fig:racetrack}.
    \end{enumerate}
    \item Next to each reference edge, there's a `racetrack' which connects all of the disjoint bipartite graphs above. Let $(b,w)$ be the reference edge in question. Say $b$ has multiplicity $m$ and $w$ has multiplicity $m'$. 
    \begin{enumerate}
        \item Refer to the $m \cdot N$ vertices of the racetrack graph corresponding to $b$ as $b^{(i)}_{(j)}$ for $i = 1 \cdots N$ and $j = 1 \cdots m$. Similarly, call the $m' \cdot N$ vertices corresponding to $w$ $w^{(i)}_{(j)}$ for $i = 1 \cdots N$ and $j = 1 \cdots m'$. The labelings are fixed as follows. Each $v^{(i)}_{\cdots}$ will correspond to the vertex $v^{(i)}$ of the blowup graph. 
        The subscript label `$(j)$' will be given to the vertex corresponding to the labeling of the edges in the ciliation with respect to $v$. In particular $b^{(\cdots)}_{(1)}, \cdots, b^{(\cdots)}_{(m)}$ would be the clockwise order at $b$ with $b^{(\cdots)}_{(m)}$ corresponding to the reference edge, and $w^{(\cdots)}_{(1)}, \cdots, w^{(\cdots)}_{(m')}$ would be the counterclockwise order at $w$ with $w^{(\cdots)}_{(m')}$ at the reference edge. 
        \item Draw $N$ edges connecting $b_{(m)}^{(i)}$ to $w_{(m')}^{(i)}$. In particular, draw them in a nested `oval' shapes on the right-side of $b \to w$ in a way avoiding intersections with the other edges. The nesting would mean the outside-to-inside order would be $b_{(m)}^{(1)} \to w_{(m')}^{(1)}, \cdots, b_{(m)}^{(N)} \to w_{(m')}^{(N)}$. These edges together with the original $b_{(m)}^{(i)} \to w_{(m')}^{(i)}$ edges form full concentric ovals, in the shape of a racetrack. Let us temporarily refer to these as $\mathrm{oval}_1, \cdots, \mathrm{oval}_m$
        \item Now, draw `on-ramps' to the racetrack as follows. For each group of white vertices $w^{(1)}_{(j)}, \cdots, w^{(N)}_{(j)}$, draw $N$ parallel edges that connect to the ovals so that $w^{(i)}_{(j)}$ is connected to $\mathrm{oval}_i$. Note that this will introduce a crossing between this new connection and $\mathrm{oval}_{i'}$ for $i' < i$. In Fig.~\ref{fig:racetrack}, we depict these as overcrossings where the oval is on top. Do the same for groups of black vertices $b^{(1)}_{(j)}, \cdots, b^{(N)}_{(j)}$.
        \item After repeating this procedure for every reference edge, the racetrack graph is completed.
    \end{enumerate}
    \item Now, we describe how to take a matching $\mathcal{M}$ of the blow-up graph and draw the loops of $\mathcal{M} \sqcup \tilde{\mathcal{R}}$ on the racetrack network.
    \begin{enumerate}
        \item At this point, we notice that the region of the ovals clockwise to the $w^{(\cdots)}_{(1)}$ on-ramps and counter-clockwise to the $b^{(\cdots)}_{(1)}$ on-ramps consists of $N$ parallel edges uninterrupted by additional on-ramps. We interpret these as the projection of the edges in the reference matching with which we'll compose the matching $\mathcal{M}$ to form loops.
        \item Given $\mathcal{M}$, one can first naturally draw these edges in the blowup graph in the complete bipartite graphs embedded in the plane. In particular, the blowup of $(b',w')$ is naturally identified with edges between the vertices $\{(b')^{(\cdots)}_{(j_1)}\}$ and $\{(w')^{(\cdots)}_{(j_2)}\}$, where $j_1$ (resp. $j_2$) is the labeling of the edge in the ciliation of $b'$ (resp. $w'$).
        \item Now, continue the edges via the on-ramps into the race-track. From the white $w^{(\cdots)}_{(\cdots)}$, continue them into the reference edges going clockwise. And from the black $b^{(\cdots)}_{(\cdots)}$, continue them into the reference edges going counterclockwise. Alternatively, one can think of the on-ramps coming from $b^{(\cdots)}_{(\cdots)}$ as `off-ramps' and continue the edges going clockwise around the racetrack, starting from the white on-ramps and off onto the black off-ramps.
        \item Because each $w^{(i)}$, $b^{(i)}$ in the blowup graph is part of exactly one edge in $\mathcal{M}$, every $\mathrm{oval}_i$ will have exactly one edge continued in from the white side and one edge continued out of the black side. In particular, all crossings between loops will be in general position, either between edges in the bipartite graphs or coming from crossings of the on-ramps with the ovals. We describe the crossings now in more detail. 
        \begin{enumerate}
            \item First, note that crossings in the racetrack graph corresponding to edge $e'=(b',w')$ are in correspondence with inversions in the permutation $\sigma_{e'}$, since the edge $(b')^{(i_1)}_{(\cdots)} \to (w')^{(i'_1)}_{(\cdots)}$ crosses $(b')^{(i_2)}_{(\cdots)} \to (w')^{(i'_2)}_{(\cdots)}$ iff (say WLOG $i_1 < i_2$) $i'_1 > i'_2$.
            \item Next, note that crossings corresponding to intersections of the on-ramps/off-ramps with the ovals correspond to inversions in $\sigma_v$. Recall that every $\mathrm{oval}_i$ has a part that is continued from exactly one of the $\{w^{(i)}_{(j)}\}_j$ and back to exactly one of the $\{b^{(i)}_{(j')}\}_{j'}$ and that each of these continuations crosses all of the $\mathrm{oval}_{i'}$ for $i' < i$. But, not all of these continued loops will cross a loop. For $j_1 < j_2$, we'll have that the loop coming from $w^{(i_1)}_{(j_1)}$ (resp. $b^{(i_1)}_{(j_1)}$) will cross that coming from $w^{(i_2)}_{(j_2)}$ (resp. $b^{(i_1)}_{(j_1)}$) iff $i_1 > i_2$, since $w^{(\cdots)}_{(j_1)}$ (resp. $b^{(\cdots)}_{(j_1)}$) lie clockwise (resp. counterclockwise) from $w^{(\cdots)}_{(j_2)}$ (resp. $b^{(\cdots)}_{(j_2)}$). 
            By definition, these correspond exactly to the inversions of $\sigma_v$.
            \item \label{item:blowup_loop_projection} As such, the total number of crossings of loops corresponding to $\mathcal{M}$ in the racetrack graph is $\sum_e \ell(\sigma_e) + \sum_v \ell(\sigma_v)$, where $\ell(\sigma)$ is the length (i.e. number of inversions) of the permutation $\sigma$.
        \end{enumerate}
    \end{enumerate}
\end{enumerate}

\subsection{Proof Of Theorem~\ref{thm:dks_on_surfaces}} \label{sec:proof_main_thm}

We first note that expanding the determinant of $K_\eta(\Phi)$ gives
\begin{equation}
\begin{split}
    \det(K_\eta(\Phi)) 
    &= 
    \sum_{\sigma \in \mathrm{Sym}(n \cdot N)} (-1)^{\sigma} (K_\eta(\Phi))_{{\tilde{b}}_1 {\tilde{w}}_{\sigma(1)}} \cdots (K_\eta(\Phi))_{{\tilde{b}}_{n \cdot N} {\tilde{b}}_{\sigma(n \cdot N)}}
    \\
    &=
    \sum_{
        \substack{
            \text{matchings } \mathcal{M} \\
            \text{of blow-up graph}
        }
    }
    (-1)^{\sigma} \epsilon_{\tilde{b}_1, \tilde{w}_{\sigma(1)}} \cdots \epsilon_{\tilde{b}_{n \cdot N}, \tilde{w}_{\sigma(n \cdot N)}}
    \prod_{ (b^{(i)}, w^{(j)} ) \in \mathcal{M}}
    (\phi_{b, w})_{i,j}
    .
\end{split}
\end{equation}
In the above, a base graph with $n$ vertices yields a blowup graph with $n \cdot N$ vertices. The $\tilde{b}_1, \cdots, \tilde{b}_{n \cdot N}$ $\tilde{w}_1, \cdots, \tilde{w}_{n \cdot N}$ denote a labeling of the blowup graph. 

As before, the fact that $K_\eta(\Phi)(b^{(i)},w^{(j)}) = 0$ whenever $(b,w)$ isn't an edge automatically organizes the determinant into a sum over matchings of the blowup graph, where a $(b^{(i)},w^{(j)})$ in the matching would correspond to one of the $(\tilde{b}_k,\tilde{w}_{\sigma(k)})$. Organizing the sums over blowup matchings into webs is done by collecting all terms with the same number of blowup edges in the matching above each edge. 

As such, the remaining issue is to show that each of the signs is coherent with the claim of Thm.~\ref{thm:dks_on_surfaces}, i.e. that
\begin{equation}
    (-1)^{\sigma} \epsilon_{\tilde{b}_1, \tilde{w}_{\sigma(1)}} \cdots \epsilon_{\tilde{b}_{n \cdot N}, \tilde{w}_{\sigma(n \cdot N)}}
    =
    \pm (-1)^s
\end{equation}
for a global $\pm$ sign.

The logic for this is completely analogous to the Rank-1 cases in Sec.~\ref{sec:rank_1_dimer}. Namely, we will be given Grassmann variables $\{ \psi_{w^{(j)}}, \overline{\psi}_{b^{(i)}} | b, w \,\, \text{black, white vertices} \,\, i,j = 1,\cdots,N \}$ corresponding to the vertices of the blowup graph.
Then, we want to show that for a matching $\mathcal{M}$ of the blowup graph, there is a uniform $\pm$ sign over all matchings so that
\begin{equation} \label{eq:higher_rank_grassmann_sign_claim}
    \prod_{(b^{(i)}, w^{(j)}) \in \mathcal{M}} 
    (-\overline{\psi}_{b^{(i)}} \epsilon_{b,w}    
    \psi_{w^{(j)}})
    =
    \pm (-1)^s \prod_{i=1}^{n} \psi_{b_{i}^{(1)}} \overline{\psi}_{w_{i}^{(1)}} \cdots \psi_{b_{i}^{(N)}} \overline{\psi}_{w_{i}^{(N)}} 
\end{equation}

First, we will note that if we choose the matching $\mathcal{M}$ to be the lift $\tilde{\mathcal{R}}$ of the reference matching associated to the cilia as in Fig.~\ref{fig:blowup_graph_and_match} and the racetrack construction, we'll have $(-1)^s = 1$, since that would mean each of the factors $(-1)^{\sigma_e} = (-1)^{\sigma_v} = (-1)^{q_\eta(\mathcal{M} \sqcup \mathcal{R})} = 1$. 

It turns out that Lemma~\ref{lem:grassmann_equals_winding} also applies to matchings on the blowup graph if we project loops onto the surface via the racetrack construction. We reformulate and show the lemma here. 
\begin{lem} \label{lem:grassmann_equals_winding_blowup}
    Let ${\bf e}_1 \to {\bf e}_2 \to \cdots \to {\bf e}_{2m-1} \to {\bf e}_{2m} \to {\bf e}_1$ be a loop of edges in the blowup graph as follows.
    First, any vertex appears at most once. 
    Next, the odd edges ${\bf e}_1,{\bf e}_3,\cdots$ should be taken to be in the blowup reference matching $\tilde{\mathcal{R}}$. 
    And, the even edges ${\bf e}_2,{\bf e}_4,\cdots$ should be \textit{away from} the blowup reference matching as part of the matching. 
    Denote ${\bf e}_{2k+1} = (w^{(i_{k})}_{\ell_k},b^{(i_{k+1})}_{\ell_{k+1}})$ as edges of a matching and ${\bf e}_{2k} = (b^{(i_k)}_{\ell_k},w^{(i_k)}_{\ell_{k}})$ as edges of the reference matching, with the first and last vertices in the loop identified. Then,
    \begin{equation}
        \left(
            \prod_{k=1}^{m}
            \left(-\epsilon_{{\bf e}_{2k}}   \overline{\psi}_{b^{(i_k)}_k} \psi_{w^{(i_k)}_{k+1}} \right)
            \left( \epsilon_{{\bf e}_{2k+1}} \overline{\psi}_{w^{(i_{k})}_k} \psi_{b^{(i_{k+1})}_{k+1}}     \right)
        \right)
        =
        (-1)^{1+\mathrm{wind(loop)}}
        \psi_{b_{\ell_1}} \overline{\psi}_{b_{\ell_1}}
        \psi_{w_{\ell_1}} \overline{\psi}_{w_{\ell_1}}
        \cdots
        \psi_{b_{\ell_1}} \overline{\psi}_{b_{\ell_1}}
        \psi_{w_{\ell_1}} \overline{\psi}_{w_{\ell_1}}
    \end{equation}
    where $\mathrm{wind(loop)}$ is determined by the vector field of Sec.~\ref{sec:geometric_setup} and the projection of the loop onto the surface is given as in the racetrack construction of Sec.~\ref{sec:racetrack_construction} and Figs.~\ref{fig:racetrack},\ref{fig:racetrack_and_examples}.
\end{lem}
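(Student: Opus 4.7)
The proof will closely mirror that of Lemma~\ref{lem:grassmann_equals_winding}, with the racetrack construction of Sec.~\ref{sec:racetrack_construction} providing the surface-level embedding needed to define the winding of a loop in the blowup. My first step is the same sign reduction: flipping any single Kasteleyn sign $\epsilon_{\mathbf{e}_k}$ appearing in the loop changes the LHS by $-1$, while on the RHS such a flip toggles a twist in the underlying vector field of Sec.~\ref{sec:geometric_setup} and thereby changes $\mathrm{wind(loop)}$ by $\pm 1$; both sides flip in tandem, reducing to the case where all $\epsilon_{\mathbf{e}_k} = +1$.

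Second, I would compute the Grassmann side in the all-$+1$ case by cyclically rearranging the product of the $2m$ Grassmann factors into the canonical vertex-ordered form. The combinatorics of this rearrangement depends only on the cyclic sequence of blowup vertices visited (each appearing exactly once by hypothesis), so it is structurally identical to the rearrangement performed in the proof of Lemma~\ref{lem:grassmann_equals_winding}: the result is the canonical product times the overall sign $(-1)^{m+1}$. Importantly, this step does not see the racetrack at all; it is purely a sign calculation on a cycle of paired Grassmann variables.

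The remaining task, where the new content of the blowup version lives, is to establish that under the racetrack projection the tangent winding satisfies $\mathrm{wind(loop)} \equiv m \pmod{2}$ in the all-$\epsilon = +1$ case. This is where I expect the main obstacle. My plan is to decompose the winding as a sum of local turning angles: straight runs through a $K_{N,N}$ bipartite subgraph and along an oval side contribute nothing; each traversed oval end contributes a half-turn whose sign is determined by the direction of travel; each on-ramp/off-ramp transition contributes a turn fixed by the ciliation. A careful bookkeeping, tracking the in-oval/out-oval pattern at each reference edge the loop uses, should reveal that these local contributions sum to an integer of parity $m$. An alternative, and I suspect cleaner, route is a homotopy argument: the racetrack loop can be deformed inside the surface (punctured in each face at the vector-field singularity) to a piecewise-linear loop in the base graph realizing the same alternating sequence of reference and non-reference edges as in Lemma~\ref{lem:grassmann_equals_winding}. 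Since the vector field has even-index singularities at all faces by construction, winding is a mod-$2$ homotopy invariant away from them, and the base-case equality $\mathrm{wind} \equiv m \pmod{2}$ transfers directly to the racetrack loop. Either route closes the argument by matching the two $(-1)^{m+1}$ signs.
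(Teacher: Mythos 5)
Your skeleton is right --- reduce to all $\epsilon = +1$, compute the Grassmann rearrangement to get $(-1)^{m+1}$, and then show the racetrack loop has winding of parity $m$ --- and the first two steps are sound. The gap is in the third step, and it sits exactly where the blowup lemma genuinely differs from Lemma~\ref{lem:grassmann_equals_winding}: the case of matching edges that lie \emph{above reference base edges}. If the loop enters the blowup of a reference edge $(b,w)$ and uses $r$ of its reference blowup edges interleaved with $r-1$ matching blowup edges over the same base edge, the projected racetrack loop winds $r-1$ extra times around the concentric ovals, and the projected walk in the base graph is not a simple loop (it traverses $(b,w)$ back and forth), so it does not satisfy the hypotheses of Lemma~\ref{lem:grassmann_equals_winding}. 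Your ``cleaner'' homotopy route fails here for two reasons: first, tangent winding relative to the vector field is a \emph{regular}-homotopy invariant (mod 2 across even-index singularities), not a homotopy invariant, so you cannot freely deform the racetrack loop to a base-graph loop and transfer the winding; second, the $r-1$ extra revolutions around an oval are precisely an obstruction to such a regular homotopy, since the interior of the racetrack near a reference edge contains no odd-index singularity to absorb them. Your first route (local turning-angle bookkeeping) can be made to work, but as written it is a plan rather than a proof, and the statement that the contributions ``should'' sum to parity $m$ is exactly the point that needs verification in the revisiting case.

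For comparison, the paper closes this gap with a local computation plus induction: a segment that revisits a reference edge $r$ times contributes an extra factor $(-1)^{r-1}$ to the Grassmann product (from the $r-1$ interleaved matching factors) and simultaneously an extra $(-1)^{r-1}$ to $(-1)^{\mathrm{wind}}$ (from the $r-1$ extra circuits of the racetrack oval); inducting on $r$ reduces to two base cases --- a loop with no matching edges above reference edges, which is handled verbatim as in Lemma~\ref{lem:grassmann_equals_winding}, and the doubled edge $(b^{(i)},w^{(i)})$, which projects to a single embedded circuit of $\mathrm{oval}_i$ with winding $\pm 1$. If you want to salvage your argument, you should either carry out the turning-angle bookkeeping explicitly for a revisiting segment, or adopt this local-cancellation-plus-induction structure; the global ``$\mathrm{wind} \equiv m$'' claim is true (one finds $\mathrm{wind} = s + \sum_j (r_j - 1) = m$, where $s$ counts matching edges not above reference edges and $r_j$ the visits to each reference edge), but it does not follow from deforming to the planar base case.
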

\begin{proof}
The proof follows from essentially the same logic as Lemma~\ref{lem:grassmann_equals_winding}.
For the first case of there being no edges in the matching that live on the blowup of a reference edge, the proof is exactly the same, since the computation between windings and Grassmann signs match locally going around the loop. 

A next case of doubled edges consisting of ${\bf e}_1 = {\bf e}_2 = (b^{(i)}, w^{(i)})$ over ${\bf e} = (b,w)$ have Grassmann contribution
\begin{equation*}
	\left(-\epsilon_{{\bf e}}   \overline{\psi}_{b^{(i)}} \psi_{w^{(i)}} \right)
	\left(\epsilon_{{\bf e}}   \overline{\psi}_{w^{(i)}} \psi_{b^{(i)}} \right)
	=
	+\psi_{b^{(i)}} \overline{\psi}_{b^{(i)}} \psi_{w^{(i)}}  \overline{\psi}_{w^{(i)}} 
\end{equation*}
which matches with the winding of the small non-self-intersecting loop this pair would project to, which is one loop around $\mathrm{oval}_i$ of the racetrack.

The most general case follows from considering what happens when there's consecutive edges on the same blow up edge
$$
{\bf e}_k        = ( (b')^{(i')}, w^{(i_1)} ) \, , \,
{\bf e}_{k+1}    = ( w^{(i_1)}, b^{(i_1)} ) \, , \,
{\bf e}_{k+2}    = ( b^{(i_1)}, w^{(i_2)} ) \, , \,
\cdots \, , \,
{\bf e}_{k+2r-1} = ( w^{(i_r)}, b^{(i_r)} ) \, , \,
{\bf e}_{k+2r}   = ( b^{(i_r)}, (w')^{(j')} )
$$
with $i_1, \cdots, i_r$ distinct and possibly $b' \neq b, w' \neq  w$. In particular, each ${\bf e}_{k+2r-1} = (w^{(i_k)},b^{(i_k)})$ is part of the reference matching and each of the $\{ (b^{(i_k)},w^{(i_{k+1})}) \}$ are not reference edges. This corresponds to a section of the loop with $2r-1$ consecutive edges 
$$
{\bf e}_{k+1}    = ( w^{(i_1)}, b^{(i_1)} ) \, , \,
{\bf e}_{k+2}    = ( b^{(i_1)}, w^{(i_2)} ) \, , \,
\cdots \, , \,
{\bf e}_{k+2r-1} = ( w^{(i_r)}, b^{(i_r)} )
$$
in the matching in a row being above the original reference edge.

Then, locally the Grassmann signs reorganize as  
\begin{equation*}
\begin{split}
	&\left(-\epsilon_{(b', w)}   \overline{\psi}_{(b')^{(i')}} \psi_{w^{(i_1)}} \right)
	\left(\epsilon_{{\bf e}}   \overline{\psi}_{w^{(i_1)}} \psi_{b^{(i_1)}} \right)
	\left(-\epsilon_{{\bf e}}   \overline{\psi}_{b^{(i_1)}} \psi_{w^{(i_2)}} \right)
	\cdots
	\left(\epsilon_{{\bf e}}   \overline{\psi}_{w^{(i_r)}} \psi_{b^{(i_r)}} \right)
	\left(-\epsilon_{(b,w')}   \overline{\psi}_{b^{(i_r)}} \psi_{(w')^{(j')}} \right)
	\\
	&=
	(-1)^{r-1} \epsilon_{(b', w)} \epsilon_{(b,w')}
	\overline{\psi}_{(b')^{(i')}} \psi_{w^{(i_1)}}
	\overline{\psi}_{w^{(i_1)}} \psi_{b^{(i_1)}}
	\overline{\psi}_{b^{(i_1)}} \psi_{w^{(i_2)}}
	\cdots
	\overline{\psi}_{w^{(i_r)}} \psi_{b^{(i_r)}}
	\overline{\psi}_{b^{(i_r)}} \psi_{(w')^{(j')}}.
\end{split}
\end{equation*}
This means that locally, the effect of going back-and-forth on the reference edge $r-1$ times gives an additional Grassmann sign of $(-1)^{r-1}$, i.e. $ (-1)^{r-1} \epsilon_{(b', w)} \epsilon_{(w, b)} \epsilon_{(b,w')} \cdots $ as opposed to $ \epsilon_{(b', w)} \epsilon_{(w, b)} \epsilon_{(b,w')} \cdots $. This matches the winding, because going back and forth in the blowup graph like this corresponds to winding around the racetrack ${r-1}$ times, which gives an additional $(-1)^{r-1}$ contribution to the winding factor of the loop. 

Now we can compare windings by iteratively removing extra edges above the reference matching (i.e. induction on $r$) from the reference. This reduction changes the $(-1)^{\mathrm{wind}}$ and the Grassmann sign both by $-1$.
The two base-cases we're left with will be a loop corresponding to a matching with no edges above the original reference, or a doubled-edge, corresponding to a matching with a single edge in the blowup reference. We've showed that for these, the winding and Grassmann signs are the same. So by induction we have that all winding and Grassmann signs are the same.
\end{proof}

Now, there is one more step. By this point, we have that the difference betweeen the sign of the reference matching and the matching of the blowup graph is $\prod_{\mathrm{loops}}(-1)^{1 + \mathrm{wind(loop)}}$. The next claim would give a proof of this web-trace formula.
\begin{lem}
	\begin{equation}
		(-1)^s = \prod_{\mathrm{loops}}(-1)^{1 + \mathrm{wind(loop)}}
	\end{equation}
\end{lem}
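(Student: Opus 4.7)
The plan is to unpack both sides and match them term by term using the results already established. Recall that by step~\ref{item:blowup_loop_projection} of the racetrack construction, the racetrack projection of the loops of $\mathcal{M} \sqcup \tilde{\mathcal{R}}$ has precisely
\[
    \#\mathrm{crossings} \;=\; \sum_e \ell(\sigma_e) + \sum_v \ell(\sigma_v)
\]
transverse crossings, and these crossings are in general position. Hence
\[
    \prod_e (-1)^{\sigma_e} \prod_v (-1)^{\sigma_v} \;=\; (-1)^{\#\mathrm{crossings}},
\]
where crossings include both self-intersections of individual loops and intersections between distinct loops of the configuration.

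Next, I would invoke the key geometric formula for the quadratic form $q_\eta$ reviewed in Appendix~\ref{app:quad_form_arf}: for a collection of immersed loops $\{\mathrm{loop}_i\}$ in general position on $\Sigma$,
\[
    q_\eta\!\left( \bigsqcup_i \mathrm{loop}_i \right)
    \;\equiv\; \sum_i \bigl(1 + \mathrm{wind}(\mathrm{loop}_i)\bigr) \;+\; \#\mathrm{crossings} \pmod 2,
\]
where the windings are computed against the vector field of Sec.~\ref{sec:geometric_setup} and the crossings again count all self- and mutual intersections. Applying this to $\xi := \mathcal{M} \sqcup \tilde{\mathcal{R}}$ embedded via the racetrack construction gives
\[
    (-1)^{q_\eta(\mathcal{M} \sqcup \tilde{\mathcal{R}})}
    \;=\; \prod_{\mathrm{loops}} (-1)^{1+\mathrm{wind(loop)}} \cdot (-1)^{\#\mathrm{crossings}}.
\]
Multiplying both sides by $(-1)^{\#\mathrm{crossings}} = \prod_e (-1)^{\sigma_e} \prod_v (-1)^{\sigma_v}$ (which is its own inverse) immediately yields the claim
\[
    (-1)^s \;=\; (-1)^{q_\eta(\mathcal{M} \sqcup \tilde{\mathcal{R}})}\prod_e (-1)^{\sigma_e} \prod_v (-1)^{\sigma_v} \;=\; \prod_{\mathrm{loops}} (-1)^{1+\mathrm{wind(loop)}}.
\]

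The main subtlety — and the only point requiring care — is making sure that the racetrack projection really gives an immersion in which every intersection is transverse and accounted for, so that the appendix formula for $q_\eta$ applies with exactly the crossing count produced by $\sigma_e$ and $\sigma_v$. This is guaranteed by how the racetrack was built: the bipartite $K_{N,N}$ blocks contribute the $\sigma_e$ crossings, the on-ramp/off-ramp crossings with the ovals contribute the $\sigma_v$ crossings, and no other intersections occur, because each $\mathrm{oval}_i$ is used by at most one incoming and one outgoing strand of the loop configuration. With that geometric input in hand, the argument is essentially bookkeeping: Lemma~\ref{lem:grassmann_equals_winding_blowup} already packages the local winding-versus-Grassmann comparison, and the present lemma is the global homological counterpart that assembles these local data into the quadratic form $q_\eta$.
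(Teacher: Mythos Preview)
Your argument is correct and follows essentially the same route as the paper: both identify $\prod_e(-1)^{\sigma_e}\prod_v(-1)^{\sigma_v}$ with $(-1)^{\#\mathrm{crossings}}$ via item~\ref{item:blowup_loop_projection}, and both then relate $\prod_{\mathrm{loops}}(-1)^{1+\mathrm{wind(loop)}}$ to $(-1)^{q_\eta(\mathcal{M}\sqcup\tilde{\mathcal{R}})}$ times $(-1)^{\#\mathrm{crossings}}$. The only presentational difference is that the paper derives your ``key geometric formula'' on the spot by iterating Lemma~\ref{lem:resolving_intersection} (each resolution flips $\prod(-1)^{1+\mathrm{wind}}$ by $-1$, and the fully resolved non-intersecting configuration gives $(-1)^{q_\eta}$ by definition), whereas you cite it as a known identity from Appendix~\ref{app:quad_form_arf}; strictly speaking the appendix only defines $q_\eta$ for non-intersecting collections and proves the quadratic refinement, so your formula for immersed loops with self-intersections is a (trivial) consequence of Lemma~\ref{lem:resolving_intersection} rather than something stated there verbatim.
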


\begin{proof}
    Recall we defined $(-1)^s = (-1)^{q_\eta(\mathcal{M} \sqcup \tilde{\mathcal{R}})} \prod_{v} (-1)^{\sigma_v} \prod_{e} (-1)^{\sigma_e}$. 
	
    Note that by item~\ref{item:blowup_loop_projection} in the discussion of the racetracks, the factor $\prod_{v} (-1)^{\sigma_v} \prod_{e} (-1)^{\sigma_e}$ of the sign is the total number of intersections and self-intersections of the loops between the matching and the reference in the racetrack graph in the projection of the blowup graph. 
    
    We use the fact (Lemma~\ref{lem:resolving_intersection}) that resolving an intersection changes the product $\prod_{\mathrm{loops}}(-1)^{1 + \mathrm{wind(loop)}}$ by $-1$. In proving that statement, we had chosen resolutions that change the number of loops by $\pm 1$ and left the total winding unaffected. 
    
    So, if we resolve each crossing, we'll get the equality that 
    \begin{equation}
        \prod_{\substack{\mathrm{original} \\ \mathrm{loops}}} (-1)^{1 + \mathrm{wind(loop)}}
        =
        \underbrace{
            \prod_{v} (-1)^{\sigma_v} \prod_{e} (-1)^{\sigma_e}
        }_{\substack{\text{number of intersections} \\ \text{on racetrack graph}}}
        \prod_{\substack{\mathrm{resolved} \\ \mathrm{loops}}}
        (-1)^{1 + \mathrm{wind(loop)}}
    \end{equation}
    But by construction, the now resolved loops have no intersections or self-intersections and lie in the same homology class as $\mathcal{M} \sqcup \tilde{\mathcal{R}}$. This means that
    \begin{equation}
        \prod_{\substack{\mathrm{resolved} \\ \mathrm{loops}}} (-1)^{1 + \mathrm{wind(loop)}}
        =
        (-1)^{q_\eta(\mathcal{M} \sqcup \tilde{\mathcal{R}})} 
    \end{equation}
    by the definition of $q_\eta(\cdots)$ (see Sec.~\ref{app:quad_form_arf}).
    So in total, we have
    \begin{equation}
        \prod_{\substack{\mathrm{original} \\ \mathrm{loops}}} (-1)^{1 + \mathrm{wind(loop)}}
        =
        \prod_{v} (-1)^{\sigma_v} \prod_{e} (-1)^{\sigma_e} \cdot (-1)^{q_\eta(\mathcal{M} \sqcup \tilde{\mathcal{R}})} 
        =
        (-1)^s
    \end{equation}
    which is what we claim.
\end{proof}

Note that Lemma~\ref{lem:grassmann_equals_winding_blowup} implies Eq.~\eqref{eq:higher_rank_grassmann_sign_claim} in the same way as the rank-1 case of Sec.~\ref{sec:rank_1_dimer}. This implies the theorem, so it is proved. 

\section{Discussion and Questions} \label{sec:discussion}
We formulated and proved a version of the web trace formula of~\cite{dksWebs2022} on higher genus surfaces, while elucidating the proof of the planar case. The main points were that the signs appearing in the Kasteleyn determinant can be interpreted in terms of the spin geometry interfacing with loops in the graph. The sign $\prod_{e}(-1)^{\sigma_e} \prod_{v} (-1)^{\sigma_v}$ that appeared in both the web trace and determinant, corresponding to a blowup matching, is interpreted as a factor of $-1$ for resolving each crossing of the loops of the blowup matching with a reference dimers drawn in the racetrack construction, while factors of $(-1)^{q_\eta(\xi)}$ come from the geometry of the resolved, nonintersecting loops.

We expect the analogous result for nonorientable surfaces to hold, which is illustrated for the rank-1 case in~\cite{cimasoniNonorientable2009}. The proof should go through exactly the same, except one is dealing with $\mathrm{Pin}^-$ structures rather than $\Spin$. Any factors of $\Z_2 = \{\pm 1\}$ coming from $(-1)^{1 + \mathrm{wind(loop)}}$ and $(-1)^{q_\eta(\xi)}$ will instead be replaced with analogous $\Z_4 = \{\pm 1, \pm i\}$ factors satisfying a quadratic property, and the $\mathrm{Arf}(\eta) \in \Z_2 \{\pm 1\}$ invariant gets replaced with the Arf-Brown-Kervaire invariant $\mathrm{ABK}(\eta) \in \Z_8 = \{e^{k \cdot \frac{2 \pi i}{8} }| k = 0 \cdots 7 \}$. The fact one adds a factor of $-1$ for each crossing will stay the same. Although, one may need to take care of implementing the racetrack construction and vector fields (in the non-orientable case the vector field gets replaced with a framing of $T\Sigma \oplus \det(T\Sigma)$) in the presence of orientation-reversing walls.

A key technical point of our discussion was the racetrack construction of Sec.~\ref{sec:racetrack_construction} that allowed us to embed in the plane loops between the matchings of the blowup graph and the reference dimer cover. Crucially, this construction relied heavily on a choice of reference dimer covering and all of our results were in terms of the positive ciliation coming from the reference. It would be interesting to see if a similar construction could be meaningful or applicable for a general cililation that doesn't come from a reference matching.

In a similar vein, our methods do not directly give a proof of the analogous results of~\cite{kenyon2023higherrankdimermodels} who study `higher-rank' dimer models where multiplicities of vertices in the blowup graph are not a fixed $N$ but rather vary between vertices. We note that our racetrack construction bears a resemblance to the `gadget' construction of~\cite{kenyon2023higherrankdimermodels}, so we are curious if they are related. However, our construction relies on reference matchings whereas theirs do not so any connection isn't immediately obvious. 

In~\cite{kenyon2023planar3websboundarymeasurement}, certain topological     `connection probabilities' of $\SL_3$ webs were computed in the scaling limit of the upper-plane, generalizing $\SL_2$ results of~\cite{kenyonWilson2011}. And in~\cite{Dubedat2018doubleDimersConformalLoopEnsemblesIsomonodromicDeformations,tata20232dfermionsstatisticalmechanics}, scaling limits of the planar Kasteleyn determinants with respect to connections flat away from a set of punctures were studied and related to `isomonodromic tau functions'. In general, it could be interesting to formulate scaling limit results of similar quantities like topological probabilities on higher genus surfaces or flat connections on punctured surfaces, most naturally the torus. For example, isomonodromic tau functions on the torus have been studied intensely. 

In these veins, it is expected that the dimer model in the scaling limit is the Dirac fermion. As such, studying the Conformal Field Theoretic properties of the scaling limit of these higher-rank dimer models is desirable. Results about CFT interpretations of web connection probabilities are given in~\cite{lafay2024degenerateconformalblocksw3} for $\SL_3$ and~\cite{peltolaWu2019} for $\SL_2$ and are related to the $W_3$ algebra at $c=2$ and $W_2 = \mathrm{Virasoro}$ algebra at $c = 1$. In general, it is expected that the rank $N$ dimer model is related to $W_N$ algebras at $c=N-1$~\cite{lafayLeWN}. This would fit in with the paradigm of the dimer model scaling to the Dirac fermion because the $W_N$ algebra at $c=N-1$ is known to have a natural construction in terms of $N$ free fermions (see e.g.~\cite{gavrMarshFreeFermionWN2016}), which in our interpretation corresponds to a rank-$N$ dimer model.

Higher genus surfaces are important in CFT, since cutting and gluing give important information about a theory's spectrum and topological properties (c.f.~\cite{bigYellowCFT1997}). In particular, partition functions on the torus are fundamental due their simplicity in computation and important information they give. Moreover, the Dirac fermion on higher genus surfaces is known to be related to the scaling limit of the dimer model and interplays with the various spin structures, see~\cite{dijkgraaf2009} for an exposition of the torus partition function and also~\cite{basok2024dimersriemannsurfacescompactified} for recent results about the higher-genus case. As such, the results of this paper can hopefully be used to study scaling limits of twisted Kastleyn determinants and web combinatorics on the torus and compare them to CFT quantities relating rank-$N$ dimers and rank-$N$ Dirac fermions.   

\section*{Acknowledgments}
We acknowledge the Yale Math Department for graduate student funding. We also thank Dan Douglas, Rick Kenyon, Nick Ovenhouse, and Sam Panitch for related discussions that motivated us to formulate the main results and write this paper. 

\appendix

\section{Grassmann Variables} \label{app:grassmann}

Grassmann integration is a compact way to encode antisymmetric structures, such as determinants. Let us now review it here.

The \textit{Grassmann Algebra} over the set $\{1,\cdots,n\}$ is generated by a set of anticommuting variables $\psi_1, \psi_2, \cdots, \psi_n$ over $\C$. This means
\begin{equation}
    \{ \psi_i, \psi_j \}
    :=
    \psi_i \psi_j + \psi_j \psi_i = 0 \quad \text{ for all } i,j,
    \quad\quad\quad\quad \text{ so } \quad\quad\quad\quad
    \psi_i^2 = 0 \quad \text{ for all } i.
\end{equation}
A general \textit{Grassmann function} in these variables can be represented by a polynomial with terms of order $\le 1$ in each variable, like
\begin{equation}
    f(\psi_1,\cdots,\psi_n) = \sum_{k=0}^{n} \sum_{1 \le i_1 < \cdots < i_k \le n} c_{i_1,\cdots,i_k} \psi_{i_1} \cdots \psi_{i_k}.
\end{equation}
A function is \textit{Grassmann even (resp. odd)} if all monomials consist of an even (resp. odd) number of variables. Grassmann even functions commute with all other variables, and odd functions anticommute with other odd functions.

For example, a general function for $n=2$ looks like
\begin{equation}
    f(\psi_1,\psi_2) 
    = 
    c + c_1 \psi_1 + c_2 \psi_2 + c_{12} \psi_1 \psi_2
    =
    c + c_1 \psi_1 + c_2 \psi_2 - c_{12} \psi_2 \psi_1
\end{equation}
Furthermore, introduce more anticommuting variables $d\psi_1,\cdots,d\psi_n$ satisfying
\begin{equation}
    \{ d\psi_i, d\psi_j \}
    =
    \{ d\psi_i, \psi_j \}
    = 0 \quad \text{ for all } i \neq j
\end{equation}
which play the role of differentials for integration. 
Now, consider a nonzero monomial $d\psi_{j_1} \cdots d\psi_{j_\ell} \, \psi_{i_1} \cdots \psi_{i_k}$ in these variables (so that $\{j_1,\cdots,j_\ell\}$ are distinct and $\{i_1,\cdots,i_k\}$ are distinct). For each $i$ there is a pairing $\int_i$ that sends
\begin{equation}
\begin{split}
    &\int_i d\psi_{j_1} \cdots d\psi_{j_{\ell-1}} \underbrace{ d\psi_{i} \, \psi_{i} } \psi_{i_1} \cdots \psi_{i_k} 
    =
    d\psi_{j_1} \cdots d\psi_{j_{\ell-1}} \, \psi_{i_1} \cdots \psi_{i_k}
    \\
    &\int_i d\psi_{j_1} \cdots d\psi_{j_\ell} \, \psi_{i_1} \cdots \psi_{i_k} = 0
    \quad \text{ if } i \not\in \{i_1,\cdots,i_k\} \text{ or } i \not\in \{j_1,\cdots,j_{\ell}\}.
\end{split}
\end{equation}
By using antisymmetry to put variables in the right order, these integrals uniquely determine the integral acting on any monomial, thus on any Grassmann function. Henceforth, we will denote the integral $\int d\psi_{j_1} \cdots d\psi_{j_{\ell}} f$
to refer to the integral
$\int_{j_1} \cdots \int_{j_{\ell}} d\psi_{j_1} \cdots  d\psi_{j_{\ell}} f$ so that the indices of the $d\psi_{\cdots}$ variables determine which variables get integrated over. As examples, we'll have
\begin{align*}
    &\int d\psi = 0, 
    &
    &\int d\psi_1 d\psi_2 \, \psi_2 \psi_1 = \int d\psi_1 \, \psi_1 = 1,
    &
    &\int d\psi_1 \, \psi_2 \psi_1 = -\psi_2,
    \\
    &\int d\psi \psi = 1 = -\int \psi d\psi,
    &
    &\int d\psi_1 d\psi_2 \, \psi_1 \psi_2 = -\int d\psi_2 d\psi_1 \, \psi_1 \psi_2 = -1,
    &
    &\int d\psi_1 d\psi_2 d\psi_3 d\psi_4 \, \psi_2 \psi_1 = 0.
\end{align*}
Now, let $\psi_1,\cdots,\psi_n$, $\overline{\psi}_1,\cdots,\overline{\psi}_n$ by two independent sets of Grassmann variables that all mutually anticommute and  $d\psi_1,\cdots,d\psi_n$, $d\overline{\psi}_1,\cdots,d\overline{\psi}_n$ be their corresponding differentials that also anticommute with each other and the $\{\psi, \overline{\psi}\}$. Then, we have the following fact. 
\begin{prop} \label{prop:grassmanInt_equals_det}
For any matrix $A \in \mathrm{Mat}_{n \times n}$
\begin{equation}
    \int 
    \left( \prod_{i=1}^{n} d\overline{\psi}_i d{\psi}_i \right) 
    e^{- \sum_{i,j} \overline{\psi}_i A_{ij} \psi_j}
    = 
    \det(A).
\end{equation}
\end{prop}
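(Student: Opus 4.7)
The plan is to expand the exponential in its power series, use Grassmann nilpotence ($\psi_i^2 = \overline{\psi}_i^2 = 0$) to truncate, and reorganize the surviving terms into the Leibniz formula for $\det(A)$. Since each summand $\overline{\psi}_i A_{ij} \psi_j$ is Grassmann-even, the series $e^{-\sum_{i,j} \overline{\psi}_i A_{ij} \psi_j} = \sum_{k \ge 0} \frac{(-1)^k}{k!} \left( \sum_{i,j} \overline{\psi}_i A_{ij} \psi_j \right)^k$ terminates automatically at $k = n$, and in fact only the top term $k=n$ contributes nonzero after integration against $\prod_i d\overline{\psi}_i d\psi_i$, since the integration pairing annihilates monomials that are missing any $\psi_i$ or $\overline{\psi}_i$ variable.

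Next, I would expand the $k=n$ term. Using that $\overline{\psi}_i A_{ij} \psi_j$ is Grassmann-even and hence commutes with everything, the multinomial expansion gives
\begin{equation*}
    \frac{(-1)^n}{n!} \sum_{i_1, j_1, \ldots, i_n, j_n} \left( \overline{\psi}_{i_1} A_{i_1 j_1} \psi_{j_1} \right) \cdots \left( \overline{\psi}_{i_n} A_{i_n j_n} \psi_{j_n} \right).
\end{equation*}
Nilpotence forces $\{i_1, \ldots, i_n\}$ and $\{j_1, \ldots, j_n\}$ each to be a permutation of $\{1, \ldots, n\}$. Using commutativity of the Grassmann-even pair $\overline{\psi}_i A_{ij} \psi_j$, each of the $n!$ orderings of the $i$-indices gives the same contribution, so we may fix $i_k = k$ and cancel the $\frac{1}{n!}$, leaving a sum over permutations $\sigma \in \mathrm{Sym}(n)$ with $j_k = \sigma(k)$.

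Then I would carefully reorder each monomial $\overline{\psi}_1 \psi_{\sigma(1)} \overline{\psi}_2 \psi_{\sigma(2)} \cdots \overline{\psi}_n \psi_{\sigma(n)}$ into the canonical form $\overline{\psi}_1 \psi_1 \overline{\psi}_2 \psi_2 \cdots \overline{\psi}_n \psi_n$ by permuting the $\psi_{\sigma(k)}$'s; since the pairs $\overline{\psi}_j \psi_j$ are Grassmann-even and so are the blocks being moved past, the sign cost of rearranging the $\psi_{\sigma(k)}$'s into the order $\psi_1, \psi_2, \ldots, \psi_n$ is exactly $(-1)^{\sigma}$. Combined with the normalization $\int d\overline{\psi}_i d\psi_i \, \overline{\psi}_i \psi_i = -\int d\overline{\psi}_i d\psi_i \, \psi_i \overline{\psi}_i = -1$, the $n$ factors produce an overall $(-1)^n$ which cancels the explicit $(-1)^n$ from the exponential, yielding
\begin{equation*}
    \int \prod_{i=1}^n d\overline{\psi}_i d\psi_i \, e^{-\sum_{i,j} \overline{\psi}_i A_{ij} \psi_j} = \sum_{\sigma \in \mathrm{Sym}(n)} (-1)^\sigma \prod_{k=1}^n A_{k, \sigma(k)} = \det(A).
\end{equation*}

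The main obstacle is purely bookkeeping: getting the signs from the Grassmann reordering to align correctly with the Leibniz sign $(-1)^\sigma$, and making sure the sign conventions for the ordered differentials $d\overline{\psi}_i d\psi_i$ are consistent with the integration rules stated earlier in the appendix. Once the even-parity commutation of $\overline{\psi}_i A_{ij} \psi_j$ is used to fix $i_k = k$, the remaining sign computation is a single transposition count that reproduces $\mathrm{sgn}(\sigma)$.
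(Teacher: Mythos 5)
Your proof is correct and follows essentially the same route as the paper's: expand the exponential, observe that only top-degree monomials containing each $\psi_i$ and $\overline{\psi}_i$ exactly once survive the integration, and reorder to extract the Leibniz sign $(-1)^\sigma$. The only cosmetic difference is that you expand via the power series and cancel the $\tfrac{1}{n!}$ against the $n!$ orderings of the $i$-indices, whereas the paper factors $e^{-\sum_{i,j}\overline{\psi}_i A_{ij}\psi_j}=\prod_{i,j}(1-\overline{\psi}_i A_{ij}\psi_j)$; both sign bookkeepings are consistent with the integration conventions of the appendix.
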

\begin{proof}
First, note that the order of the differentials $d\overline{\psi}_i d{\psi}_i$ doesn't matter since these are even. 
Next, note that since the terms of $-\sum_{i,j} \overline{\psi}_i A_{ij} \psi_j$ all commute, we can write
\begin{equation}
    e^{- \sum_{i,j} \overline{\psi}_i A_{ij} \psi_j}
    =
    \prod_{i,j} e^{ - \overline{\psi}_i A_{ij} \psi_j }
    =
    \prod_{i,j} \left( 1 - \overline{\psi}_i A_{ij} \psi_j \right),
\end{equation}
where the last equality follows from noting that $(\overline{\psi}_i \psi_j )^2 = 0$.
Now after expanding, we note that the only terms the only monomials contributing to the integral will have exactly one of each $\psi_1,\cdots,\psi_n$ and $\overline{\psi}_1,\cdots,\overline{\psi}_n$. These mean we can write 
\begin{equation}
\begin{split}
    &\int 
    \left( \prod_{i=1}^{n} d\overline{\psi}_i d{\psi}_i \right) 
    e^{- \sum_{i,j} \overline{\psi}_i A_{ij} \psi_j}
    \\
    &= 
    \int 
    \left( \prod_{i=1}^{n} d\overline{\psi}_i d{\psi}_i \right) 
    \sum_{\sigma \in \mathrm{Sym}(n)}
    \left( - \overline{\psi}_1 A_{1 \sigma(1)} \psi_{\sigma(1)} \right) 
    \left( - \overline{\psi}_2 A_{2 \sigma(2)} \psi_{\sigma(2)} \right) 
    \cdots
    \left( - \overline{\psi}_n A_{n \sigma(n)} \psi_{\sigma(n)} \right)
    \\
    &= 
    \sum_{\sigma \in \mathrm{Sym}(n)} 
    \int 
    \left( d\overline{\psi}_n d{\psi}_n \cdots  d\overline{\psi}_1 d{\psi}_1 \right) 
    \left( \psi_{\sigma(1)} \overline{\psi}_1  \cdots \psi_{\sigma(n)} \overline{\psi}_n \right)
    A_{1 \sigma(1)} \cdots A_{n \sigma(n)}
    \\
    &= 
    \sum_{\sigma \in \mathrm{Sym}(n)} (-1)^{\sigma}
    \int 
    \left( d\overline{\psi}_n d{\psi}_n \cdots  d\overline{\psi}_1 d{\psi}_1 \right) 
    \left( \psi_{1} \overline{\psi}_1  \cdots \psi_{n} \overline{\psi}_n \right)
    A_{1 \sigma(1)} \cdots A_{n \sigma(n)}
    \\
    &= 
    \sum_{\sigma \in \mathrm{Sym}(n)} (-1)^{\sigma}
    A_{1 \sigma(1)} \cdots A_{n \sigma(n)}
    = \det(A).
\end{split}
\end{equation}
\end{proof}

\section{Spin Structures} \label{app:spin_structures}
We review the notion of spin structures in this section of the Appendix. The next paragraphs are a general summary of spin structures on manifolds. However, most of the necessary material for surfaces is explained in Sec.~\ref{sec:spin_structures_surfaces} and is mostly self-contained there with some references to the next paragraphs. The reader only interested in surfaces can thus skip to Sec.~\ref{sec:spin_structures_surfaces} and use the next paragraphs as a reference for how the surface language connects to more general notions regarding spin structures.

Recall that $\SO(n)$ is always endowed with a double-cover $\Spin(n)$. Formally, a spin structure on an oriented $n$-manifold $M$ is a lift of the principal $\SO(n)$-bundle $P_{\SO}$ associated to the tangent bundle $TM$ to a principal $\Spin(n)$-bundle $P_{\Spin}$ together with a map $P_{\Spin} \to P_{\SO}$ that restricts to the two-fold covering map $\Spin(n) \to \SO(n)$ on each fiber. Spin structures are well-known to exist iff the second Stiefel-Whitney class of a manifold vanishes in cohomology and to be in one-to-one correspondence with $H^1(M,\Z_2)$. 

An important case will be spin structures on 1-manifolds,  where $\SO(1)$ is the trivial group and $\Spin(1) = \Z_2$. Without loss of generality, we can consider the 1-manifold $S^1$. In this case, there are exactly two principal $\Spin(1) = \Z_2$ bundles, the disconnected trivial bundle $S^1 \times \Z_2$ and the Möbius bundle. We often refer to the bundle $S^1 \times \Z_2$ as the `periodic spin structure' and the Möbius bundle as the `antiperiodic spin structure'. This is because the monodromy of the only possible parallel transport on $S^1 \times \Z_2$ will be $+1$, which can be thought of as `periodic boundary conditions' on the circle. And the monodromy around the Möbius bundle will be $-1$, corresponding to `antiperiodic boundary conditions'. 

A more concrete description like the one used in this paper can be given geometrically. We follow the exposition of~\cite{scorpan2005wild}.  \textit{In short, a spin structure can be thought of as a consistent assignment of (mod 2) winding numbers to framed loops which is invariant under regular homotopy in the manifold.} We will see that this is equivalent to a choice of induced spin structure, or boundary conditions, on each loop. The winding is defined in terms of in terms of a frame of vector fields $v_1, \cdots, v_n$ on the manifold whose orientation $v_1 \wedge \cdots \wedge v_n$ agrees with the manifold everywhere except on a possibly codimension-2 submanifold where the frame becomes singular (i.e. where $v_1 \wedge \cdots v_n = 0$). This frame must vanish in a certain way with respect to the codimension-2 submanifold, which we describe soon. In particular, the relative framing between the framed loop and $v_1, \cdots, v_n$ will give a map $R: \mathrm{loop} \to \GL(n)$. Such an $R$ can be identified with an element of $\pi_1(\GL(n))$, which is $\Z$ for $n=2$ and $\Z_2$ for $n>0$ and will be precisely the definition of the (mod 2) winding number (upon (mod 2) reduction for $n=2$). Now we identify $R$ with an induced spin structure. First note that an element of $\pi_1(\GL(n))$ corresponds exactly to an element of $\pi_1(\SO(n))$ since $\SO(n)$ is a deformation retraction of $\GL(n)$. Then note that any element of $\pi_1(\SO(n))$ lifts to a curve in $\Spin(n)$ starting at $1$ and ending at $\pm 1$. The final position $\pm 1$ will be the induced spin structure, where $+1$ corresponds to the periodic/trivial bundle and $-1$ to the antiperiodic/Möbius bundle.
For this winding number to be regular-homotopy invariant, we will need a condition that this (mod 2) winding stays the same upon a moving across the aforementioned codimension-2 submanifold of the singular framing. This condition can be interpreted as the singularity of $v_1 \wedge \cdots \wedge v_n$ having an `even index'. 

We say two spin structures are equivalent if they assign the same windings to every framed curve. In general, there are many relations between windings of framed curves. For example, twisting the framing by a unit will change the winding by $\pm 1$. And in general, making local changes to the curve will affect the induced spin structure in a way that doesn't depend on the vector field used to define the spin structure. In fact, if one knows these windings for framed curves in some basis of the $\Z_2$ homology of a manifold, then that data determines the windings on all other curves. In addition, one can find vector fields that assign \textit{any} induced spin structure to any basis. These statements are one perspective on the fact that spin structures are in correspondence with $H^1(M,\Z_2)$.

In addition, there is a natural $H^1(M,\Z_2)$ action on spin structures that connects any two spin structures. Let $a \in H^1(M,\Z_2)$ and $\eta$ be a spin structure. Then, the spin structure $a \cdot \eta$ assigns induced spin structures to framed loops as follows. Say that $(-1)^{q_\eta(\mathrm{loop})} = +1$ if the loops is induced with an antiperiodic spin structure by $\eta$ and $(-1)^{q_\eta(\mathrm{loop})} = -1$ if it's induced with a periodic one. \footnote{This sign convention, seemingly at odds with the naming of periodic and antiperiodic, is meant to parallel an important construction for surfaces, detailed in Sec.~\ref{app:quad_form_arf}.}
Then, the action on induced spin structures can be described as 
\begin{equation} \label{eq:twist_spinStruct}
    (-1)^{q_{a \cdot \eta}(\mathrm{loop})} = (-1)^{\braket{a,\mathrm{loop}}} (-1)^{q_\eta(\mathrm{loop})}
\end{equation}
where $\braket{\cdot,\cdot} : H^1(M,\Z_2) \times H_1(M,\Z_2) \to \Z_2$ is the cohomology-homology pairing.
This can be seen in the vector field picture as follows. It is known that every class in $H^1(M,\Z_2)$ is dual to a codimension-1 submanifold. The vector field frame associated to $a\cdot\eta$ is constructed by twisting the frame by $360^\circ$ going across the dual of $a$. This will have exactly the effect of changing the endpoint of the lift of the endpoint $\pm 1$ of the path to $\mathrm{Spin(n)}$ by $-1$ for every time the curve passes the dual for $a$. This precisely corresponds to the Eq.~\eqref{eq:twist_spinStruct}. See Fig.\ref{fig:twisting_vect_field_spin_struct} for an illustration in the $n=2$ case.

\subsection{Spin Structures on Surfaces} \label{sec:spin_structures_surfaces}

On an oriented surface, the picture is even more concrete and simpler. In particular, the orientation of a manifold assigns a canonical framing to every curve, so we don't need to worry about framing curves. In addition, the vector field framing $v_1, v_2$ is essentially determined by $v_1$, since the relevant windings can be computed using $v_1$ and the orientation.

We will see that a spin structure on an oriented surface can be identified with a vector field with all even index singularities in the surface. As we said above, this vector field has the functional property of determining induced spin structures on immersed curves. Specifically for a loop, the winding number $\mathrm{wind(loop)}$ (i.e. the number of full rotations the vector field makes relative to the curve's tangent) of the vector field relative to the tangent of the curve determines this. We have
\begin{equation*}
    \text{induced spin structure}(\mathrm{loop})
    =
    \begin{cases}
        \text{antiperiodic}, &\quad\text{ if } (-1)^{\mathrm{wind(loop)}} = -1
        \\
        \text{periodic}, &\quad\text{ if } (-1)^{\mathrm{wind(loop)}} = +1
    \end{cases}.
\end{equation*}
We can think of this induced spin structure as coming from halving the winding of the vector field. See Fig.~\ref{fig:loops_and_windings}. 

Now it will be useful to explicitly define $\mathrm{wind(loop)}$ and the index of a singularity. The vector field and the loop define a relative angle $\alpha(\theta) \in \R$ for a choice of $\theta \in [0,2\pi]$ parameterizing the curve. Crucially, we have that $\alpha(\theta)$ lives in $\R$ rather than $(0,2\pi)$ because we need it to be a continuous function. The choice of $\alpha(0)$ can be shifted by an arbitrary element of $2 \pi \Z$ so is not canonical. However, $\frac{d \alpha}{d \theta}$ is well-defined without extra choices. We'll have:
\begin{equation} \label{eq:winding_def}
    \mathrm{wind(loop)} = \frac{1}{2\pi} \int_{0}^{2\pi} \frac{d \alpha}{d \theta} d\theta  = \frac{1}{2\pi}(\alpha(2\pi) - \alpha(0)).
\end{equation}
Now, given a singularity of the vector field, the index is defined in terms of `$\mathrm{loop}$' being a small counterclockwise loop around the singularity, namely:
\begin{equation}
    \mathrm{index(singularity)} = \mathrm{wind(loop)} + 1.
\end{equation}

These windings will satisfy several key properties, also depicted in Fig.~\ref{fig:loops_and_windings}.
\begin{lem} \label{lem:winding_across_sing}
    Moving the curve across an index-$k$ singularity changes the winding of the curve by $\pm k$.
\end{lem}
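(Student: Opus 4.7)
The plan is to localize the deformation near the singularity and identify the discontinuity in winding with the period of an angular $1$-form. First, arrange the deformation $\gamma_t$ so that it is supported in a small disk $D$ around the singularity $p$: outside $D$ the curves before and after the crossing agree pointwise and contribute nothing to the difference in winding. Let $\sigma_-$ and $\sigma_+$ denote the portions of the curve inside $D$ just before and just after the deformation sweeps $\gamma$ through $p$. These arcs share their endpoints on $\partial D$ together with their tangent directions there, but they pass on opposite sides of $p$, so their concatenation $\sigma_+\cdot \sigma_-^{-1}$ is homotopic in $D\setminus\{p\}$ to a small counterclockwise loop $\pm C_p$ encircling $p$ exactly once.

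Next, introduce a nonvanishing auxiliary vector field $W$ on $D$ (say $W=\partial_x$ in local coordinates) and decompose the relative angle of Eq.~\eqref{eq:winding_def} as $\alpha = \phi - \beta$, where $\phi$ is the angle of $V$ relative to $W$ and $\beta$ is the angle of the curve's tangent relative to $W$. The winding then splits as
\begin{equation*}
\mathrm{wind}(\sigma_{\pm}) \;=\; \frac{1}{2\pi}\!\int_{\sigma_{\pm}}\! d\phi \;-\; \frac{1}{2\pi}\!\int_{\sigma_{\pm}}\! d\beta.
\end{equation*}
The crucial observation is that $\beta$ extends smoothly across all of $D$, because it depends only on the tangent of $\gamma$, which stays well-defined even when $\gamma$ momentarily passes through $p$; therefore $\int_{\sigma_t}d\beta$ is continuous in the deformation parameter $t$ and contributes nothing to the jump. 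The form $d\phi$, by contrast, is a closed $1$-form on the punctured disk $D\setminus\{p\}$ whose integral over $\sigma$ depends only on the relative homotopy class of $\sigma$ in $D\setminus\{p\}$, and this is precisely where the discontinuity lives.

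To finish, compute the period of $d\phi$ around $p$ directly. Using $\mathrm{index}(p)=\mathrm{wind}(C_p)+1$ together with the fact that the tangent of $C_p$ contributes $\oint_{C_p}d\beta=+2\pi$ while $W$ has zero winding around $p$ (being nonvanishing on $D$), one sees $\oint_{C_p}d\phi = \oint_{C_p}d\alpha + \oint_{C_p}d\beta = 2\pi(k-1)+2\pi = 2\pi k$. Consequently $\int_{\sigma_+}d\phi - \int_{\sigma_-}d\phi = \pm 2\pi k$, and dividing by $2\pi$ gives
\begin{equation*}
\mathrm{wind}(\sigma_+) - \mathrm{wind}(\sigma_-) \;=\; \pm k,
\end{equation*}
which is the claim. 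The main obstacle is simply bookkeeping: the $\pm$ sign is fixed by the direction in which $\gamma$ is being pushed across $p$, and one must verify that the continuity of $\int d\beta$ at the crossing instant genuinely holds (it does, because the parametrization of $\gamma_t$ remains smooth and its tangent never vanishes, even though $V$ does). The essential input is purely topological, namely that the period of the angular form $d\phi$ around $p$ equals $2\pi$ times the index.
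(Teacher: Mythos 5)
Your proof is correct, and it is more than the paper provides: the paper simply declares this lemma ``straightforward'' and illustrates it in a figure for $k=1,2$, so there is no written argument to compare against. Your route --- localizing the deformation to a disk $D$ about $p$, splitting the relative angle as $\alpha=\phi-\beta$ against an auxiliary nonvanishing frame $W$, and attributing the entire jump to the period of the closed form $d\phi$ on $D\setminus\{p\}$ --- is the standard rigorous way to make the picture precise, and your computation $\oint_{C_p}d\phi=2\pi(k-1)+2\pi=2\pi k$ correctly reconciles the paper's convention $\mathrm{index}=\mathrm{wind}+1$ from Eq.~\eqref{eq:winding_def} with the usual rotation index of the vector field (it also recovers the paper's remark that a small trivial counterclockwise loop away from singularities has $\mathrm{wind}=-1$). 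One step is stated slightly too quickly: continuity of $\int_{\sigma_t}d\beta$ in $t$ alone does not force that term to drop out; what closes the argument is that, with the endpoints and endpoint tangent directions on $\partial D$ held fixed throughout the deformation, $\int_{\sigma_t}d\beta$ takes values in the discrete coset $(\theta_1-\theta_0)+2\pi\Z$, so a continuous family of values must be constant (equivalently, the total winding is integer-valued and locally constant away from the crossing instant, while the $\beta$-contribution varies continuously through it). Since you already record that the tangent directions at the shared endpoints agree, this is a one-line addendum rather than a gap.
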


\begin{lem} \label{lem:winding_reversing_loop}
    Reversing a loop negates its winding. I.e.  $\mathrm{wind} ( \mathrm{loop} \, {}^{-1} ) = -\mathrm{wind(loop)}$.
\end{lem}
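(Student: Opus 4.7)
The plan is to reduce the claim to a direct computation using the definition \eqref{eq:winding_def}, $\mathrm{wind(loop)} = \frac{1}{2\pi}(\alpha(2\pi) - \alpha(0))$, where $\alpha : [0,2\pi] \to \R$ is any continuous real-valued lift of the angle from the loop's tangent to the reference vector field. Reversing the loop gives a new parameterization of the same underlying curve but with flipped tangent, and I will write down an explicit continuous lift of the new relative angle in terms of $\alpha$ and then read off the sign change.

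Concretely, fix a parameterization $\gamma : [0,2\pi] \to \Sigma$ with relative-angle lift $\alpha$, and define $\tilde\gamma(\theta) := \gamma(2\pi - \theta)$. Since the reversed loop traces through the same points in the opposite direction, the vector field values $V(\tilde\gamma(\theta)) = V(\gamma(2\pi - \theta))$ agree with those of the forward loop, while the unit tangent is reversed: $\tilde T(\theta) = -T(2\pi - \theta)$. Hence the angle from $\tilde T(\theta)$ to $V(\tilde\gamma(\theta))$ differs from the angle from $T(2\pi - \theta)$ to $V(\gamma(2\pi - \theta))$ by the constant $\pi$, and I can take the continuous lift
\begin{equation*}
    \tilde\alpha(\theta) \;=\; \alpha(2\pi - \theta) + \pi,
\end{equation*}
which is continuous because $\alpha$ is, and is determined only up to an overall $2\pi \Z$ shift that drops out of the winding.

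Then the result follows by direct evaluation:
\begin{equation*}
    \mathrm{wind}(\mathrm{loop}^{-1}) \;=\; \tfrac{1}{2\pi}\bigl(\tilde\alpha(2\pi) - \tilde\alpha(0)\bigr) \;=\; \tfrac{1}{2\pi}\bigl((\alpha(0)+\pi) - (\alpha(2\pi)+\pi)\bigr) \;=\; -\mathrm{wind(loop)}.
\end{equation*}
The only subtle step is the $+\pi$ shift coming from the tangent flipping direction, which cancels cleanly in the endpoint difference; beyond that the proof is pure bookkeeping. Conceptually, $\alpha(2\pi) - \alpha(0)$ is a signed count of relative rotations of tangent against vector field, and reversing the parameterization sends $\tfrac{d\alpha}{d\theta}\,d\theta \mapsto -\tfrac{d\alpha}{d\theta}\,d\theta$ pointwise, so the integral negates.
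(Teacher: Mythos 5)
Your proof is correct. The paper does not actually write out an argument for this lemma --- it declares the first two properties ``straightforward'' and relies on the illustration in Fig.~\ref{fig:loops_and_windings} --- so your direct computation from the definition in Eq.~\eqref{eq:winding_def}, with the key observation that reversing the tangent only shifts the lifted relative angle by the constant $\pi$ (which cancels in the endpoint difference), is precisely the bookkeeping the paper leaves implicit.
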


\begin{lem} \label{lem:resolving_intersection}
     Given a generic intersection of loops, resolve the intersection in a way consistent with the loops' directionality. Then, the number of involved loops changes by $\pm 1$ from $2 \to 1$ if it's an intersection between two different loops or $1 \to 2$ if it's a self-intersection of a loop. And, the winding of the single loop is the sum of the windings of the two loops.
\end{lem}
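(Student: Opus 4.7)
The plan is to analyze the resolution locally, in a small neighborhood of the intersection, and handle the topological count and the winding identity separately. First I would fix a disk $D$ around the intersection point small enough that inside it the two transverse strands are essentially straight segments with constant tangent angles $\alpha$ and $\beta$, and the vector field used to define $\mathrm{wind}(\cdot)$ is essentially constant on $D$. The ``consistent with directionality'' resolution then locally replaces the two crossing segments inside $D$ by two disjoint smooth arcs, each pairing the incoming boundary endpoint of one strand with the outgoing endpoint of the other strand.

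For the topological part I would argue by parametrization. If the intersection is a self-intersection of a single loop $L$ parametrized by $\theta \in [0,2\pi]$ with $L(\theta_1) = L(\theta_2)$ for $\theta_1 < \theta_2$, the re-routing inside $D$ cuts the parameter circle into $[0,\theta_1] \cup [\theta_2, 2\pi]$ and $[\theta_1, \theta_2]$, each of which closes into its own loop using one of the two resolution arcs; this gives the $1 \to 2$ split. If the intersection lies between two distinct loops $L_1$ and $L_2$, the same re-routing instead concatenates their parameter circles into a single cycle by jumping between them at the crossing, yielding the $2 \to 1$ merge. In both cases the loops outside $D$ are unchanged.

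For the winding identity I would use the local integral formula from Eq.~\eqref{eq:winding_def}. Outside $D$ the loops are untouched, so those contributions to $\mathrm{wind}$ agree before and after the resolution. Inside $D$ the two original straight strands contribute zero to the turning integral, while the two resolution arcs contribute tangent-angle changes of $(\beta - \alpha)$ and $(\alpha - \beta)$, chosen in the representative window $(-\pi, \pi]$. These sum to zero, so the total of $\mathrm{wind}$ over all involved loops is preserved by the resolution, and we get $\mathrm{wind}(L) = \mathrm{wind}(L_1) + \mathrm{wind}(L_2)$ in the merge direction, with the symmetric statement in the split direction.

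The step I expect to require the most care is ensuring that the two resolution arcs' turnings really cancel on the nose, rather than only modulo $2\pi$. This will be handled by taking $D$ and the support of the new arcs sufficiently small, so that by continuity each arc stays $C^1$-close to a straight chord and accumulates absolute turning less than $\pi$; equivalently, one can appeal to the regular-homotopy invariance of the relative-tangent winding number to justify deforming the arcs to the canonical ``short'' connections without changing $\mathrm{wind}$ of either resolved loop. One should also check that this ``short'' choice does not re-introduce a crossing inside $D$, which follows from the transversality $\alpha \not\equiv \beta \pmod{\pi}$ together with a standard general-position argument.
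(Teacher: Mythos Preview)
Your argument is correct and follows essentially the same plan as the paper: localize to a small neighborhood of the crossing, note that the combinatorial $2\to 1$ or $1\to 2$ count is immediate from how the strands are reconnected, and then use the integral formula Eq.~\eqref{eq:winding_def} to see that the total winding is preserved because everything outside the neighborhood is untouched. The one place where the paper proceeds slightly differently is precisely your ``care step'': rather than keeping the strands transverse inside $D$ and arguing that the two resolution arcs contribute $(\beta-\alpha)$ and $(\alpha-\beta)$ exactly, the paper first performs a small isotopy that makes the two strands \emph{tangent} at the crossing point. After this isotopy (which does not change any winding), the reconnection is a pure splice of the integration intervals with no new turning at all, so the cancellation is automatic and no $\bmod\,2\pi$ ambiguity needs to be controlled. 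Your version is fine too, and your bound ``each arc accumulates turning less than $\pi$'' is the right way to kill the ambiguity; the paper's tangency trick just removes that bookkeeping entirely.
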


The first two properties above are straightforward. The third is also simple, but we illustrate it here.
\begin{proof}[Proof of Lem~\ref{lem:resolving_intersection}]
The number of loops changing by $\pm 1$ is a combinatorial statement about loop reconnection, illustrated in Fig.~\ref{fig:loops_and_windings}. The windings being unchanged can be seen in two steps. First, make small isotopies of the loops to make them tangent at the intersection point. This doesn't change the total windings of the loops. Next, reconnect the loops and move them away from each other. This reconnection will keep the total winding the same. In particular, the winding on a curve can be thought is a local integral, Eq.~\eqref{eq:winding_def} along the curve. Reconnecting them while they're tangent simply splices the total integration, reparameterizes the curves, and rearranges the winding integrals.
\end{proof}

At this point, we've justified the fact that a spin structure on a surface is a vector field with even-index singularities. In particular, Lemma~\ref{lem:winding_across_sing} says that the (mod 2) winding of an immersed curve stays invariant under regular homotopy.

The condition that the vector field is even-index means that this lift is regular-homotopy invariant, since moving a curve across an even index singularity keeps the winding number (mod 2) the same, thus keeps the induced spin structure the same. Moving across an odd index singularity twists the induced spin structure on a curve to the opposite one, so wouldn't be allowed. Note that for a small topologically trivial non-self-intersecting curve away from any singularities of the vector field, the vector field looks locally constant, so $\mathrm{wind(loop)} = \pm 1$, which means that $(-1)^{\mathrm{wind(loop)}} = -1$ for a trivial loop.

As before, we say that two spin structures are \textit{equivalent} if they assign equivalent windings to loops. The action of $H^1(M,\Z_2)$ on spin structures is illustrated in Fig.~\ref{fig:twisting_vect_field_spin_struct}.

\afterpage{
\begin{figure}[h!]
  \centering
  \includegraphics[width=\linewidth]{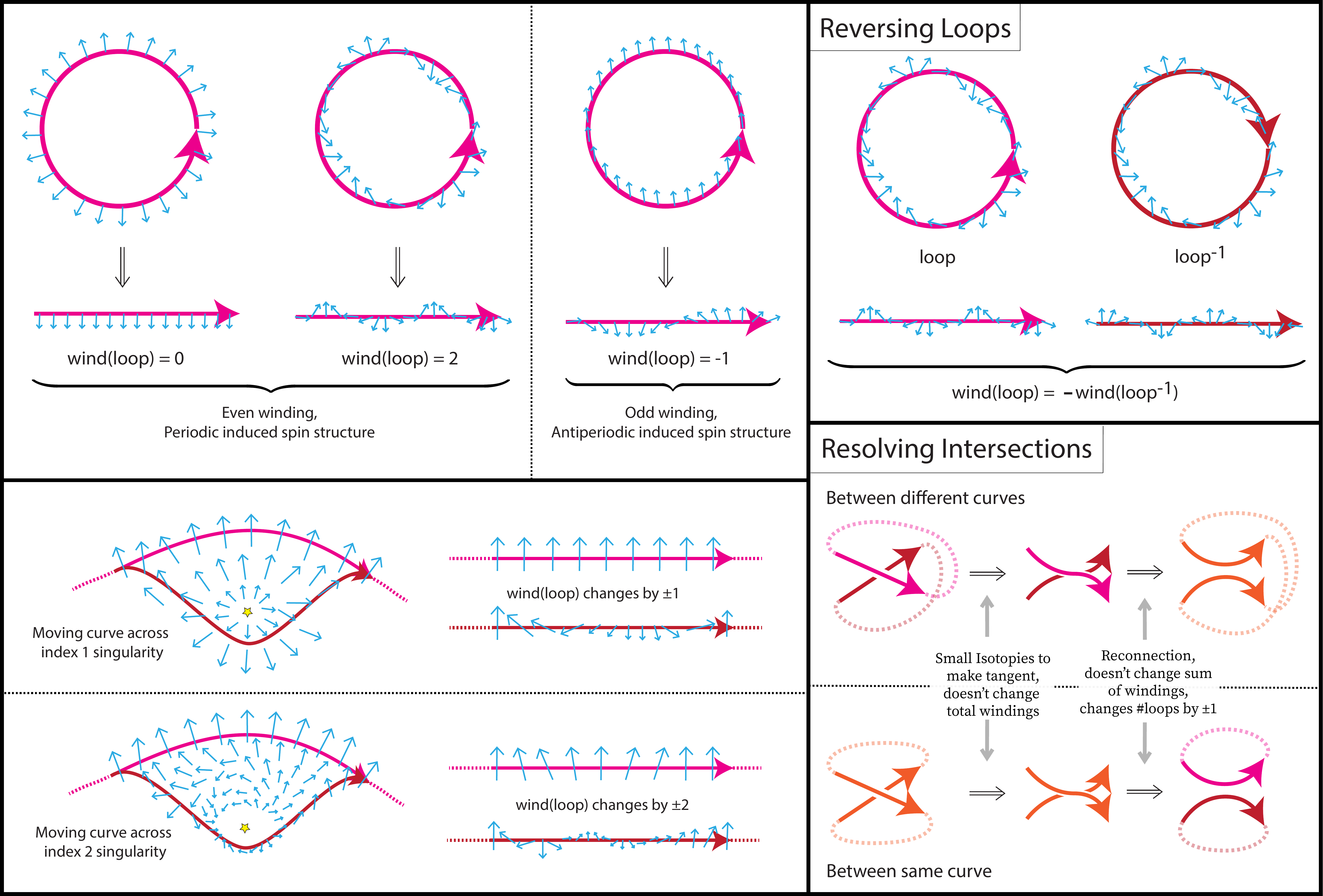}
  \caption{
    (Top-Left) Depiction of induced spin structures on curves coming from the winding of the tangent relative to the vector field. NOTE: we depict here abstract loops on a surface with the vector field relative to the tangent. But if we think of them as bounding a disk in $\R^2$ as depicted, the vector fields can be smoothly continued to the interior of the disk with singularities of index 1,3,0 respectively in left-to-right order.
    (Bottom-Left) Lemma~\ref{lem:winding_across_sing} illustrated for $k=1,2$.
    (Top-Right) Illustration of Lemma~\ref{lem:winding_reversing_loop}.
    (Bottom-Right) Illustration of Lemma~\ref{lem:resolving_intersection}.
  }
  \label{fig:loops_and_windings}
\end{figure}

\begin{figure}[h!]
  \centering
  \includegraphics[width=\linewidth]{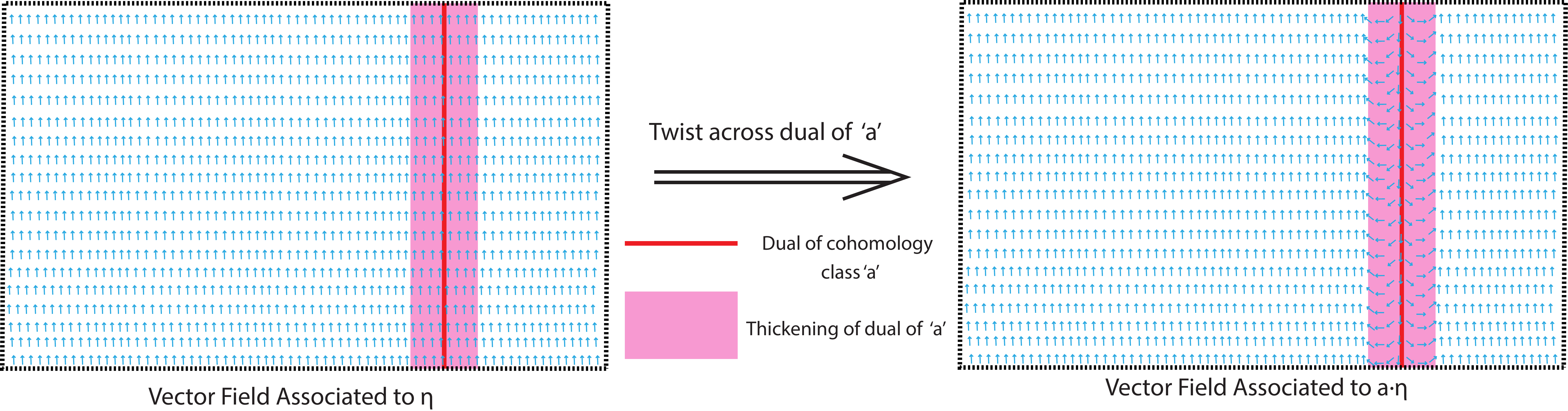}
  \caption{
    Given a spin structure, $\eta$ and a cohomology class $a \in H^1(M,\Z_2)$, the spin structure $a \cdot \eta$ is formed by twisting $\eta$'s associated vector field across the dual of $a$.
    In this picture, we are assuming we are `zoomed in' close enough to the wall so that the vector field looks locally constant. 
  }
  \label{fig:twisting_vect_field_spin_struct}
\end{figure}
\clearpage
}

\afterpage{
\begin{figure}[h!]
  \centering
  \includegraphics[width=\linewidth]{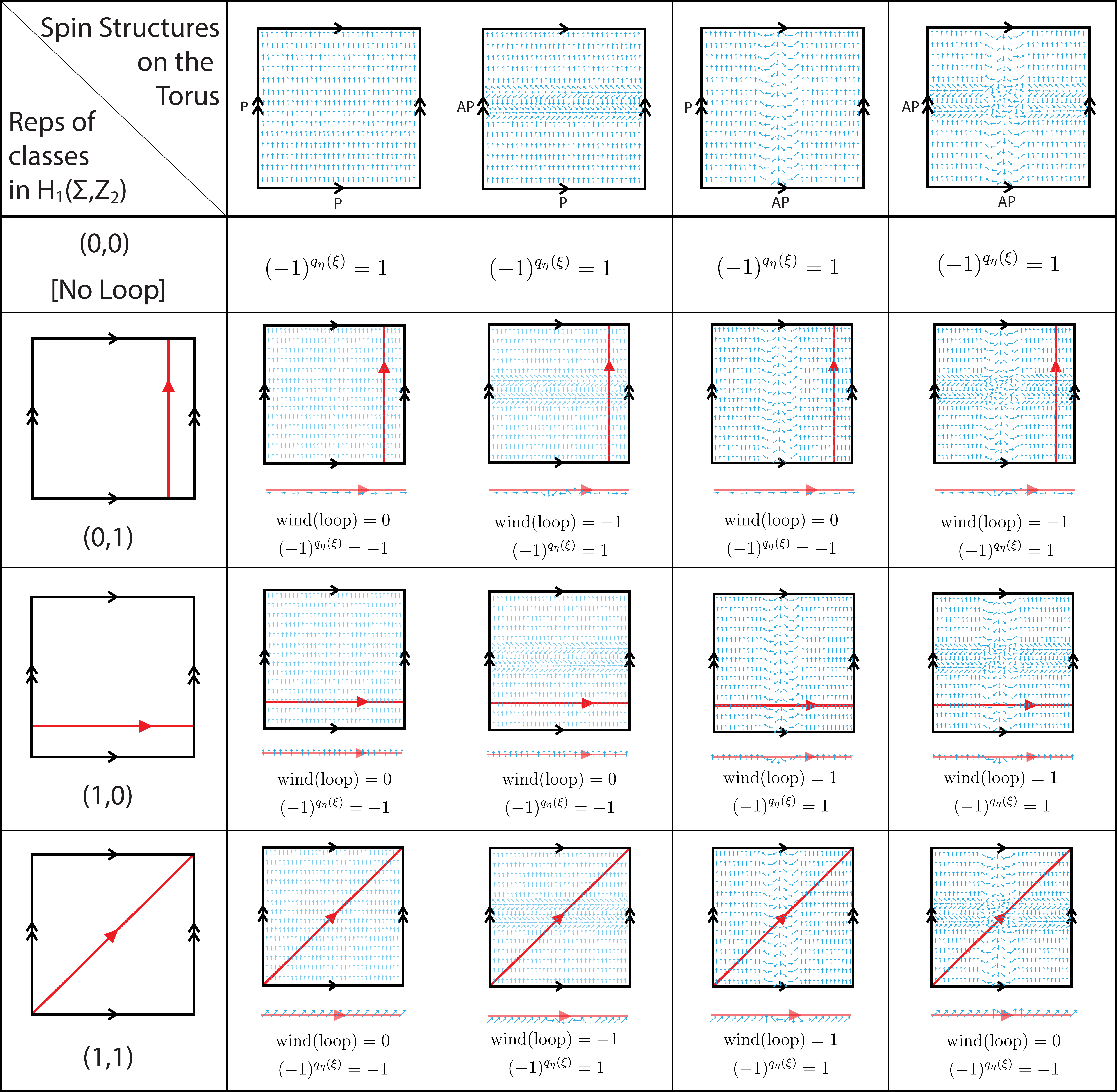}
  \caption{
    Spin structures on a torus together with a depiction the evaluation of $q_\eta(\xi)$ for each of the four homology classes. The labels $P, AP$ around the boundary cycles refers to whether the induced spin structure is periodic or antiperiodic going around the cycle. Note that on a torus, the vector fields can be chosen to be everywhere non-vanishing, but on other surfaces, they will generically need some even-index singularities by the Poincaré-Hopf theorem.
    Note that in the basis $P/P, P/AP, AP/P, AP/AP$ of spin structures and $(0,0),(0,1),(1,0),(1,1)$ of homology, the quadratic form matrix $A_{\eta,\xi} = (-1)^{q_\eta(\xi)}$ and its inverse $(A^{-1})_{\xi,\eta} = \frac{1}{2^{2g}} (-1)^{q_\eta(\xi)}$ can be written as 
    \begin{equation*}
        A 
        =
        \begin{pmatrix}
            1  & 1  & 1  & 1  \\
            -1 & -1 & 1  & 1  \\
            -1 & 1  & -1 & 1  \\
            -1 & 1  & 1  & -1
        \end{pmatrix}
        \quad\quad\text{and}\quad\quad
        A^{-1} 
        = \frac{1}{4}
        \begin{pmatrix}
            1 & -1 & -1 & -1  \\
            1 & -1 & 1  & 1   \\
            1 & 1  & -1 & 1   \\
            1 & 1  & 1  & -1
        \end{pmatrix}.
    \end{equation*}
  }
  \label{fig:spin_structs_torus}
\end{figure}
\clearpage
}

\subsubsection{Quadratic Forms and Arf Invariant} \label{app:quad_form_arf}

It was an observation of~\cite{johnson1980} that given a spin structure, one can construct a $\Z_2$-homology invariant from these windings. In particular, consider a Riemann surface $\Sigma$ equipped with a spin structure $\eta$. Then consider a $\Z_2$ homology class $\xi$ which is represented by a collection of loops $\mathrm{loop}_1,\cdots, \mathrm{loop}_k$ with no crossings between themselves or each other. Then, one defines $q_{\eta}: H_1(\Sigma,\Z_2) \to \Z_2$.
\begin{equation} \label{eq:quad_form_def}
    (-1)^{q_{\eta}(\xi)} = \prod_{\mathrm{loops}} (-1)^{1 + \mathrm{wind(loop)}} = (-1)^{\# \, \mathrm{of \, loops}} \prod_{\mathrm{loops}} (-1)^{\mathrm{wind(loop)}}.
\end{equation}
A priori, this function isn't well-defined because it depends on the specific representatives of $\xi$. The theorem below shows it is well-defined and a crucial `quadratic' property. See Fig.~\ref{fig:spin_structs_torus} for a depiction of all spin structures, related vector fields, and $q_\eta(\xi)$ for each homology class.

\begin{thm}[\cite{johnson1980}]
    The Eq.~\eqref{eq:quad_form_def} defined for a collection of loops with no intersections is independent of the choice of loops used to represent $\xi$. Moreover, $q_\eta(\xi)$ satisfies the \textit{quadratic refinement property}
    \begin{equation}
        (-1)^{q_{\eta}(\xi + \xi')} = (-1)^{q_{\eta}(\xi)} (-1)^{q_{\eta}(\xi')} (-1)^{\mathrm{int}(\xi, \xi')},
    \end{equation}
    where $\mathrm{int}(\xi, \xi')$ is the intersection pairing of $\xi,\xi'$, i.e. the (mod 2) number of generic intersections between $\xi, \xi'$. 
\end{thm}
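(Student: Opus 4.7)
The plan is three steps: establish a cobordism identity for null-homologous disjoint collections via a Poincaré--Hopf argument, then derive both well-definedness and the quadratic refinement from a single resolution argument combined with this identity.

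Let $\tilde q(C) := \#\{\text{loops of } C\} + \sum_L \mathrm{wind}(L) \pmod 2$ for any disjoint embedded collection of loops $C$. I would first establish the cobordism identity: if $[C] = 0 \in H_1(\Sigma, \Z_2)$, so $C = \partial F$ for some compact subsurface $F$ with $k$ boundary components, then $\tilde q(C) \equiv 0$. This follows from applying the Poincaré--Hopf formula on a surface with boundary to the vector field $v$ defining $\eta$:
\begin{equation*}
    \chi(F) = \sum_{\text{singularities of } v \text{ in } F} \text{index}(v) - \sum_{i} \mathrm{wind}(C_i).
\end{equation*}
The spin-structure condition forces $v$ to have only even-index singularities, so the interior sum vanishes mod 2. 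Combined with $\chi(F) = 2 - 2g(F) - k \equiv k \pmod 2$, this gives $\sum_i \mathrm{wind}(C_i) \equiv k \pmod 2$, i.e.\ $\tilde q(C) \equiv 0$.

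A single resolution argument then yields both remaining claims. Given disjoint representatives $C_\xi, C_{\xi'}$ of $\xi, \xi' \in H_1(\Sigma, \Z_2)$, perturb $C_{\xi'}$ by a small isotopy into transverse position with $C_\xi$; this crosses no singularities of $v$ and hence preserves individual windings, while producing $n \equiv \mathrm{int}(\xi, \xi') \pmod 2$ crossings. Apply Lemma~\ref{lem:resolving_intersection} iteratively to resolve each crossing: every resolution preserves the total winding but flips the parity of the loop count, so flips $\tilde q$ mod 2. The resulting disjoint collection $C_{\xi+\xi'}$ represents $\xi + \xi'$ (each resolution is a mod-2 cobordism across a small disk), so
\begin{equation*}
    \tilde q(C_{\xi+\xi'}) \equiv \tilde q(C_\xi) + \tilde q(C_{\xi'}) + n \pmod 2.
\end{equation*}
Specializing to $\xi' = \xi$ yields well-definedness: $n \equiv \mathrm{int}(\xi, \xi) = 0$ on an oriented surface (the mod-2 intersection form is alternating) and $C_{\xi+\xi'}$ represents $0$, so the cobordism identity gives $\tilde q(C_{\xi+\xi'}) \equiv 0$, whence $\tilde q(C_\xi) \equiv \tilde q(C_{\xi'})$ for any two disjoint representatives. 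Back in the general case, well-definedness lets us identify each $\tilde q$ with $q_\eta$, yielding the stated quadratic refinement.

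The main obstacle is the Poincaré--Hopf identity in the first step, where the sign and normalization of the boundary rotation term must be carefully aligned with the paper's $\mathrm{wind}$ convention (Eq.~\eqref{eq:winding_def}). A sign-agnostic alternative is to prove the cobordism identity inductively on a handle decomposition of $F$: the base case is a disk, where the claim reduces to a trivial loop satisfying $1 + \mathrm{wind} \equiv 0 \pmod 2$; the inductive step verifies that $\tilde q(\partial F)$ and $\chi(F) \pmod 2$ change together upon attaching a 1- or 2-handle, with the even-index singularity constraint ensuring they remain in lockstep even when the handle sweeps across interior singularities.
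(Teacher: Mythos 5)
Your proposal is correct, and it reorganizes the argument in a way that genuinely differs from the paper's proof on the well-definedness half. Both proofs share the same engine for the quadratic refinement: put the two collections in general position, resolve each crossing, and use Lemma~\ref{lem:resolving_intersection} to see that each resolution preserves the total winding while flipping the parity of the loop count, so that $\#\mathrm{loops}+\sum\mathrm{wind}$ changes by exactly the mod-2 intersection number. Where you diverge is in establishing that Eq.~\eqref{eq:quad_form_def} is independent of the representative: the paper argues directly that any two disjoint representatives of $\xi$ are connected by a list of moves (regular homotopy, adding/removing trivial loops, resolving intersections), each of which preserves the quantity --- an elementary argument, but one whose completeness (that these moves suffice) is asserted rather than proved. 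You instead prove a cobordism identity, $\tilde q(\partial F)\equiv 0$ for any null-homologous embedded collection, via Poincar\'e--Hopf on the bounding subsurface $F$ (where the even-index hypothesis is exactly what kills the interior contribution mod 2 and $\chi(F)\equiv k$ supplies the $(-1)^{\#\mathrm{loops}}$ factor), and then deduce well-definedness by applying the resolution identity to two representatives of the same class, using $\mathrm{int}(\xi,\xi)=0$ on an orientable surface and evaluating the resulting null-homologous collection with the cobordism identity. This buys a fully checkable well-definedness proof and makes transparent why the even-index condition and the shift by $\#\mathrm{loops}$ are both forced; the cost is importing the boundary version of Poincar\'e--Hopf, whose sign conventions must be matched to Eq.~\eqref{eq:winding_def} --- a point you correctly flag and sensibly hedge with the handle-decomposition induction (whose base case, the disk, is precisely the paper's observation that a small trivial loop has $1+\mathrm{wind}\equiv 0$).
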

\begin{proof}[Sketch of Proof]
    First for grounding, we note that $(-1)^{1 + \mathrm{wind(loop)}}$ is a natural factor, since for a small trivial loop we have $\mathrm{wind(loop)} = \pm 1$, $(-1)^{1 + \mathrm{wind(loop)}} = 1$ so that adding or removing trivial loops that don't change the homology class don't change the quantity.
    
    To show the quantity is well-defined, it suffices to show that any two collections of non-intersecting loops acting representatives of a $\Z_2$-homology class $\xi$ can be obtained from each other by regular homotopy, adding collections of trivial loops in general position with respect to the preexisting loops, resolving intersections, and removing trivial loops. This can be shown adding the difference of any two chain representatives of $\xi$, taking care to possibly deform loops slightly so that one can resolve intersections and remove trivial loops. Crucially, adding trivial loops will create an even number of intersections, which changes the total number of loops by an even number, which leaves $(-1)^{\# \, \mathrm{of \, loops}}$ invariant.

    The quadratic refinement property is essentially Lemma~\ref{lem:resolving_intersection}. In particular, consider collections of non-intersecting loops $\mathcal{C}_{\xi}$ representing $\xi$ and $\mathcal{C}_{\xi'}$ representing $\xi'$, so that each collection is in general position with respect to each other. Then the collection $\mathcal{C}_{\xi} \cup \mathcal{C}_{\xi'} $ will be a representative of $\xi + \xi'$ having intersections only between $\mathcal{C}_{\xi}$ and $\mathcal{C}_{\xi'}$. The number of such intersections (mod 2) agrees with $\mathrm{int}(\xi, \xi')$ Resolving these intersections gives a new collection of loops $\mathcal{C}_{\xi + \xi'}$
    
    The two statements in Lemma~\ref{lem:resolving_intersection} imply the two equalities
    \begin{equation} \label{eq:johnson_thm_eq1}
        |\mathcal{C}_{\xi}| + |\mathcal{C}_{\xi'}| = |\mathcal{C}_{\xi+\xi'}| + \mathrm{int}(\xi,\xi') \,\, \mathrm{(mod \, 2)}
        \quad\quad\quad\text{and}\quad\quad\quad
        \prod_{\mathrm{loop} \in \mathcal{C}_{\xi} \cup \mathcal{C}_{\xi'}}
        (-1)^{\mathrm{wind(loop)}}
        =
        \prod_{\mathrm{loop} \in \mathcal{C}_{\xi+\xi'}}
        (-1)^{\mathrm{wind(loop)}}
    \end{equation}
    As such, we'll have
    \begin{equation}
    \begin{split}
        (-1)^{q_{\eta}(\xi)} \times (-1)^{q_{\eta}(\xi')} 
        &=
        (-1)^{|\mathcal{C}_{\xi}|} 
        \prod_{\mathrm{loop} \in \mathcal{C}_{\xi}} (-1)^{\mathrm{wind(loop)}}
        \times
        (-1)^{|\mathcal{C}_{\xi'}|} 
        \prod_{\mathrm{loop} \in \mathcal{C}_{\xi'}} (-1)^{\mathrm{wind(loop)}}
        \\
        &=
        (-1)^{|\mathcal{C}_{\xi}| + |\mathcal{C}_{\xi'}|} 
        \prod_{\mathrm{loop} \in \mathcal{C}_{\xi} \cup \mathcal{C}_{\xi'}}
        (-1)^{\mathrm{wind(loop)}}
        \\
        &=
        (-1)^{|\mathcal{C}_{\xi+\xi'}| + \mathrm{int}(\xi, \xi')} 
        \prod_{\mathrm{loop} \in \mathcal{C}_{\xi + \xi'}}
        (-1)^{\mathrm{wind(loop)}}
        =
        (-1)^{\mathrm{int}(\xi, \xi')} (-1)^{q_{\eta}(\xi+\xi')}.
    \end{split}
    \end{equation}
    The first and fourth equalities use the definitions of $q_\eta(\xi),q_\eta(\xi'),q_\eta(\xi+\xi')$. The third equality uses Eq.~\eqref{eq:johnson_thm_eq1}. 
\end{proof}

From here, we will study the quadratic form $q_\eta(\xi)$. It will be useful to consider the square matrix $A_{\eta,\xi} = (-1)^{q_\eta(\xi)}$, which is $2^{2g} \times 2^{2g}$ matrix. 
\begin{lem} \label{lem:inv_quad_form}
    $A_{\eta,\xi} = (-1)^{q_\eta(\xi)}$ has orthogonal rows, and its inverse is 
    $(A^{-1})_{\xi,\eta} = \frac{1}{2^{2g}} (-1)^{q_\eta(\xi)}$
\end{lem}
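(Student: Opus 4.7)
The plan is to exploit the torsor structure of spin structures over $H^1(\Sigma,\Z_2)$ together with the twisting formula~\eqref{eq:twist_spinStruct} to reduce the claim to character orthogonality on the finite abelian group $H_1(\Sigma,\Z_2) \cong \Z_2^{2g}$. Unwinding what it means for $(A^{-1})_{\xi,\eta} = \frac{1}{2^{2g}}(-1)^{q_\eta(\xi)}$, the statement to prove is $(A A^T)_{\eta,\eta'} = \sum_\xi (-1)^{q_\eta(\xi) + q_{\eta'}(\xi)} = 2^{2g}\,\delta_{\eta,\eta'}$ for every pair of spin structures.

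First I would use that the set of spin structures is a torsor over $H^1(\Sigma,\Z_2)$, so there is a unique $a \in H^1(\Sigma,\Z_2)$ with $\eta' = a \cdot \eta$. Next, applying the twisting formula~\eqref{eq:twist_spinStruct} to each loop in a non-crossing representative of $\xi$ and multiplying across loops, together with $\Z_2$-linearity of $\langle a, \cdot \rangle$ in its $H_1$-argument, would upgrade it to the homology-class identity $(-1)^{q_{a \cdot \eta}(\xi)} = (-1)^{\langle a, \xi \rangle}(-1)^{q_\eta(\xi)}$. Since signs square to $+1$, this collapses the sum to an elementary character sum:
\begin{equation*}
\sum_\xi (-1)^{q_\eta(\xi) + q_{\eta'}(\xi)} = \sum_{\xi \in H_1(\Sigma,\Z_2)} (-1)^{\langle a, \xi \rangle}.
\end{equation*}

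To finish, I would invoke the non-degeneracy of the $\Z_2$-valued cohomology-homology pairing (Poincaré duality on a closed orientable surface) to conclude that $\xi \mapsto (-1)^{\langle a, \xi \rangle}$ is a nontrivial character on $H_1(\Sigma,\Z_2) \cong \Z_2^{2g}$ unless $a=0$. Orthogonality of characters on a finite abelian group then gives $2^{2g}$ when $a=0$ (i.e. $\eta = \eta'$) and $0$ otherwise, which is exactly the desired identity.

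The only step requiring any real care is the upgrade of the twisting formula to the full quadratic form defined by Eq.~\eqref{eq:quad_form_def}: one has to check that the $(-1)^{\#\text{loops}}$ prefactor is unaffected by the twist. This is immediate, since twisting only alters the windings of the (fixed) non-crossing representative loops and does not change their count, so the prefactor cancels between $q_\eta(\xi)$ and $q_{a\cdot\eta}(\xi)$. Modulo this bookkeeping, the argument is just a one-line character computation.
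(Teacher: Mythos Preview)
Your proposal is correct and follows essentially the same route as the paper: write $\eta' = a\cdot\eta$ via the torsor structure, use the twisting formula to reduce $\sum_\xi A_{\eta,\xi}A_{\eta',\xi}$ to the character sum $\sum_\xi (-1)^{\langle a,\xi\rangle}$, and invoke its vanishing for $a\neq 0$. Your extra remarks on lifting Eq.~\eqref{eq:twist_spinStruct} from individual loops to the quadratic form and on the role of Poincar\'e duality are useful bookkeeping that the paper leaves implicit, but the argument is otherwise identical.
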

\begin{proof}
    Consider two rows corresponding to $\eta \neq \eta'$. Then, we have that for some $0 \neq a \in H^1(\Sigma,\Z_2)$, $\eta' = a \cdot \eta$. So, we have:
    \begin{equation}
        A_{\eta',\xi} = (-1)^{q_{\eta'}(\xi)} = (-1)^{q_{\eta}(\xi)} (-1)^{\braket{a,\xi}}
        =
        (-1)^{\braket{a,\xi}} A_{\eta,\xi}.
    \end{equation}
    Now since $a \neq 0$, we have that $\sum_{\xi \in H_1(\Sigma,\Z_2)} (-1)^{\braket{a,\xi}} = 0$ 
    which implies
    \begin{equation}
        \sum_{\xi \in H_1(\Sigma,\Z_2)} A_{\eta,\xi} A_{\eta',\xi}
        =
        \sum_{\xi \in H_1(\Sigma,\Z_2)} (-1)^{\braket{a,\xi}} = 0.
    \end{equation}

    Since $A$ has orthogonal rows and each row has a squared-length ${\sum_{\xi \in H_1(\Sigma,\Z_2)} |(-1)^{q_{\eta(\xi)}}|^2} = 2^{2g}$, its inverse is exactly
    $(A^{-1})_{\xi,\eta} = \frac{1}{2^{2g}} (-1)^{q_\eta(\xi)}$.
\end{proof}

We can define the so-called `Arf Invariant' of a spin structure $\eta$ by scaling the column sums of $A$. We define
\begin{equation}
    \mathrm{Arf}(\eta)
    =
    \frac{1}{2^g} 
    \sum_{\xi \in H_1(\Sigma,\Z_2)} 
    (-1)^{q_\eta(\xi)}.
\end{equation}
We have the following fact that $\mathrm{Arf}$ is always $\pm 1$
\begin{lem}
    $\mathrm{Arf}(\eta) = \pm 1 \quad \text{for all} \,\, \eta$
\end{lem}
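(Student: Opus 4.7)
The plan is to show $\mathrm{Arf}(\eta)^2 = 1$; since $\mathrm{Arf}(\eta)$ is manifestly a rational number (a sum of $\pm 1$ terms divided by $2^g$), this immediately implies $\mathrm{Arf}(\eta) = \pm 1$. The key inputs are the quadratic refinement property just established, the identity $q_\eta(0) = 0$ (which follows from the refinement applied to $\xi = \xi' = 0$), and the non-degeneracy of the intersection pairing on $H_1(\Sigma,\Z_2) \cong \Z_2^{2g}$.

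First I would square the definition and apply the quadratic refinement to the summand:
\[
\mathrm{Arf}(\eta)^2
= \frac{1}{2^{2g}} \sum_{\xi, \xi' \in H_1(\Sigma,\Z_2)} (-1)^{q_\eta(\xi)+q_\eta(\xi')}
= \frac{1}{2^{2g}} \sum_{\xi, \xi'} (-1)^{q_\eta(\xi+\xi')} (-1)^{\mathrm{int}(\xi,\xi')}.
\]
Next, I would change summation variables from $(\xi,\xi')$ to $(\zeta,\xi')$ with $\zeta := \xi + \xi'$. Since $\Sigma$ is oriented, the $\Z_2$ intersection pairing is alternating in the sense that $\mathrm{int}(\xi',\xi') = 0$ (it is the mod-2 reduction of the antisymmetric $\Z$-valued intersection form), hence $\mathrm{int}(\zeta + \xi', \xi') = \mathrm{int}(\zeta, \xi')$. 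Thus
\[
\mathrm{Arf}(\eta)^2 = \frac{1}{2^{2g}} \sum_{\zeta} (-1)^{q_\eta(\zeta)} \sum_{\xi'} (-1)^{\mathrm{int}(\zeta,\xi')}.
\]

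The inner sum then collapses by non-degeneracy of the intersection pairing: the linear functional $\xi' \mapsto \mathrm{int}(\zeta,\xi')$ on $\Z_2^{2g}$ is either identically zero (iff $\zeta = 0$) or surjective onto $\Z_2$, so the inner sum equals $2^{2g}$ when $\zeta = 0$ and vanishes otherwise. Combined with $q_\eta(0) = 0$, this gives $\mathrm{Arf}(\eta)^2 = 1$.

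I do not anticipate a significant obstacle; the computation is a standard Gauss-sum style manipulation for quadratic refinements of symplectic forms over $\Z_2$. The only slightly non-routine points are the two properties of the $\Z_2$ intersection pairing (alternating and non-degenerate), both of which are classical consequences of Poincaré duality on a closed oriented surface.
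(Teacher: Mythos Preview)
Your proposal is correct and follows essentially the same approach as the paper: squaring $\mathrm{Arf}(\eta)$, applying the quadratic refinement property, changing variables in the double sum using $\mathrm{int}(\xi',\xi')=0$, and collapsing the inner sum via non-degeneracy of the intersection pairing together with $q_\eta(0)=0$. The only cosmetic difference is which of the two summation variables you keep fixed after the substitution.
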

\begin{proof}
    Consider $(\mathrm{Arf}(\eta))^2$. We have
    \begin{equation}
    \begin{split}
        (\mathrm{Arf}(\eta))^2
        &= 
        \left(
            \frac{1}{2^g} 
            \sum_{\xi \in H_1(\Sigma,\Z_2)} 
            (-1)^{q_\eta(\xi)}
        \right)
        \left(
            \frac{1}{2^g} 
            \sum_{\xi' \in H_1(\Sigma,\Z_2)} 
            (-1)^{q_\eta(\xi')}
        \right)
        \\
        &=
        \frac{1}{2^{2g}} 
        \sum_{\xi , \xi' \in H_1(\Sigma,\Z_2)} 
        (-1)^{q_\eta(\xi)} (-1)^{q_\eta(\xi')} 
        =
        \frac{1}{2^{2g}} 
        \sum_{\xi , \xi' \in H_1(\Sigma,\Z_2)} 
        (-1)^{q_\eta(\xi+\xi')} (-1)^{\mathrm{int}(\xi,\xi')}.
    \end{split}
    \end{equation}
    Now, let's change variables in the sum $\Tilde{\xi} = \xi + \xi'$. Note that by linearity of the intersection pairing and the fact that $\mathrm{int}(\xi,\xi) = 0$ for all $\xi$, we have $\mathrm{int}(\xi,\Tilde{\xi}) = \mathrm{int}(\xi,\xi')$. We we can rewrite the above
    \begin{equation}
    \begin{split}
        (\mathrm{Arf}(\eta))^2
        &=
        \frac{1}{2^{2g}} 
        \sum_{\xi , \Tilde{\xi} \in H_1(\Sigma,\Z_2)} 
        (-1)^{q_\eta(\Tilde{\xi})} (-1)^{\mathrm{int}(\xi,\Tilde{\xi})}.
    \end{split}
    \end{equation}
    Now, we can use the facts that $\sum_{\xi} (-1)^{\mathrm{int}(\xi,\Tilde{\xi})} = 2^{2g} \delta_{\Tilde{\xi},0}$ and $(-1)^{q_\eta(0)} = 1$ for all $\eta$ to give 
    \begin{equation}
    \begin{split}
        (\mathrm{Arf}(\eta))^2
        =
        \frac{1}{2^{2g}} 
        \sum_{\Tilde{\xi} \in H_1(\Sigma,\Z_2)} 
        (-1)^{q_\eta(\Tilde{\xi})} 2^{2g} \delta_{\Tilde{\xi},0}
        =
        1.
    \end{split}
    \end{equation}
    This implies $\mathrm{Arf}(\eta) = \pm 1$.
\end{proof}

We note that the Arf invariant can be used to distinguish spin structures. For example, on the torus there are four inequivalent spin structures which we can call $P/P, AP/P, P/AP, AP/AP$ based on if it induces a periodic or antiperiodic spin structure around the cycle basis. However, acting on the torus via the mapping class group like a Dehn twist will relabel the basis of cycles and change the spin structure. Indeed, one can transition between the $AP/P$, $P/AP$, $AP/AP$ spin structures by Dehn twists, so in a sense those spin structures are isomorphic under the mapping class group even though they're not the same. However, the $P/P$ never gets changed under the mapping class group. This is reflected in the Arf invariant because the Arf invariant is manifestly invariant under the mapping class group; it will just rearrange the labeling of the sum arguments $\xi \in H_1(\Sigma,\Z_2)$.

\bibliography{bibliography}

\begin{thebibliography}{24}%
\makeatletter
\providecommand \@ifxundefined [1]{%
 \@ifx{#1\undefined}
}%
\providecommand \@ifnum [1]{%
 \ifnum #1\expandafter \@firstoftwo
 \else \expandafter \@secondoftwo
 \fi
}%
\providecommand \@ifx [1]{%
 \ifx #1\expandafter \@firstoftwo
 \else \expandafter \@secondoftwo
 \fi
}%
\providecommand \natexlab [1]{#1}%
\providecommand \enquote  [1]{``#1''}%
\providecommand \bibnamefont  [1]{#1}%
\providecommand \bibfnamefont [1]{#1}%
\providecommand \citenamefont [1]{#1}%
\providecommand \href@noop [0]{\@secondoftwo}%
\providecommand \href [0]{\begingroup \@sanitize@url \@href}%
\providecommand \@href[1]{\@@startlink{#1}\@@href}%
\providecommand \@@href[1]{\endgroup#1\@@endlink}%
\providecommand \@sanitize@url [0]{\catcode `\\12\catcode `\$12\catcode `\&12\catcode `\#12\catcode `\^12\catcode `\_12\catcode `\%12\relax}%
\providecommand \@@startlink[1]{}%
\providecommand \@@endlink[0]{}%
\providecommand \url  [0]{\begingroup\@sanitize@url \@url }%
\providecommand \@url [1]{\endgroup\@href {#1}{\urlprefix }}%
\providecommand \urlprefix  [0]{URL }%
\providecommand \Eprint [0]{\href }%
\providecommand \doibase [0]{https://doi.org/}%
\providecommand \selectlanguage [0]{\@gobble}%
\providecommand \bibinfo  [0]{\@secondoftwo}%
\providecommand \bibfield  [0]{\@secondoftwo}%
\providecommand \translation [1]{[#1]}%
\providecommand \BibitemOpen [0]{}%
\providecommand \bibitemStop [0]{}%
\providecommand \bibitemNoStop [0]{.\EOS\space}%
\providecommand \EOS [0]{\spacefactor3000\relax}%
\providecommand \BibitemShut  [1]{\csname bibitem#1\endcsname}%
\let\auto@bib@innerbib\@empty
\bibitem [{\citenamefont {Douglas}\ \emph {et~al.}(2022)\citenamefont {Douglas}, \citenamefont {Kenyon},\ and\ \citenamefont {Shi}}]{dksWebs2022}%
  \BibitemOpen
  \bibfield  {author} {\bibinfo {author} {\bibfnamefont {D.}~\bibnamefont {Douglas}}, \bibinfo {author} {\bibfnamefont {R.}~\bibnamefont {Kenyon}},\ and\ \bibinfo {author} {\bibfnamefont {H.}~\bibnamefont {Shi}},\ }\bibfield  {title} {\bibinfo {title} {Dimers, webs, and local systems},\ }\href@noop {} {\bibfield  {journal} {\bibinfo  {journal} {arXiv preprint 2205.05139}\ } (\bibinfo {year} {2022})}\BibitemShut {NoStop}%
\bibitem [{\citenamefont {Cimasoni}\ and\ \citenamefont {Reshetikhin}(2007)}]{cimasoniReshetikhin2007spin}%
  \BibitemOpen
  \bibfield  {author} {\bibinfo {author} {\bibfnamefont {D.}~\bibnamefont {Cimasoni}}\ and\ \bibinfo {author} {\bibfnamefont {N.}~\bibnamefont {Reshetikhin}},\ }\bibfield  {title} {\bibinfo {title} {Dimers on surface graphs and spin structures. i},\ }\href {https://doi.org/10.1007/s00220-007-0302-7} {\bibfield  {journal} {\bibinfo  {journal} {Communications in Mathematical Physics}\ }\textbf {\bibinfo {volume} {275}},\ \bibinfo {pages} {187 } (\bibinfo {year} {2007})}\BibitemShut {NoStop}%
\bibitem [{\citenamefont {Kenyon}(2014)}]{kenyon2014doubledimer}%
  \BibitemOpen
  \bibfield  {author} {\bibinfo {author} {\bibfnamefont {R.}~\bibnamefont {Kenyon}},\ }\bibfield  {title} {\bibinfo {title} {Conformal invariance of loops in the double-dimer model},\ }\href {https://doi.org/10.1007/s00220-013-1881-0} {\bibfield  {journal} {\bibinfo  {journal} {Communications in Mathematical Physics}\ }\textbf {\bibinfo {volume} {326}},\ \bibinfo {pages} {455 } (\bibinfo {year} {2014})}\BibitemShut {NoStop}%
\bibitem [{\citenamefont {Alvarez-Gaumé}\ \emph {et~al.}(1987)\citenamefont {Alvarez-Gaumé}, \citenamefont {Bost}, \citenamefont {Moore}, \citenamefont {Nelson},\ and\ \citenamefont {Vafa}}]{alvarezGaume1987bosonHigherGenus}%
  \BibitemOpen
  \bibfield  {author} {\bibinfo {author} {\bibfnamefont {L.}~\bibnamefont {Alvarez-Gaumé}}, \bibinfo {author} {\bibfnamefont {J.-B.}\ \bibnamefont {Bost}}, \bibinfo {author} {\bibfnamefont {G.}~\bibnamefont {Moore}}, \bibinfo {author} {\bibfnamefont {P.}~\bibnamefont {Nelson}},\ and\ \bibinfo {author} {\bibfnamefont {C.}~\bibnamefont {Vafa}},\ }\bibfield  {title} {\bibinfo {title} {Bosonization on higher genus riemann surface},\ }\href {https://doi.org/10.1007/BF01218489} {\bibfield  {journal} {\bibinfo  {journal} {Communications in Mathematical Physics}\ }\textbf {\bibinfo {volume} {112}},\ \bibinfo {pages} {503 } (\bibinfo {year} {1987})}\BibitemShut {NoStop}%
\bibitem [{\citenamefont {Gu}\ and\ \citenamefont {Wen}(2014)}]{guWen2014Supercohomology}%
  \BibitemOpen
  \bibfield  {author} {\bibinfo {author} {\bibfnamefont {Z.-C.}\ \bibnamefont {Gu}}\ and\ \bibinfo {author} {\bibfnamefont {X.-G.}\ \bibnamefont {Wen}},\ }\bibfield  {title} {\bibinfo {title} {Symmetry-protected topological orders for interacting fermions: Fermionic topological nonlinear $\ensuremath{\sigma}$ models and a special group supercohomology theory},\ }\href {https://doi.org/10.1103/PhysRevB.90.115141} {\bibfield  {journal} {\bibinfo  {journal} {Phys. Rev. B}\ }\textbf {\bibinfo {volume} {90}},\ \bibinfo {pages} {115141} (\bibinfo {year} {2014})}\BibitemShut {NoStop}%
\bibitem [{\citenamefont {Gaiotto}\ and\ \citenamefont {Kapustin}(2016)}]{gaiottoKapustin2016}%
  \BibitemOpen
  \bibfield  {author} {\bibinfo {author} {\bibfnamefont {D.}~\bibnamefont {Gaiotto}}\ and\ \bibinfo {author} {\bibfnamefont {A.}~\bibnamefont {Kapustin}},\ }\bibfield  {title} {\bibinfo {title} {Spin tqfts and fermionic phases of matter},\ }\href {https://doi.org/10.1142/S0217751X16450445} {\bibfield  {journal} {\bibinfo  {journal} {International Journal of Modern Physics A}\ }\textbf {\bibinfo {volume} {31}},\ \bibinfo {pages} {1645044} (\bibinfo {year} {2016})}\BibitemShut {NoStop}%
\bibitem [{\citenamefont {Tata}(2020)}]{tata2020}%
  \BibitemOpen
  \bibfield  {author} {\bibinfo {author} {\bibfnamefont {S.}~\bibnamefont {Tata}},\ }\href@noop {} {\bibinfo {title} {Geometrically interpreting higher cup products, and application to combinatorial pin structures}} (\bibinfo {year} {2020}),\ \Eprint {https://arxiv.org/abs/arXiv:2008.10170} {arXiv:2008.10170} \BibitemShut {NoStop}%
\bibitem [{\citenamefont {Procesi}(1976)}]{PROCESI1976306}%
  \BibitemOpen
  \bibfield  {author} {\bibinfo {author} {\bibfnamefont {C.}~\bibnamefont {Procesi}},\ }\bibfield  {title} {\bibinfo {title} {The invariant theory of n × n matrices},\ }\href {https://doi.org/https://doi.org/10.1016/0001-8708(76)90027-X} {\bibfield  {journal} {\bibinfo  {journal} {Advances in Mathematics}\ }\textbf {\bibinfo {volume} {19}},\ \bibinfo {pages} {306} (\bibinfo {year} {1976})}\BibitemShut {NoStop}%
\bibitem [{\citenamefont {Sikora}(2001)}]{sikoraWebs01}%
  \BibitemOpen
  \bibfield  {author} {\bibinfo {author} {\bibfnamefont {A.~S.}\ \bibnamefont {Sikora}},\ }\bibfield  {title} {\bibinfo {title} {Sln-character varieties as spaces of graphs},\ }\href {http://www.jstor.org/stable/221826} {\bibfield  {journal} {\bibinfo  {journal} {Transactions of the American Mathematical Society}\ }\textbf {\bibinfo {volume} {353}},\ \bibinfo {pages} {2773} (\bibinfo {year} {2001})}\BibitemShut {NoStop}%
\bibitem [{\citenamefont {Cimasoni}(2009)}]{cimasoniNonorientable2009}%
  \BibitemOpen
  \bibfield  {author} {\bibinfo {author} {\bibfnamefont {D.}~\bibnamefont {Cimasoni}},\ }\bibfield  {title} {\bibinfo {title} {Dimers on graphs in non-orientable surfaces},\ }\href {https://doi.org/10.1007/s11005-009-0299-2} {\bibfield  {journal} {\bibinfo  {journal} {Letters in Mathematical Physics}\ }\textbf {\bibinfo {volume} {87}},\ \bibinfo {pages} {149} (\bibinfo {year} {2009})}\BibitemShut {NoStop}%
\bibitem [{\citenamefont {Kenyon}\ and\ \citenamefont {Ovenhouse}(2023)}]{kenyon2023higherrankdimermodels}%
  \BibitemOpen
  \bibfield  {author} {\bibinfo {author} {\bibfnamefont {R.}~\bibnamefont {Kenyon}}\ and\ \bibinfo {author} {\bibfnamefont {N.}~\bibnamefont {Ovenhouse}},\ }\href {https://arxiv.org/abs/2312.03087} {\bibinfo {title} {Higher-rank dimer models}} (\bibinfo {year} {2023}),\ \Eprint {https://arxiv.org/abs/2312.03087} {arXiv:2312.03087 [math.CO]} \BibitemShut {NoStop}%
\bibitem [{\citenamefont {Kenyon}\ and\ \citenamefont {Shi}(2023)}]{kenyon2023planar3websboundarymeasurement}%
  \BibitemOpen
  \bibfield  {author} {\bibinfo {author} {\bibfnamefont {R.}~\bibnamefont {Kenyon}}\ and\ \bibinfo {author} {\bibfnamefont {H.}~\bibnamefont {Shi}},\ }\href {https://arxiv.org/abs/2312.14761} {\bibinfo {title} {Planar $3$-webs and the boundary measurement matrix}} (\bibinfo {year} {2023}),\ \Eprint {https://arxiv.org/abs/2312.14761} {arXiv:2312.14761 [math.PR]} \BibitemShut {NoStop}%
\bibitem [{\citenamefont {Kenyon}\ and\ \citenamefont {Wilson}(2011)}]{kenyonWilson2011}%
  \BibitemOpen
  \bibfield  {author} {\bibinfo {author} {\bibfnamefont {R.~W.}\ \bibnamefont {Kenyon}}\ and\ \bibinfo {author} {\bibfnamefont {D.~B.}\ \bibnamefont {Wilson}},\ }\bibfield  {title} {\bibinfo {title} {Boundary partitions in trees and dimers},\ }\href {http://www.jstor.org/stable/41062602} {\bibfield  {journal} {\bibinfo  {journal} {Transactions of the American Mathematical Society}\ }\textbf {\bibinfo {volume} {363}},\ \bibinfo {pages} {1325} (\bibinfo {year} {2011})}\BibitemShut {NoStop}%
\bibitem [{\citenamefont {Dub{\'e}dat}(2018)}]{Dubedat2018doubleDimersConformalLoopEnsemblesIsomonodromicDeformations}%
  \BibitemOpen
  \bibfield  {author} {\bibinfo {author} {\bibfnamefont {J.}~\bibnamefont {Dub{\'e}dat}},\ }\bibfield  {title} {\bibinfo {title} {Double dimers, conformal loop ensembles and isomonodromic deformations},\ }\href@noop {} {\bibfield  {journal} {\bibinfo  {journal} {J. Eur. Math. Soc.}\ }\textbf {\bibinfo {volume} {21}},\ \bibinfo {pages} {1} (\bibinfo {year} {2018})}\BibitemShut {NoStop}%
\bibitem [{\citenamefont {Tata}(2023)}]{tata20232dfermionsstatisticalmechanics}%
  \BibitemOpen
  \bibfield  {author} {\bibinfo {author} {\bibfnamefont {S.}~\bibnamefont {Tata}},\ }\href {https://arxiv.org/abs/2208.10640} {\bibinfo {title} {2d fermions and statistical mechanics: Critical dimers and dirac fermions in a background gauge field}} (\bibinfo {year} {2023}),\ \Eprint {https://arxiv.org/abs/2208.10640} {arXiv:2208.10640 [hep-th]} \BibitemShut {NoStop}%
\bibitem [{\citenamefont {Lafay}\ and\ \citenamefont {Roussillon}(2024)}]{lafay2024degenerateconformalblocksw3}%
  \BibitemOpen
  \bibfield  {author} {\bibinfo {author} {\bibfnamefont {A.}~\bibnamefont {Lafay}}\ and\ \bibinfo {author} {\bibfnamefont {J.}~\bibnamefont {Roussillon}},\ }\href {https://arxiv.org/abs/2402.12013} {\bibinfo {title} {Degenerate conformal blocks for the $w_3$ algebra at c=2 and specht polynomials}} (\bibinfo {year} {2024}),\ \Eprint {https://arxiv.org/abs/2402.12013} {arXiv:2402.12013 [math-ph]} \BibitemShut {NoStop}%
\bibitem [{\citenamefont {Peltola}\ and\ \citenamefont {Wu}(2019)}]{peltolaWu2019}%
  \BibitemOpen
  \bibfield  {author} {\bibinfo {author} {\bibfnamefont {E.}~\bibnamefont {Peltola}}\ and\ \bibinfo {author} {\bibfnamefont {H.}~\bibnamefont {Wu}},\ }\bibfield  {title} {\bibinfo {title} {Global and local multiple sles for ${\kappa \leq 4}$ and connection probabilities for level lines of gff},\ }\href {https://doi.org/10.1007/s00220-019-03360-4} {\bibfield  {journal} {\bibinfo  {journal} {Communications in Mathematical Physics}\ }\textbf {\bibinfo {volume} {366}},\ \bibinfo {pages} {469 } (\bibinfo {year} {2019})}\BibitemShut {NoStop}%
\bibitem [{\citenamefont {Lafay}\ and\ \citenamefont {Le}()}]{lafayLeWN}%
  \BibitemOpen
  \bibfield  {author} {\bibinfo {author} {\bibfnamefont {A.}~\bibnamefont {Lafay}}\ and\ \bibinfo {author} {\bibfnamefont {I.}~\bibnamefont {Le}},\ }\bibfield  {title} {\bibinfo {title} {Connection probabilities for n-dimers in the scaling limit (forthcoming)},\ }\href@noop {} {\bibinfo  {journal} {arXiv}\ }\BibitemShut {NoStop}%
\bibitem [{\citenamefont {Gavrylenko}\ and\ \citenamefont {Marshakov}(2016)}]{gavrMarshFreeFermionWN2016}%
  \BibitemOpen
\bibfield  {journal} {  }\bibfield  {author} {\bibinfo {author} {\bibfnamefont {P.~G.}\ \bibnamefont {Gavrylenko}}\ and\ \bibinfo {author} {\bibfnamefont {A.~V.}\ \bibnamefont {Marshakov}},\ }\bibfield  {title} {\bibinfo {title} {Free fermions, w-algebras, and isomonodromic deformations},\ }\href {https://doi.org/10.1134/S0040577916050044} {\bibfield  {journal} {\bibinfo  {journal} {Theoretical and Mathematical Physics}\ }\textbf {\bibinfo {volume} {187}},\ \bibinfo {pages} {649} (\bibinfo {year} {2016})}\BibitemShut {NoStop}%
\bibitem [{\citenamefont {Di~Francesco}\ \emph {et~al.}(1997)\citenamefont {Di~Francesco}, \citenamefont {Mathieu},\ and\ \citenamefont {S{\'e}n{\'e}chal}}]{bigYellowCFT1997}%
  \BibitemOpen
  \bibfield  {author} {\bibinfo {author} {\bibfnamefont {P.}~\bibnamefont {Di~Francesco}}, \bibinfo {author} {\bibfnamefont {P.}~\bibnamefont {Mathieu}},\ and\ \bibinfo {author} {\bibfnamefont {D.}~\bibnamefont {S{\'e}n{\'e}chal}},\ }\href {https://books.google.com/books?id=keUrdME5rhIC} {\emph {\bibinfo {title} {Conformal Field Theory}}},\ Graduate Texts in Contemporary Physics\ (\bibinfo  {publisher} {Springer},\ \bibinfo {year} {1997})\BibitemShut {NoStop}%
\bibitem [{\citenamefont {Dijkgraaf}\ \emph {et~al.}(2009)\citenamefont {Dijkgraaf}, \citenamefont {Orlando},\ and\ \citenamefont {Reffert}}]{dijkgraaf2009}%
  \BibitemOpen
  \bibfield  {author} {\bibinfo {author} {\bibfnamefont {R.}~\bibnamefont {Dijkgraaf}}, \bibinfo {author} {\bibfnamefont {D.}~\bibnamefont {Orlando}},\ and\ \bibinfo {author} {\bibfnamefont {S.}~\bibnamefont {Reffert}},\ }\bibfield  {title} {\bibinfo {title} {{Dimer models, free fermions and super quantum mechanics}},\ }\href@noop {} {\bibfield  {journal} {\bibinfo  {journal} {Advances in Theoretical and Mathematical Physics}\ }\textbf {\bibinfo {volume} {13}},\ \bibinfo {pages} {1255 } (\bibinfo {year} {2009})}\BibitemShut {NoStop}%
\bibitem [{\citenamefont {Basok}(2024)}]{basok2024dimersriemannsurfacescompactified}%
  \BibitemOpen
  \bibfield  {author} {\bibinfo {author} {\bibfnamefont {M.}~\bibnamefont {Basok}},\ }\href {https://arxiv.org/abs/2309.14522} {\bibinfo {title} {Dimers on riemann surfaces and compactified free field}} (\bibinfo {year} {2024}),\ \Eprint {https://arxiv.org/abs/2309.14522} {arXiv:2309.14522 [math-ph]} \BibitemShut {NoStop}%
\bibitem [{\citenamefont {Scorpan}(2005)}]{scorpan2005wild}%
  \BibitemOpen
  \bibfield  {author} {\bibinfo {author} {\bibfnamefont {A.}~\bibnamefont {Scorpan}},\ }\href {https://books.google.com/books?id=VgG9AwAAQBAJ} {\emph {\bibinfo {title} {The Wild World of 4-Manifolds}}}\ (\bibinfo  {publisher} {American Mathematical Society},\ \bibinfo {year} {2005})\BibitemShut {NoStop}%
\bibitem [{\citenamefont {Johnson}(1980)}]{johnson1980}%
  \BibitemOpen
  \bibfield  {author} {\bibinfo {author} {\bibfnamefont {D.}~\bibnamefont {Johnson}},\ }\bibfield  {title} {\bibinfo {title} {Spin structures and quadratic forms on surfaces},\ }\href {https://doi.org/https://doi.org/10.1112/jlms/s2-22.2.365} {\bibfield  {journal} {\bibinfo  {journal} {Journal of the London Mathematical Society}\ }\textbf {\bibinfo {volume} {s2-22}},\ \bibinfo {pages} {365} (\bibinfo {year} {1980})},\ \Eprint {https://arxiv.org/abs/https://londmathsoc.onlinelibrary.wiley.com/doi/pdf/10.1112/jlms/s2-22.2.365} {https://londmathsoc.onlinelibrary.wiley.com/doi/pdf/10.1112/jlms/s2-22.2.365} \BibitemShut {NoStop}%
\end{thebibliography}%
        
\end{document}